\newif\ifsubmit     %
\newif\ifblind      %
\newif\ifcompact    %
\newif\ifexabs      %
\newif\ifitcs       %
\newif\iflncs %
    \newcommand{\fullversion}[1]{}
    \newcommand{\subversion}[1]{#1}  
    \newcommand{\fullversion}[1]{#1}
    \newcommand{\subversion}[1]{}
\newif\ifshowabs    %
\else \usepackage{amsthm}
  \spnewtheorem{claim}{Claim}{\itshape}{\rmfamily}
\renewcommand{\@Opargbegintheorem}[4]{%
  #4\trivlist\item[\hskip\labelsep{#3#2\@thmcounterend}]}
  \crefname{claim}{claim}{claims}
  \Crefname{claim}{Claim}{Claims}
    \newtheorem{theorem}{Theorem}
    \newtheorem{theorem}{Theorem}[section]
  \newtheorem{definition}[theorem]{Definition}
  \newtheorem{remark}[theorem]{Remark}
  \newtheorem{lemma}[theorem]{Lemma}
  \newtheorem{corollary}[theorem]{Corollary}
  \newtheorem{claim}[theorem]{Claim}
  \newtheorem*{remark*}{Remark}
\newtheorem*{theorem*}{Theorem}
\newtheorem*{lemma*}{Lemma}
\theoremstyle{definition}}
\newenvironment{hybrid}[1]
  {\innercustomhybrid}
  {\endinnercustomhybrid}
\pgfplotsset{compat=1.14}
  \setlist[description]{noitemsep, topsep=0pt}
  \setlist[enumerate]{noitemsep, topsep=0pt}
  \setlist[itemize]{noitemsep, topsep=0pt}
    \NewDocumentCommand{\whiten}{ m }
    {
      \int_step_function:nnnN {1}{1}{#1} \white_text:n
    }
  \NewDocumentCommand{ \varul }{ D<>{5} O{0.2ex} O{0.1ex} +m } {%
    \begingroup
    \setul{#2}{#3}%
    \def\SOUL@uleverysyllable{%
      \setbox0=\hbox{\the\SOUL@syllable}%
      \ifdim\dp0>\z@
      \SOUL@ulunderline{\phantom{\the\SOUL@syllable}}%
      \whiten{#1}%
      \llap{%
        \the\SOUL@syllable
        \SOUL@setkern\SOUL@charkern
      }%
      \else
      \SOUL@ulunderline{%
        \the\SOUL@syllable
        \SOUL@setkern\SOUL@charkern
      }%
      \fi}%
    \ul{#4}%
    \endgroup
  }
\newcommand{\E}{\mathop{\mathbb{E}}}
\newcommand{\Cp}{\mathbb{C}}
\newcommand{\As}{\mathcal{A}}
\newcommand{\Bs}{\mathcal{B}}
\newcommand{\Cs}{\mathcal{C}}
\newcommand{\Ds}{\mathcal{D}}
\newcommand{\Hs}{\mathcal{H}}
\newcommand{\Ss}{\mathcal{S}}
\newcommand{\B}{\mathsf{B}}
\newcommand{\C}{\mathsf{C}}
\newcommand{\adv}{\mathsf{adv}}
\newcommand{\F}{\mathbb{F}}
\newcommand{\enc}{{\sf Enc}}
\newcommand{\dec}{{\sf Dec}}
\newcommand{\gen}{{\sf Gen}}
\newcommand{\can}{{\sf Can}}
\newcommand{\cD}{{\mathcal{D}}}
\newcommand{\cE}{{\mathcal{E}}}
\newcommand{\ati}{{\sf ATI}}
\newcommand{\cP}{{\mathcal{P}}}
\newcommand{\MOE}{unclonable }
\newcommand{\sk}{{\sf sk}}
\newcommand{\poly}{{\sf poly}}
\newcommand{\negl}{{\sf negl}}
\DeclareMathOperator{\Tr}{Tr}
\newcommand{\abs}[1]{{\left\lvert#1\right\rvert}}
\newcommand{\canonicalset}{\mathsf{CS}}
\newcommand{\brackets}[1]{\!\left( #1 \right)}
\newcommand{\bracketsC}[1]{\left\{ #1 \right\}}
\newcommand{\bracketsS}[1]{\left[ #1 \right]}
\newcommand{\fclass}{\mathcal{F}}
\newcommand{\from}{\leftarrow}
\newcommand{\density}[1]{\Ds(\Hs_{#1})}
\newcommand{\distrclass}{\mathfrak{D}_X}
\newcommand{\alice}{\As}
\newcommand{\bob}{\Bs}
\newcommand{\charlie}{\Cs}
\newcommand{\abc}{(\As, \Bs, \Cs)}
\newcommand{\aliceprime}{\As'}
\newcommand{\bobprime}{\Bs'}
\newcommand{\charlieprime}{\Cs'}
\newcommand{\abcprime}{(\As', \Bs', \Cs')}
\newcommand{\pirateexp}[4]{\mathsf{PirExp}^{#1,#2}_{#3,#4}}
\newcommand{\distr}{\Ds}
\newcommand{\prob}{\Pr}
\newcommand{\bit}{\{0,1\}}
\newcommand{\trivialprob}{p^\mathsf{triv}}
\newcommand{\given}{\mid}
\newcommand{\repryA}{G_{y}^A}
\newcommand{\oracleone}{G}
\newcommand{\oracletwo}{H}
\newcommand{\copyprotect}{\mathsf{CopyProtect}}
\newcommand{\cp}{\copyprotect}
\newcommand{\eval}{\mathsf{Eval}}
\newcommand{\secparam}{\lambda}
\newcommand{\ketbra}[2]{| #1 \rangle \langle #2 |}
\newcommand{\U}{\mathcal{U}}
\newcommand{\subspace}{{\cal S}}
\newcommand{\tracedist}[2]{\mathsf{TD}\brackets{#1, #2}}
\newcommand{\Ext}{\mathsf{Ext}}
\newcommand{\distrmarg}{\mathfrak{D}}
\newcommand{\sati}{\mathsf{SATI}}
\newcommand{\weight}{w}
\newcommand{\bad}{{\sf bad}}
    \newcommand{\luowen}[1]{}
    \newcommand{\qipeng}[1]{}
    \newcommand{\xingjian}[1]{}
    \newcommand{\fatih}[1]{}
    \newcommand{\pnote}[1]{}
    \newcommand{\luowen}[1]{{\color{magenta} Luowen: #1}}
    \newcommand{\qipeng}[1]{{\color{red} Qipeng: #1}}
    \newcommand{\xingjian}[1]{{\color{cyan} Xingjian: #1}}
    \newcommand{\fatih}[1]{{\color{magenta} Fatih: #1}}
    \newcommand{\pnote}[1]{{\color{blue} P: #1}}
\newenvironment{gamespec}{
  \begin{mdframed}[style=figstyle]}{
  \end{mdframed}}
\title{On the Feasibility of Unclonable Encryption, and More} 
\newcommand{\email}[1]{\href{mailto:#1}{\texttt{#1}}}
\author{
Prabhanjan Ananth\footnote{University of California, Santa Barbara. Email: \email{prabhanjan@cs.ucsb.edu}}
\and 
Fatih Kaleoglu\footnote{University of California, Santa Barbara. Email: \email{kaleoglu@ucsb.edu}}
\and 
Xingjian Li\footnote{Tsinghua University. Email: \email{lixj18@mails.tsinghua.edu.cn}}
\and 
Qipeng Liu\footnote{Simons Institute for the Theory of Computing. Email: \email{qipengliu0@gmail.com}} 
\and
Mark Zhandry\footnote{NTT Research \& Princeton University. Email: \email{mzhandry@gmail.com}}
}
\date{}
\begin{document}

\maketitle

\begin{abstract}
\noindent Unclonable encryption, first introduced by Broadbent and Lord (TQC'20), is a one-time encryption scheme with the following security guarantee: any non-local adversary $(\alice,\bob,\charlie)$ cannot simultaneously distinguish encryptions of two equal length messages. This notion is termed as unclonable indistinguishability. Prior works focused on achieving a weaker notion of unclonable encryption, where we required that any non-local adversary $(\alice,\bob,\charlie)$ cannot simultaneously recover the entire message $m$. Seemingly innocuous, understanding the feasibility of encryption schemes satisfying unclonable indistinguishability (even for 1-bit messages) has remained elusive.
\par We make progress towards establishing the feasibility of unclonable encryption. 
\begin{itemize} 
\item We show that encryption schemes satisfying unclonable indistinguishability exist unconditionally in the quantum random oracle model. 
\item Towards understanding the necessity of oracles, we present a negative result stipulating that a large class of encryption schemes cannot satisfy unclonable indistinguishability. 
\item Finally, we also establish the feasibility of another closely related primitive: copy-protection for single-bit output point functions. Prior works only established the feasibility of copy-protection for multi-bit output  point functions or they achieved constant security error for single-bit output point functions.  
\end{itemize}
\end{abstract}

\section{Introduction}
Quantum information ushers in a new era for cryptography. Cryptographic constructs that are impossible to achieve classically can be realized using quantum information. In particular, the no-cloning principle of quantum mechanics has given rise to many wonderful primitives such as quantum money~\cite{Wie83} and its variants~\cite{AC02,Zha19,RS19}, tamper detection~\cite{Got02}, quantum copy-protection~\cite{Aar09}, one-shot signatures~\cite{AGKZ20}, single-decryptor encryption \cite{GZ20,coladangelo2021hidden}, secure software leasing~\cite{ALP21}, copy-detection~\cite{aaronsonnew} and many more.

\paragraph{Unclonable Encryption.} Of particular interest is a primitive called unclonable encryption, introduced by Broadbent and Lord~\cite{broadbentUncloneableQuantumEncryption2020}. Roughly speaking, unclonable encryption is a one-time secure encryption scheme with \emph{quantum} ciphertexts having the following security guarantee: any adversary given a ciphertext, modeled as a quantum state, cannot produce two (possibly entangled) states that both encode some information about the original message. This is formalized in terms of a splitting game. 

A splitting adversary $\abc$ first has $\alice$ receive as input an encryption of $m_b$, for two messages $m_0$ and $m_1$. $\alice$ then outputs a bipartite state to $\bob$ and $\charlie$. $\bob$ and $\charlie$ additionally receive as input the classical decryption key and respectively output $b_B$ and $b_C$. They win if $b=b_B=b_C$. Clearly, $\alice$ could give $\bob$ the entire ciphertext and $\charlie$ nothing, in which case $b_B=b$ but $b_C$ would be independent of $b$, giving an overall winning probability of $1/2$. Security therefore requires that the splitting adversary wins with probability only negligibly larger than 1/2. This security property, introduced by~\cite{broadbentUncloneableQuantumEncryption2020}, is called {\em unclonable indistinguishability}. Unclonable indistinguishability clearly implies plain semantic security, as $\alice$ could use any semantic security adversary to make a guess $b_A$ for $b$, and then simply send $b_A$ to $\bob$ and $\charlie$, who set $b_B=b_C:=b_A$.

Unclonable encryption is motivated by a few interesting applications. Firstly, unclonable encryption implies private-key quantum money. It is also useful for preventing storage attacks where malicious entities steal ciphertexts in the hope that they can decrypt them when the decryption key is compromised later. Recently, the works of~\cite{CMP20,ananth2021uncloneable} showed that unclonable encryption implies copy-protection for a restricted class of functions with computational correctness guarantees.

Despite being a natural primitive, actually constructing unclonable encryption (even for 1-bit messages!) and justifying its security has remained elusive. Prior works~\cite{broadbentUncloneableQuantumEncryption2020,ananth2021uncloneable} established the feasibility of unclonable encryption satisfying a weaker property simply called {\em unclonability}: this is modeled similar to unclonable indistinguishability, except that the message $m$ encrypted is sampled uniformly at random and both $\bob$ and $\charlie$ are expected to guess the entire message $m$. This weaker property is far less useful, and both applications listed above -- preventing storage attacks and copy-protection -- crucially rely on indistinguishability security. Moreover, unclonability does not on its own even imply plain semantic security, meaning the prior works must separately posit semantic security. %

\par The following question has been left open from prior works: 

\begin{quote}
{\em Q1. Do encryption schemes satisfying unclonable indistinguishability, exist?}
\end{quote}

\paragraph{Copy-Protection for Point Functions.} Copy-protection, first introduced by Aaronson~\cite{Aar09}, is another important primitive closely related to unclonable encryption. Copy-protection is a compiler that converts a program into a quantum state that not only retains the original functionality but also satisfies the following property: a splitting adversary $(\alice,\bob,\charlie)$ first has $\alice$ receive as input a copy-protected state that can be used to compute a function $f$. $\alice$ then outputs a bipartite state to $\bob$ and $\charlie$. As part of the security guarantee, we require that both $\bob$ and $\charlie$ should not be able to simultaneously compute $f$. 
\par While copy-protection is known to be impossible for general unlearnable functions~\cite{ALP21}, we could still hope to achieve it for simple classes of functions. Of particular interest to us is the class of point functions. A single-bit output point function is of the form $f_y(\cdot)$: it takes as input $x$ and outputs $1$ if and only if $x=y$. One could also consider the notion of multi-bit output point functions, where the function outputs a large string, rather than 0 or 1.  
\par Prior works~\cite{CMP20,ananth2021uncloneable} either focus on constructing copy-protection for {\em multi}-bit output point functions or they construct copy-protection for single-bit output point functions with constant security, rather than optimal security, where the adversary can only do negligibly better than a trivial guess. %
\par Yet another important question that has been left open from prior works is the following:  

\begin{quote}
{\em Q2. Does copy-protection for single-bit output point functions, with optimal security, exist?} 
\end{quote}

\noindent As we will see later, the techniques used in resolving Q1 will shed light on resolving Q2. Hence, we focus on highlighting challenges in resolving Q1. The reader familiar with the challenges involved in constructing unclonable encryption could skip~\Cref{sec:intro:challenges} and directly go to~\Cref{sec:ourresults}.

\subsection{Achieving Unclonable Indistinguishability: Challenges} 
\label{sec:intro:challenges}

We need to achieve a {\em one-time} secure encryption scheme for {\em 1-bit} messages satisfying unclonable indistinguishability: {\em how hard can this problem be?} Indeed one might be tempted to conclude that going from the weaker unclonability property to the stronger unclonable indistinguishability notion is a small step. The former is a search problem while the latter is a decision problem, and could hope to apply known search-to-decision reductions. As we will now explain, unfortunately this intuition is false, due both to the effects of quantum information and also to the fact that unclonable encryption involves multiple interacting adversaries.
\begin{itemize}
    \item Recall that in an unclonable encryption scheme, the secret key is revealed to both $\bob$ and $\charlie$. As a consequence, the secret information of any underlying cryptographic tool we use to build unclonable encryption could be revealed. For example, consider the following construction: to encrypt $m \in \{0,1\}$, compute $(r, {\sf PRF}(k,r) \oplus m)$, where $k \xleftarrow{\$} \{0,1\}^{\secparam}$ is the pseudorandom function key and $r \xleftarrow{\$} \{0,1\}^{\secparam}$ is a random tag. In the security experiment, the secret key, namely $k$, will be revealed to both $\bob$ and $\charlie$. This restricts the type of cryptographic tools we can use to build unclonable encryption.
    
    \item Another challenge is to perform security reductions. Typically, we use the adversary to come up with a reduction that breaks a cryptographic game that is either conjectured to be or provably hard. However, this is tricky when there are two adversaries, $\bob$ and $\charlie$. Which of the two adversaries do we use to break the underlying game? Suppose we decide to use $\bob$ to break the game. For all we know, $\alice$ could have simply handed over the ciphertext it received to $\bob$ and clearly, $\bob$ cannot be used to break the underlying game. Even worse, Alice can send a superposition of ${\cal B}$ getting the ciphertext and ${\cal C}$ receiving nothing v.s. ${\cal C}$ receiving the ciphertext and ${\cal B}$ getting nothing. 

    \item Even if we somehow manage to achieve unclonable indistinguishability for 1-bit messages, it is a priori unclear how to achieve unclonable indistinguishability for multi-bit messages. In classical cryptography, the standard transformation goes from encryption of 1-bit messages to encryption of multi-bit messages via a hybrid argument. This type of argument fails in the setting of unclonable encryption. Let us illustrate why: suppose we encrypt a 2-bit message $m=m_1||m_2$ by encrypting 1-bit messages $m_1$ and $m_2$, denoted respectively by $\rho_1$ and $\rho_2$. This scheme is unfortunately insecure. An encryption of $11$ can be (simultaneously) distinguished from an encryption of $00$ by a non-local adversary $(\alice,\bob,\charlie)$: $\alice$ can send $\rho_1$ to $\bob$ and $\rho_2$ to $\charlie$. Since, both $\bob$ and $\charlie$ receive the secret key, they can check whether the underlying message was 1 or 0. 
    
    \item A recent result by Majenz, Schaffner and Tahmasbi~\cite{majenzLimitationsUncloneableEncryption2021} explores the difficulties in constructing unclonable encryption schemes. They show that any unclonable encryption scheme satisfying indistinguishability property needs to have ciphertexts, when represented as density matrices, with sufficiently large eigenvalues. As a consequence, it was shown that~\cite{broadbentUncloneableQuantumEncryption2020} did not satisfy unclonable-indistinguishability property. Any unclonable encryption scheme we come up with needs to overcome the hurdles set by~\cite{majenzLimitationsUncloneableEncryption2021}.
\end{itemize}

\noindent We take an example below that concretely highlights some of the challenges explained above.

\ \\
\noindent {\em Example: Issues with using Extractors.} For instance, we could hope to use randomness extractors. To encrypt a message $m$, we output $(\rho_x,c_r,\Ext(r,x)\oplus m)$, where $\rho_x$ is an unclonable encryption of $x$ satisfying the weaker unclonability property, $c_r$ is a classical encryption of a random seed $r$, and $\Ext$ is an extractor using seed $r$. The intuition for this construction is that unclonable security implies that at least one of the two parties, say $\bob$ cannot predict $x$, and therefore $x$ has min-entropy conditioned on $\bob$'s view. Therefore, $\Ext(r,x)$ extracts bits that are statistically random against $\bob$, and thus completely hides $m$.

There are a few problems with this proposal. First, since $\alice$ generates $\bob$'s state and has access to the entire ciphertext, the conditional distribution of $x$ given Bob's view will depend on $c_r$. This breaks the extractor application, since it requires $r$ to be independent. One could hope to perform a hybrid argument to replace $c_r$ with a random ciphertext, but this is not possible: $\bob$ eventually learns the decryption key for $c_r$ and would be able to distinguish such a hybrid. This example already begins to show how the usual intuition fails.

A deeper problem is that extractor definitions deal with a single party, whereas unclonable encryption has two recipient parties. To illustrate the issue, note that it is actually \emph{not} the case that $x$ has min-entropy against one of the parties: if $\alice$ randomly sends the ciphertext to $\bob$ or $\charlie$, each one of them can predict $x$ with probability $1/2$, so the min-entropy is only 1. In such a case the extractor guarantee is meaningless. Now, in this example one can condition on the message $\alice$ sends to $\bob,\charlie$, and once conditioned it will in fact be the case that one of the two parties has high min-entropy. But other strategies are possible which break such a conditioning argument. For example, $\alice$ could send messages that are in \emph{superposition} v.s. $\bob$ getting the ciphertext (and $\charlie$ nothing) v.s. $\charlie$ getting the ciphertext (and $\bob$ nothing). By being in superposition, we can no longer condition on which party receives the ciphertext.

\subsection{Our Results}
\label{sec:ourresults}
\noindent We overcome the aforementioned challenges and make progress on addressing both questions Q1 and Q2. We start with our results on unclonable encryption before moving onto copy-protection. 

\paragraph{Unclonable Encryption.} For the first time, we establish the feasibility of unclonable encryption. Our result is in the quantum random oracle model. Specifically, we prove the following. 

\begin{theorem}[Informal]
\label{thm:intro:positiiveresult}
There exists an unconditionally secure one-time encryption scheme satisfying unclonable indistinguishability in the quantum random oracle model. 
\end{theorem}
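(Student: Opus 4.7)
The plan is to build a scheme from BB84 (or, more generally, coset) states masked by the random oracle, and to reduce its unclonable indistinguishability to a monogamy-of-entanglement (MOE) game in the QROM. Concretely, to encrypt $m \in \{0,1\}$ the encrypter samples uniform $\theta, x \in \{0,1\}^\secparam$ and outputs $\ct = (|x\rangle_\theta,\; H(x) \oplus m)$ with $\sk = \theta$; decryption measures the quantum register in basis $\theta$ to recover $x$ and XORs $H(x)$ with the classical part. Correctness is immediate, and since the ciphertext density matrix averaged over $H$ is independent of $m$, the scheme evades the obstruction of Majenz, Schaffner, and Tahmasbi.

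For security I would use a single hybrid that replaces $H(x) \oplus m_b$ by a uniformly random string $u$. In the new hybrid the ciphertext is statistically independent of $b$, so both $\bob$ and $\charlie$ guess $b$ with probability exactly $1/2$, and the joint winning probability is $1/2$. It therefore suffices to bound the change induced by the reprogramming. The key step is a joint record-query extraction argument for two QROM adversaries: if the splitting adversary $\abc$ distinguishes the two hybrids with advantage $\eps$ using at most $q$ queries, then simulating $H$ with Zhandry's compressed oracle and measuring a uniformly random query in each of $\bob$'s and $\charlie$'s compressed-oracle query registers should jointly produce $x$ from both sides with probability at least roughly $\eps^2 / \poly(q)$. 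Such a joint extraction immediately yields a splitting adversary for the BB84 (or coset) MOE game, contradicting the Tomamichel--Fehr--Kaniewski--Wehner bound (or its coset analogue, which the paper's outline lists as already established in the oracle-free setting).

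The main obstacle is exactly this two-sided extraction step. The standard one-way-to-hiding bound controls a single adversary's distinguishing advantage by its own query probability, whereas here I need a \emph{joint} statement: distinguishing implies that $\bob$ \emph{and} $\charlie$ simultaneously query $x$. Matters are made worse by the fact that $\alice$ may entangle, or even place in superposition, the choice of which party inherits the ciphertext, so one cannot naively condition on the split $\alice$ uses. I would handle this by combining the compressed-oracle record with the projective-implementation and Jordan-block machinery already set up in the paper: extracting $x$ from a random query of $\bob$ should project onto a subspace on which $\charlie$'s reduced distinguishing advantage, and therefore the probability that a random query from $\charlie$ equals $x$, remains at least roughly $\eps$, so that a second independent extraction on $\charlie$ succeeds with comparable probability. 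Once this joint extraction lemma is in place, the remainder of the reduction to the ROM MOE game with ``identical randomness'' is mechanical, and combining with the MOE bound closes the proof of the theorem.
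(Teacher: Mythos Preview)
Your plan has the right high-level shape, but it contains a real gap at exactly the step you flag as the ``main obstacle,'' and the gap is not just a matter of filling in details.

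First, the direct ``joint record-query extraction'' you propose --- simulate $H$ with a compressed oracle and measure a random query on each side to get $x$ from both $\bob$ and $\charlie$ with probability $\eps^2/\poly(q)$ --- is precisely the simultaneous one-way-to-hiding approach that Broadbent and Lord attempted and that Majenz, Schaffner, and Tahmasbi showed cannot work in general. They give an explicit counterexample: entangled $\bob,\charlie$ can each distinguish the reprogrammed oracle while the probability that a random-query measurement on both sides simultaneously hits $x$ is negligible. So this line, as stated, is blocked by a known barrier, not merely incomplete.

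Second, your fallback --- use the projective-implementation/threshold machinery to first project onto a subspace where both parties have non-trivial advantage, and only then extract --- is indeed what the paper does, but it does \emph{not} go through for BB84 states. To run $\sati$ (or even the inefficient threshold projection) inside the MOE reduction, the reduction must be able to implement the projectors $\Pi^B_0,\Pi^B_1$, which in turn requires simulating the oracles $H_0,H_1$ that differ from $H$ only at the secret point. For BB84 the secret point is $x$; the MOE challenger does not reveal $x$, and giving the reduction even a point-membership oracle for $x$ (which is what you would need to reprogram at $x$ without knowing it) collapses the BB84 MOE game via Farhi et al./Lutomirski's single-query attack. This is exactly why the paper switches to coset states: the MOE game for coset states remains hard even when all parties are given the membership oracles $P_{A+s},P_{A^\perp+s'}$, and those oracles together with the description of $A$ let the reduction compute the canonical representatives and hence reprogram $H$ at $(s,s')$ without knowing $(s,s')$. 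Your proposal treats ``BB84 (or, more generally, coset)'' as interchangeable, but the choice is the crux of the argument.

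Finally, even granting coset states and the threshold projection, your last sentence (``extracting $x$ from a random query of $\bob$ should project onto a subspace on which $\charlie$'s advantage remains roughly $\eps$'') is not how the paper closes the argument, and it is not obviously true after a destructive query measurement on $\bob$'s side. The paper instead applies the threshold measurement on both registers \emph{before} any extraction, uses \Cref{cor:orthogonal_eigen} (eigenvectors in different Jordan blocks are orthogonal under each projector) to kill the cross terms that distinguish the identical-challenge mode from the independent-challenge mode, and only then extracts from each side. That Jordan-lemma step is what replaces the failed simultaneous O2H; without it the reduction does not go through.
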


\noindent Our construction is simple: we make novel use of coset states considered in recent works~\cite{coladangelo2021hidden}. However, our analysis is quite involved: among many other things, we make use of threshold projective implementation introduced by Zhandry~\cite{Zha19}.
\par A recent work~\cite{ananth2021uncloneable} showed a generic transformation from one-time unclonable encryption to public-key unclonable encryption\footnote{While their result demonstrates that the generic transformation preserves the unclonability property, we note that the same transformation preserves unclonable indistinguishability.}. By combining the above theorem with the generic transformation of~\cite{ananth2021uncloneable}, we obtain a public-key unclonable encryption satisfying the unclonable indistinguishability property.

\begin{theorem}[Informal]
Assuming the existence of post-quantum public-key encryption, there exists a post-quantum public-key encryption scheme satisfying the unclonable indistinguishability property in the quantum random oracle model.
\end{theorem}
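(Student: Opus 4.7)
The plan is to apply the Ananth--Kaleoglu generic one-time-to-public-key compiler~\cite{ananth2021uncloneable}, instantiated with the post-quantum public-key encryption scheme given by hypothesis and the one-time unclonable-indistinguishable scheme from Theorem~\ref{thm:intro:positiiveresult}, and then to verify that the compiler preserves unclonable indistinguishability rather than only the weaker search-style unclonability formally addressed in~\cite{ananth2021uncloneable}.

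First I would describe the compiled scheme. Given a post-quantum $\mathsf{PKE}$ and a one-time unclonable scheme $\mathsf{OT}$, encryption of $m$ under $\pk$ samples a fresh one-time key $k$, forms the classical component $c \gets \mathsf{PKE}.\mathsf{Enc}(\pk, k)$ and the quantum component $\rho \gets \mathsf{OT}.\mathsf{Enc}(k, m)$, and outputs $(c, \rho)$; decryption uses $\sk$ to recover $k$ and then runs $\mathsf{OT}.\mathsf{Dec}$. Correctness is immediate from correctness of the two building blocks. For security I would walk through each hybrid of the~\cite{ananth2021uncloneable} proof and, line by line, replace their search-style winning predicate (``$\bob$ and $\charlie$ both output the plaintext'') with the decision-style predicate (``$\bob$ and $\charlie$ both output the challenge bit''). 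Each hop in that proof is an indistinguishability statement about the joint transcript distribution of $\abc$ induced by swapping two ciphertext distributions, and is therefore insensitive to which external winning predicate one measures at the end. A PKE hop first decorrelates $c$ from $k$, after which the one-time unclonable indistinguishability of $\mathsf{OT}$ furnished by Theorem~\ref{thm:intro:positiiveresult} can be invoked to switch $m_0$ to $m_1$.

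The main obstacle I anticipate is the PKE hop, because $\bob$ and $\charlie$ eventually receive $\sk$; a naive black-box PKE reduction that internally simulates $\bob$ and $\charlie$ cannot obtain $\sk$ from its PKE challenger. This is already the central technical difficulty in~\cite{ananth2021uncloneable}, and I would inherit their resolution, which structures the hop so that the PKE distinguisher depends only on $\alice$'s output (extracted before $\sk$ is revealed to the splitting game) and accounts for $\bob, \charlie$ via an averaging argument over $\alice$'s split that is subsequently folded into the $\mathsf{OT}$ hop. Because this resolution depends only on the timing of $\pk$ and $\sk$ in the splitting game, and not on the shape of the winning predicate, it transfers verbatim to the unclonable indistinguishability setting, yielding the theorem with security loss equal to the sum of the PKE and $\mathsf{OT}$ losses.
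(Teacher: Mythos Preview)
Your proposal is correct and matches the paper's approach exactly: the paper simply states that combining Theorem~\ref{thm:intro:positiiveresult} with the generic one-time-to-public-key transformation of~\cite{ananth2021uncloneable} yields the result, together with the remark (in a footnote) that the transformation preserves unclonable indistinguishability, not only the search-style unclonability proved in~\cite{ananth2021uncloneable}. The paper does not spell out the hybrid structure or the handling of $\sk$ in the PKE hop at all; your write-up is already more detailed than what the paper provides for this informal corollary.
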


\par It is natural to understand whether we can achieve unclonable encryption in the plain model. Towards understanding this question, we show that a class of unclonable encryption schemes, that we call {\em deterministic} schemes, are impossible to achieve. By `deterministic', we mean that the encryptor is a unitary $U$ and the decryptor is $U^{\dagger}$. Moreover, the impossibility holds even if the encryptor and the decryptor are allowed to run in exponential time! 
\par In more detail, we show the following. 

\begin{theorem}[Informal]
There do not exist unconditionally secure deterministic one-time encryption schemes satisfying the unclonable indistinguishability property.
\end{theorem}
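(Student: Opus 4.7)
The plan is to exhibit a universal, information-theoretic splitting adversary $\abc$ that wins the unclonable indistinguishability experiment against any deterministic scheme with constant advantage, regardless of the scheme's running time.

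\noindent \textbf{Normalization.} Under the deterministic hypothesis, for every key $k$ there is a unitary $U_k$ on $D := 2^{1+\ell}$ dimensions such that the encryption of a bit $b \in \bit$ is $\ket{\psi_b^k} := U_k \ket{b}\ket{0^{\ell}}$ and decryption applies $U_k^{\dagger}$ followed by a computational-basis measurement of the message register. Correctness forces $\braket{\psi_0^k | \psi_1^k} = 0$ for every $k$.

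\noindent \textbf{The attack.} On input $\ket{\psi_b^k}$, $\alice$ samples a Haar-random unitary $V$ on $\complex^{D}$, applies $V^{\dagger}$ to the ciphertext, measures in the computational basis to obtain an outcome $x$, and transmits the classical pair $(V, x)$ to both $\bob$ and $\charlie$. Given $(k, V, x)$, each of them computes the likelihoods
\[
p_b^{k,V}(x) \;:=\; \abs{\bra{x} V^{\dagger} U_k \ket{b, 0^{\ell}}}^2, \qquad b \in \bit,
\]
and outputs the maximum-likelihood guess; because these computations are deterministic and identical, the two outputs always agree.

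\noindent \textbf{Reduction to the Haar-random game and analysis.} For each fixed $k$, left-invariance of the Haar measure implies that $W := V^{\dagger} U_k$ is itself Haar-random on $U(D)$, so analyzing the attack is equivalent to analyzing an idealized \emph{Haar-random game} in which the encryption unitary is drawn from the Haar measure. A direct calculation using $\max(a,b) = \tfrac{a+b}{2} + \tfrac{|a-b|}{2}$ yields the joint success probability
\[
\prob[\, b_B = b_C = b \,] \;=\; \tfrac{1}{2} \;+\; \tfrac{1}{4}\, \E_{W \sim \mathrm{Haar}(U(D))}\!\left[\, \sum_{x}\, \abs{\, \abs{\bra{x} W \ket{0, 0^{\ell}}}^2 - \abs{\bra{x} W \ket{1, 0^{\ell}}}^2 \,} \,\right].
\]
The two vectors $W\ket{0, 0^{\ell}}$ and $W\ket{1, 0^{\ell}}$ form a Haar-random orthogonal pair of columns in $\complex^{D}$. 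Haar-measure computations show that this expectation is bounded below by a positive universal constant (approximately $1$): for $D=2$ an elementary integral on $U(2)$ yields exactly $1$, and for large $D$ the Haar entries concentrate to independent complex Gaussians of variance $1/D$, so the per-coordinate Laplace-type differences sum to $\Theta(1)$. Hence the joint success probability is at least $\tfrac{3}{4} - o(1)$, a non-negligible advantage of $\tfrac{1}{4} - o(1)$ over the trivial bound, contradicting unclonable indistinguishability.

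\noindent \textbf{Main obstacle.} The substantive technical step is the universal lower bound on $\E_{W}\!\left[\norm{p_0^W - p_1^W}_1\right]$, which is \emph{not} accessible from second-moment Haar data alone: one computes $\E_{W}\!\left[\norm{p_0^W - p_1^W}_2^2\right] = 2/(D+1) \to 0$, so a plain Cauchy--Schwarz argument through $\norm{\cdot}_2$ is far too weak. Establishing the dimension-independent constant lower bound requires a fourth-moment computation (via Weingarten calculus, or via a Paley--Zygmund-type inequality combined with the Haar moments $\E[|W_{ij}|^2 |W_{kl}|^2]$) that exploits concentration of Haar amplitudes at scale $1/D$. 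A secondary subtlety is that $\alice$ transmits the continuous object $V$ classically; since the theorem permits unbounded adversaries this is admissible, and for a discrete implementation $V$ may be drawn from a unitary $4$-design, which preserves all fourth moments used in the analysis.
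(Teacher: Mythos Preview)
Your approach is correct and genuinely different from the paper's. You have $\alice$ measure the entire ciphertext in a Haar-random basis and broadcast the classical pair $(V,x)$ to both recipients, who then make the identical maximum-likelihood guess; because their outputs always agree, the joint success probability equals the single-party ML success probability $\tfrac12+\tfrac14\,\E_W\bigl[\|p_0^W-p_1^W\|_1\bigr]$, and you argue this $\ell_1$ expectation is a universal constant (exactly $1$ for $D=2$, and $\to 1$ as $D\to\infty$ via the Gaussian approximation of Haar columns). The paper instead has $\alice$ apply a Haar-random unitary and then \emph{physically split} the $\lambda$-qubit ciphertext into two halves, sending one half to $\bob$ and the other to $\charlie$; it lower-bounds the joint success probability by a union bound and invokes a known result on the trace distance between reduced states of two Haar-random vectors, obtaining the asymptotic constant $\tfrac14+\tfrac1\pi\approx 0.568$.

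What each approach buys: your measure-and-broadcast strategy is conceptually cleaner (it collapses the two-party problem to a one-party distinguishing problem) and actually yields a \emph{larger} asymptotic advantage, $3/4$ versus $0.568$. On the other hand, the paper's splitting attack plugs directly into an existing theorem on random reduced states, whereas your route must establish the dimension-free lower bound on $\E_W[\|p_0^W-p_1^W\|_1]$ from scratch. You correctly flag that second moments alone (\,$\E\|p_0-p_1\|_2^2=2/(D{+}1)$\,) are insufficient, and your proposed remedy is sound: the H\"older inequality $\E|Z|\ge(\E Z^2)^{3/2}/(\E Z^4)^{1/2}$ together with the exact Haar moments $\E[|W_{ij}|^4]=\tfrac{2}{D(D+1)}$ and $\E[|W_{i1}|^2|W_{i2}|^2]=\tfrac{1}{D(D+1)}$ (and the corresponding degree-$4$ Weingarten values for $\E Z^4$) gives a constant lower bound valid for every $D\ge 2$, not only asymptotically. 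Your remark that a unitary $4$-design suffices for a discrete implementation is also on point; the paper handles this via an $\varepsilon$-net instead.
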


\noindent In light of the fact that any classical one-time encryption scheme can be made deterministic without loss of generality\footnote{We can always include the randomness used in the encryption as part of the secret key.}, we find the above result to be surprising.
An interesting consequence of the above result is an alternate proof that the conjugate encryption scheme of~\cite{broadbentUncloneableQuantumEncryption2020} does not satisfy unclonable indistiguishability\footnote{It is easy to see why conjugate encryption of multi-bit messages is insecure. The insecurity of conjugate encryption of 1-bit messages was first established by~\cite{majenzLimitationsUncloneableEncryption2021} .}. This was originally proven by~\cite{majenzLimitationsUncloneableEncryption2021}.
\par We can overcome the impossibility result by either devising an encryption algorithm that traces out part of the output register (in other words, performs non-unitary operations) or the encryption scheme is based on computational assumptions.

\paragraph{Copy-Protection for Point Functions.} We also make progress on Q2. We show that there exists copy-protection for single-bit output functions with optimal security. Prior work by Coladangelo, Majenz and Poremba~\cite{CMP20} achieved a copy-protection scheme for single-bit output point functions that only achieved constant security. 
\par We show the following. 

\begin{theorem}[Informal]
There exists a copy-protection scheme for single-bit output point functions in the quantum random oracle model. 
\end{theorem}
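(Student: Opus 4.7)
The plan is to leverage the unclonable encryption (UE) scheme satisfying unclonable indistinguishability established above and apply a variant of the transformation from UE to copy-protection for point functions developed in \cite{CMP20,ananth2021uncloneable}. Concretely, I would define the copy-protected state for the single-bit output point function $f_y$ by sampling a uniformly random tag $m \in \bit$ and outputting the unclonable ciphertext $\rho \gets \enc(y, m)$ together with $m$ published in the clear. To evaluate on input $x$, the user runs $\dec(x, \rho)$ and outputs $1$ iff the recovered bit equals $m$. Correctness when $x = y$ follows directly from correctness of the UE scheme, while on $x \neq y$ it follows from the structure of the underlying coset-state UE, whose decryption can detect an invalid key (outputting $\bot$, interpreted here as $0$) except with negligible probability.

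For security I would adopt the standard copy-protection challenge distribution in which $\bob$ and $\charlie$ each independently receive an input drawn uniformly from $\{y, u\}$, where $u$ is a freshly sampled uniformly random string, and the pirate wins if both freeloaders correctly output $f_y$ on their respective inputs. The trivial strategy (one freeloader runs the honest program while the other always outputs $0$) achieves success probability $1/2$, so \emph{optimal} security corresponds to bounding the pirate's success by $1/2 + \negl$. The reduction to unclonable indistinguishability plants the UE challenge ciphertext $\enc(y, m_b)$ (with $b \in \bit$ uniform) in place of the copy-protected ciphertext, publishes a freshly sampled tag $m^\star$ in the clear, and on the freeloader side simulates the copy-protection freeloaders as follows: when a freeloader's challenge input equals $y$, run the freeloader on $y$ and return its output; when the challenge input is a random $u$, simply output $0$. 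A noticeable joint copy-protection advantage then translates, via a hybrid over the four possible patterns of $(x_\bob, x_\charlie) \in \{y, u\}^2$, into a simultaneous distinguishing advantage of $\bob$ and $\charlie$ on the UE challenge bit, contradicting the unclonable indistinguishability theorem above.

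The main technical hurdle I anticipate is the bookkeeping required to convert the joint copy-protection advantage---a correlation across two freeloaders and two independently random challenge inputs---into a clean UE distinguishing advantage without losing more than a constant factor. Prior transformations could only invoke the search-style \emph{unclonability} of UE, which is why they were limited either to multi-bit point functions or to a constant single-bit security gap; the indistinguishability-based UE above supplies a negligible two-party distinguishing advantage directly, so I expect a clean reduction yielding \emph{optimal} single-bit copy-protection security, i.e.\ winning probability at most $1/2 + \negl(\secparam)$. Because the entire construction is instantiated in the QROM, the resulting copy-protection scheme naturally inherits that model.
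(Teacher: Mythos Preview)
Your plan runs into the barrier the paper explicitly flags: the generic UE-to-copy-protection transformations of \cite{CMP20,ananth2021uncloneable} are known to give only \emph{constant} security for single-bit output point functions, and upgrading UE from unclonability to unclonable indistinguishability does not by itself remove that loss. Two concrete problems. First, correctness: the paper's UE decryption never outputs $\bot$; with a wrong key it returns $c \oplus H(t,t')$ for whatever $(t,t')$ gets extracted, which for a single-bit tag $m$ equals $m$ with probability about $1/2$, so $\eval$ on $x\neq y$ outputs $1$ with probability $1/2$ rather than negligibly. Second, the security reduction is not black-box in UE: in the UE game $\alice$ never learns $\sk$, so the reduction cannot program the oracle mapping $y\mapsto \sk$ for the CP pirate; one is forced to puncture that oracle at $y$ for $\alice$ and reprogram it only for $\bob,\charlie$. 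Even after doing this, your ``hybrid over the four patterns'' is precisely where prior work loses a constant: only the $(y,y)$ corner yields a usable UE distinguishing signal (when the challenge is $u$, the freeloader cannot decrypt and its output is uncorrelated with the planted bit $b$), so the averaging over corners costs a constant factor that does not vanish.

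The paper therefore does \emph{not} go through UE at all. It gives a direct construction: derive the subspace as $A=\mathrm{Span}(G(y))$ via a random oracle $G$, and publish a long tag $H(s,s')\in\{0,1\}^{4n+\lambda}$ (not a single bit), which fixes correctness on $x\neq y$. Security is proved by a direct reduction to the coset-state MOE game: puncture $G$ at $y$ and $H$ at $(s,s')$ for $\alice$, write the split state in the eigenbasis of the freeloaders' success POVMs, and use the efficient threshold measurement (\Cref{thm:thres_approx_asymmetric}) together with simultaneous extraction to show that the joint weight on eigenvectors with both eigenvalues noticeably above $\trivialprob$ is negligible (\Cref{lem:cpsmallweight}). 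The Jordan's-lemma cross-term analysis from the UE proof is then reused to handle correlated challenges. This route reuses the \emph{techniques} of the UE result but not the UE theorem as a black box, and it is exactly this that avoids the constant-factor loss.
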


\noindent While there are generic transformations from unclonable encryption to copy-protection for point functions explored in the prior works~\cite{CMP20,ananth2021uncloneable}, the transformations only work for multi-bit point functions. Our construction extensively makes use of the techniques for achieving unclonable encryption (\Cref{thm:intro:positiiveresult}). Our result takes a step closer in understanding the classes of functions for which the feasibility of copy-protection can be established. 

\subsection{Organization} 
\label{sec:organization}
The rest of the paper is organized as follows. In \Cref{sec:prelim}, we cover all the necessary preliminaries, including Jordan's lemma, measuring success probability of a quantum adversary and the definitions of unclonable encryption schemes. Followed by \Cref{sec:moe_coset}, we recall coset states and their properties. We introduce a new game called ``strengthened MOE games in the QROM'' and prove security in this game. This part contains the main technical contribution of our paper. In \Cref{sec:ue_scheme}, we build our unclonable encryption on the new property \pnote{what does "on the new property" here mean?}. In the final section (\Cref{sec:CP_qrom}), we present our construction for copy-protection of single-output point functions\subversion{; due to the technique similarities with \Cref{sec:moe_coset}, most of the technical proofs are moved to \Cref{sec:proofs}}. Finally, we talk about our impossibility result in \Cref{sec:imp}.

\subsection{Technical Overview}

\paragraph{Attempts based on Wiesner States.}
We start by recalling the unclonable encryption scheme proposed by Broadbent and Lord \cite{broadbentUncloneableQuantumEncryption2020}. 
The core idea is to encrypt a message $m$ under a randomly chosen secret key $x$ and encode $x$ into an unclonable quantum state $\rho_x$. 
Intuitively, for any splitting adversary $(\As, \Bs, \Cs)$, there is no way for $\As$ to split $\rho_x$ into two quantum states, such that no-communicating $\Bs$ and $\Cs$ can both 
recover enough information about $x$ to decrypt $\enc(x, m)$. 

A well-known choice of no-cloning states is the Wiesner conjugate coding (or Wiesner states for short) \cite{Wie83}. For a string $x = x_1 x_2 \cdots x_\lambda \in \{0,1\}^\lambda$, $\lambda$ bases are chosen uniformly at random, one for each $x_i$. Let $\theta_i$ denote the basis for $x_i$. If $\theta_i$ is $0$, $x_i$ is encoded under the computational basis $\{\ket 0, \ket 1\}$; otherwise, $x_i$ is encoded under the Hadamard basis $\{\ket +, \ket -\}$. The conjugate coding of $x$ under basis $\theta$ is then denoted by $\ket{x^\theta}$. 
By knowing $\theta$, one can easily recover $x$ from the Wiesner state. 

The unclonability of Wiesner states is well understood and characterized by \emph{monogamy-of-entanglement games} (MOE games) in \cite{Tomamichel_2013,broadbentUncloneableQuantumEncryption2020}. In the same paper, 
Broadbent and Lord show that no strategy wins the following MOE game\footnote{This is a variant of {MOE games} discussed in \cite{Tomamichel_2013}. We will use this notation throughout the paper.} with probability more than $0.85^\lambda$.

\begin{figure}[hpt]
    \centering
    \begin{gamespec}
    \begin{itemize}
    \item A challenger samples uniformly at random $x, \theta \in \{0,1\}^\lambda$ and sends $\ket{x^\theta}$ to $\As$. 
    \item $\As$ taking the input from the challenger, produces a bipartite state to $\Bs$ and $\Cs$. 
    \item The non-communicating $\Bs$ and $\Cs$ then additionally receive the secret basis information $\theta$ and make a guess $x_{\Bs}, x_{\Cs}$ for $x$ respectively. 
    \item The splitting adversary $(\As, \Bs, \Cs)$ wins the game if and only if $x_{\Bs} = x_{\Cs} = x$. 
    \end{itemize}
    \end{gamespec}
    \caption{MOE Games for Wiesner States.}
    \label{fig:BB84MOE}
\end{figure}

A natural attempt to construct unclonable encryption schemes is by composing a one-time pad with Wiesner states. A secret key is the basis information $\theta \in \{0,1\}^n$. 
An encryption algorithm takes the secret key $\theta$ and a plaintext $m$, it samples an $x \in \{0,1\}^n$ and outputs $m \oplus x$ together with the Wiesner conjugate coding of $x$, i.e. $\ket {x^\theta}$. On a high level, no split adversaries can both completely recover $x$, thus it is impossible for them to both recover the message $m$. 
However, such a scheme can never satisfy the stronger security: unclonable indistinguishability. 
Recall that unclonable indistinguishability requires either $\Bs$ or $\Cs$ can not distinguish whether the ciphertext is an encryption of message $m_0$ or $m_1$. 
Broadbent and Lord observe that although it is hard for $\Bs$ and $\Cs$ to recover the message completely, they can still recover half of the message and hence simultaneously distinguish with probability $1$. 

Towards unclonable indistinguishability, they introduce a random oracle $H:\{0, 1\}^\lambda \times \{0,1\}^\lambda \to \{0,1\}^n$ in their construction (\Cref{fig:UE_BL}).
If an adversary can distinguish between $m_0 \oplus H(\alpha, x)$  and $m_1 \oplus H(\alpha, x)$, it must query $H(\alpha, x)$ at some point; hence, one can extract $x$ from this adversary by measuring a random query. Following the same reasoning, one may hope to base the security (of \Cref{fig:UE_BL}) on the MOE games (of \Cref{fig:BB84MOE}), by extracting $x$ from both parties.

\begin{figure}[!hbt]
    \centering
    \begin{gamespec}
    \begin{description}
    \item $\gen(1^\lambda)$: on input $\lambda$, outputs uniformly random $(\alpha, \theta) \in \{0,1\}^{2 \lambda}$. 
    \item $\enc^H((\alpha, \theta), m)$: samples $x \in \{0,1\}^\lambda$, outputs $(\ket{x^\theta}, m \oplus H(\alpha, x))$.
    \item $\dec^H((\alpha, \theta), (\ket{x^\theta}, c))$: recovers $x$ from $\ket{x^\theta}$, outputs $c \oplus H(\alpha, x)$. 
    \end{description}
    \end{gamespec}
    \caption{Unclonable Encryption by Broadbent and Lord.}
    \label{fig:UE_BL}
\end{figure}

The above idea, thought intuitive, is hard to instantiate. It will require simultaneous extraction of the secret $x$ from both $\Bs$ and $\Cs$. Since $\Bs$ and $\Cs$ can be highly entangled, a successful extraction of $x$ on $\Bs$'s register may always result in an extraction failure on the other register. Broadbent and Lord use a ``simultaneous'' variant of the so-called ``O2H'' (one-way-to-hiding) lemma \cite{unruh15revocable} to prove their scheme satisfies unclonable indistinguishability for un-entangled adversaries $\Bs, \Cs$, or messages with constant length. The unclonable indistinguishability for general adversaries and message spaces remains quite unknown. 

Even worse, Majenz, Schaffner, and Tahmasbi \cite{majenzLimitationsUncloneableEncryption2021} show that there is an inherent limitation to this simultaneous variant of O2H lemma. They give an explicit example that shatters the hope of proving unclonable indistinguishability of the construction in \cite{broadbentUncloneableQuantumEncryption2020} using this lemma.

\paragraph{Instantiating \cite{broadbentUncloneableQuantumEncryption2020} using Coset States.} 
Facing the above barrier, we may resort to other states possessing some forms of unclonability. 
One candidate is the so-called ``coset states'', first proposed by Vidick and Zhang \cite{vidick2021classical} in the context of proofs of quantum knowledge and later studied by Coladangelo, Liu, Liu, and Zhandry \cite{coladangelo2021hidden} for copy-protection schemes. 

A coset state is described by three parameters: a subspace $A \subseteq \mathbb{F}_2^\lambda$ of dimension $\lambda/2$ and two vectors $s, s' \in \mathbb{F}^\lambda_2$ denoting two cosets $A + s$ and $A^\perp + s'$\footnote{There are many vectors in $A+s$. In the rest of the discussion, we assume $s$ is the lexicographically smallest vector in $A+s$. Similarly for $s'$. }($A^\perp$ denotes the dual subspace of $A$); we write the state as $\ket{A_{s, s'}}$. Coset states have many nice properties, among those we only need the following:  
\begin{enumerate}
    \item Given $\ket{A_{s, s'}}$ and a classical description of subspace $A$, an efficient quantum algorithm can compute both $s$ and $s'$. 
    \item No adversary can win the MOE game (\Cref{fig:CSMOE}) for coset states with probability more than $\sqrt{e} \cdot (\cos (\pi/8))^\lambda$ (first proved in \cite{coladangelo2021hidden}).
\end{enumerate}
\begin{figure}[!hbt]
    \centering
    \begin{gamespec}
    \begin{itemize}
    \item A challenger samples uniformly at random a subspace $A \subseteq \mathbb{F}_2^\lambda$ of dimension $\lambda/2$ $s, s' \in \mathbb{F}^\lambda_2$ and sends $\ket{A_{s, s'}}$ to $\As$. 
    \item $\As$ taking the input from the challenger, produces a bipartite state to $\Bs$ and $\Cs$. 
    \item The non-communicating $\Bs$ and $\Cs$ then additionally receive a classical description of the subspace $A$ and make a guess $s_{\Bs},s'_{\Bs}, s_{\Cs},s'_{\Cs}$ for $s,s'$ respectively. 
    \item The splitting adversary $(\As, \Bs, \Cs)$ wins the game if and only if $s_{\Bs} = s_{\Cs} = s, s'_{\Bs} = s'_{\Cs} = s'$. 
    \end{itemize}
    \end{gamespec}
    \caption{MOE Games for Coset States.}
    \label{fig:CSMOE}
\end{figure}

Readers may already notice the similarity between Wiesner states and coset states. If we substitute the basis information $\theta$ with $A$ and the secret $x$ with $s||s'$, we get coset states and their corresponding MOE games. Hence, we can translate the construction in \cite{broadbentUncloneableQuantumEncryption2020} using the languages of coset states. A question naturally arises: if these two kinds of states are very similar, why does replacing Wiesner states with coset states even matter? 

Indeed, they differ in one crucial place. Let us come back to Wiesner states. As shown by \cite{lutomirski2010online} in the setting of private key quantum money, given $\ket{x^\theta}$ together with an oracle $P_{x_c}, P_{x_h}$\footnote{\cite{lutomirski2010online} showed that an algorithm breaks the money scheme, given oracle access to $P^\theta_x$; $P^\theta_x$ outputs $1$ if and only if input $y = x$ under basis specified by $\theta$. One can change the algorithm so that it only needs $P_{x_c}$(and $P_{x_h}$) to break the money scheme where $P_{x_c}$ (or $P_{x_h}$) matches $y$ with $x$ on all coordinates such that $\theta_i = 0$ ($\theta_i = 1$) and outputs $1$ if they match.} that outputs $1$ only if input $y = x$, there exists an efficient quantum adversary that learns $x$ without knowing $\theta$. This further applies to the MOE games for Wiesner states: if $\As$ additionally gets oracle access to $P_{x_c}, P_{x_h}$, the MOE game is no longer secure. 

MOE games for coset states remain secure if oracles for checking $s$ and $s'$ are given. More formally, let $P_{A+s}$ be an oracle that outputs $1$ only if the input $y \in A + s$, similarly for $P_{A^\perp + s'}$. No adversary $(\As, \Bs, \Cs)$ can win the MOE games for coset states with more than some exponentially small probability in $\lambda$, even if $\As, \Bs, \Cs$ all query $P_{A+s}$ and $P_{A^\perp + s'}$ polynomially many times. We call this game \emph{MOE game for coset states with membership checking oracles}. 

We now give our construction of unclonable encryption that satisfies unclonable indistinguishability in \Cref{fig:UE}. In our construction, we also get rid of the extra input $\alpha$ in \cite{broadbentUncloneableQuantumEncryption2020} construction. We believe $\alpha$ can be similarly removed in their construction as well. Also, note that in our construction, we only require coset states and random oracles. The membership checking oracles will only be given to the adversary when we prove its security. Thus, we prove a stronger security guarantee (with membership checking oracle are given). Due to this, we can not prove the security of their construction using Wiesner states following the same idea; nonetheless, we do not know how to disprove it. We leave it as an interesting open question.

\begin{figure}[!hbt]
    \centering
    \begin{gamespec}
    \begin{description}
    \item $\gen(1^\lambda)$: on input $\lambda$, outputs uniformly random subspace $A \subseteq \mathbb{F}_2^\lambda$ of dimension $\lambda/2$. 
    \item $\enc^H(A, m)$: samples $s, s' \in \mathbb{F}_2^\lambda$\footnote{We again require $s, s'$ to be the lexicographically smallest vector in $A+s$ and $A^\perp + s'$.}, outputs $(\ket{A_{s, s'}}, m \oplus H(s, s'))$.
    \item $\dec^H(A, (\ket{A_{s, s'}}, c))$: recovers $s, s'$ from the coset state, outputs $c \oplus H(s, s')$. 
    \end{description}
    \end{gamespec}
    \caption{Our Unclonable Encryption Scheme.}
    \label{fig:UE}
\end{figure}

\paragraph{Basing Security on Reprogramming Games.}
Now we look at what property we require for coset states to establish unclonable indistinguishability. 
We will focus on the case $n=1$ (length-$1$ messages) in this section. By a sequence of standard variable substitutions, unclonable indistinguishability of our scheme can be based on the following security game in the identical challenge mode (please refer to \Cref{fig:CS_reprogram}), where each of $\Bs, \Cs$ tries to identify whether the oracle has been reprogrammed or not. We want to show any adversary $(\As, \Bs, \Cs)$ only achieves successful probability $1/2+\negl$. This ideal security matches the trivial attack: $\Bs$ gets the coset state and $\Cs$ makes a random guess, they win with probability $1/2$.

\begin{figure}[!hbt]
    \centering
    \begin{gamespec}
    \begin{itemize}
    \item $H$ be a random oracle with binary range, $H : \mathbb{F}_2^\lambda \times \mathbb{F}_2^\lambda \to \{0,1\}$. 
    
    Additionally, $\As, \Bs, \Cs$ get oracle access to $P_{A+s}$ and $P_{A^\perp+s'}$.
    
    \item A challenger samples a coset state $\ket{A_{s, s'}}$ and sends $(\ket{A_{s, s'}}, H(s, s'))$ to $\As$. 
    \item $\As$ taking the input from the challenger, has oracle access to $H_{(s,s') \to \bot}$ and produces a bipartite state to $\Bs$ and $\Cs$.  Here $H_{(s,s') \to \bot}$ is the same as $H$ except $H(s, s')$ is replaced with $\bot$\footnote{In the actual proof, $H(s, s')$ is replaced with a uniformly random $u$. Both approaches work.}. 
    \item The non-communicating $\Bs$ and $\Cs$ then receive a classical description of the subspace $A$:
    \begin{itemize}
        \item Let $H_0 := H$ be the original random oracle. 
        \item Let $H_1$ be identical to $H$, except the outcome on $(s, s')$ is flipped. 
        \item (\textbf{Identical Challenge Mode}): Flip a coin $b$, both $\Bs$ and $\Cs$ get oracle access to $H_b$.
        \item (\textbf{Independent Challenge Mode}): Flip two coins $b_\Bs, b_\Cs$, $\Bs$ has oracle access to $H_{b_\Bs}$ and  $\Cs$ gets oracle access to $H_{b_\Cs}$. 
    \end{itemize}
    \item $\Bs, \Cs$ makes a guess $b', b''$ respectively. 
    \item The adversary $(\As, \Bs, \Cs)$ wins the game if and only if $b' = b'' = b$ (in the identical challenge mode), or $b' = b_\Bs$ and $b'' = b_\Cs$ (in the independent challenge mode).
    \end{itemize}
    \end{gamespec}
    \caption{Reprogramming Games for Coset States in the QROM}
    \label{fig:CS_reprogram}
\end{figure}

Note that in the above reprogramming game (\Cref{fig:CS_reprogram}), $\As$ has no access to $H(s, s')$. This is different from unclonable indistinguishability games or MOE games. Nevertheless, we show that $\As$ never queries $(s, s')$ and thus $H(s, s')$ does not help $\As$ and thus can be safely removed by introducing a small loss. 

The security of the reprogramming games in the identical challenge mode can be reduced to the security in the independent challenge mode. 
A careful analysis of Jordan's lemma (\Cref{sec:Jordan}) is required to show such a reduction. We believe that this reduction is  non-trivial and we leave it to the last section in the overview.

The remaining is to show the security of the game in the independent challenge mode. Inspired by the work of \cite{z20} which initiates the study of measuring success probability of a quantum program, we show there is an efficient procedure that operates locally on both the entangled adversaries $(\Bs, \Cs)$ and outputs $(\Bs', p_\Bs)$, $(\Cs', p_\Cs)$ such that: (informally)
\begin{itemize}
    \item $\Bs'$ and $\Cs'$ are un-entangled\footnote{$\Bs'$ and $\Cs'$ satisfy a weaker guarantee than being un-entangled. Informally, conditioned on any event of non-negligible chance on one's side, the other party still has success probability $p_\Cs$ (or $p_\Bs$, respectively). The same analysis applies to this weaker guarantee. For ease of presentation, we assume that they are un-entangled.}. 
    \item The success probability of $\Bs'$ on guessing whether it has access to $H_0$ or $H_1$ is $p_\Bs$.
    \item The success probability of $\Cs'$ on  guessing whether it has access to $H_0$ or $H_1$ is $p_\Cs$.
    \item The expectation of $p_\Bs \cdot p_{\Cs}$ is equal to $(\Bs, \Cs)$'s success probability in the reprogramming game in the independent challenge mode. 
\end{itemize}
The above estimation procedure requires to run $\Bs'$ and $\Cs'$ on $H_0$ and $H_1$. 
In other words, the procedure should be able to reprogram $H_{(s, s') \to \bot}$ on the input $(s, s')$. Since the procedure will be used in the reduction for breaking MOE games for coset states, it should not know $s$ or $s'$, but only knows $A$ and $P_{A+s}, P_{A^\perp + s'}$. Nonetheless, we show with the membership checking oracle, such reprogramming is possible. For example, $H_1$ can be reprogrammed as follows:
\begin{align*}
    H_{1} = \begin{cases}
        \neg H(s, s')  &  Q_{s}(z) = 1 \text{ and } Q_{s'}(z') = 1 \\
        H_{(s, s') \to \bot}(z, z') & \text{Otherwise}
    \end{cases},
\end{align*}
where $Q_s$ is the point function that only outputs $1$ on $s$, similarly for $Q_{s'}$. The remaining is to show $Q_s$ (or $Q_{s'}$) can be instantiated by the classical description of $A$ and $P_{A+s}$ (or $P_{A^\perp + s'}$ respectively). $Q_s$ can be implemented by (1) check if the input $z$ is in $A+s$, (2) check if the input $z$ is the lexicographically smallest in $A+s$. Step (1) can be done via $P_{A+s}$. Step (2) can be done by knowing $A$ and some $z \in A+s$ (which is known from step (1)): one can check if there exists some lexicographically smaller $z^*$ such that $(z - z^*) \in {\sf span}(A)$; this can be done efficiently, by enumerating each coordinate and doing Gaussian elimination. 
Thus, both $Q_s$ and $Q_{s'}$ can be implemented. 

Without membership checking oracle, we do not know how to reprogram the oracle, or run the above procedure. Thus the proof fails for Wiesner states.

\medskip

Finally, we prove the security of reprogramming game in the independent challenge mode. 
If $(\As, \Bs, \Cs)$ has non-trivial success probability $1/2+\gamma$ for some large $\gamma$, the above procedure must output large $p_\Bs, p_\Cs > 1/2 + \gamma / 2$ with non-negligible probability. 
If $\Bs'$ never queries $H_0$ or $H_1$ on $(s, s')$, the best probability it can achieve is $1/2$. Thus, by measuring a random query of $\Bs'$, we can extract $s, s'$ with non-negligible probability. Similarly for $\Cs'$. This violates the MOE games for coset states with membership checking oracles, a contradiction. Therefore, the security of the reprogramming game in the independent mode is established.

\paragraph{Relating Identical Challenge Mode to Independent Challenge Mode.}

In the end, in this section, we discuss how to relate the reprogramming game in the identical challenge mode to that in the independent challenge mode. We refer the readers to the proof of \Cref{thm:moe_rom} for further details. 

We first elaborate on the above discussion for \emph{independent challenge mode}. It helps us establish the language for the presentation of \emph{identical challenge mode} and give a nice characterization of the state produced by Alice.

For a random choice of $A, s, s'$ and oracles $H_{(s, s') \to \bot}$, let $\ket{\sigma}_{\mathbf{BC}}$ be the joint quantum state shared by Bob and Charlie after Alice's stage. We additionally define projections $\Pi^B_b$ and $\Pi^C_b$ for $b \in \{0,1\}$: 
\begin{itemize}
    \item $\Pi^B_0$: Run Bob on its own register $\sigma[\mathbf{B}]$ with oracle access to $H_0$, project onto Bob outputting $0$ and rewind; 
    \item $\Pi^B_1$: Run Bob on $\sigma[\mathbf{B}]$ with oracle access to $H_1$, project onto Bob outputting $1$ and rewind.
\end{itemize}
We can similarly define $\Pi^C_0$ and $\Pi^C_1$. Namely, $\Pi^B_b$ is the projection for Bob's success on $H_b$ and $\Pi^C_b$ is the projection for Charlie's success on $H_b$.

By definition, the success probability in the independent challenge mode is:
\begin{align}\label{eq:prob_ind_game}
    {\sf Tr}\left[ \left(\frac{\Pi^B_0 + \Pi^B_1}{2}\right) \otimes \left( \frac{\Pi^C_0 + \Pi^C_1}{2} \right) \ket{\sigma}\bra{\sigma} \right].
\end{align}
Since ${(\Pi^B_0 + \Pi^B_1)}/{2}$ is a POVM, let $\{\ket{\phi_p}\}_{p \in \mathbb{R}}$ be the set of eigenvectors with eigenvalues $p \in [0,1]$\footnote{There can be multiple eigenvectors with the same eigenvalues. In the overview, we assume that eigenvalues are unique.}.  Similarly, let $\{\ket{\psi_q}\}_{q \in \mathbb{R}}$ be the set of eigenvectors with eigenvalues $q \in [0,1]$ for $(\Pi^C_0+\Pi^C_1)/2$. Therefore, we can write $\ket{\sigma}$ under the bases $\{\ket{\phi_p}\}$ and $\{\ket{\psi_q}\}$:
\begin{align*}
    \ket{\sigma} = \sum_{p, q} \alpha_{p, q} \ket{\phi_p} \ket{\psi_q}. 
\end{align*}

The analysis in the last paragraph (for independent challenge mode) can show in this setting that, $p$ and $q$ cannot be simultaneously far away from the trivial guessing probability $1/2$, i.e., for any inverse polynomial $\varepsilon$,
\begin{align*}
    \sum_{\substack{p: |p-1/2| > \varepsilon \\ q: |q-1/2| > \varepsilon}} |\alpha_{p, q}|^2 \approx 0.
\end{align*}
In other words, $\ket{\sigma}$ is very close to the summation of the following subnormalized states: 
\begin{align*}
    \ket{\sigma} = \sum_{\substack{p: |p-1/2| \leq \varepsilon}} \alpha_{p, q} \ket{\phi_p} \ket{\psi_q} + \sum_{\substack{p: |p-1/2| > \varepsilon \\ q: |q-1/2| \leq \varepsilon}} \alpha_{p, q} \ket{\phi_p} \ket{\psi_q}. 
\end{align*}
Here we simply call the first subnormalized state as $\ket{\sigma^\bad_{\Bs}}$, denoting Bob can not behave in a significantly different way from random guessing; and call second subnormalized state as $\ket{\sigma^\bad_{\Cs}}$ for Charlie. We have $\ket{\sigma} = \ket{\sigma^\bad_{\Bs}} + \ket{\sigma^\bad_{\Cs}}$. Thus, \ref{eq:prob_ind_game} is bounded by at most $1/2 + \varepsilon$ for any inverse polynomial $\varepsilon$, concluding the security in the independent challenge mode. 

The above analysis gives a characterization of $\ket{\sigma}$. Note that although the analysis is done assuming Alice, Bob and Charlie play the game in the independent challenge mode, it holds for the game in identical challenge mode as well.

\medskip

Finally, we focus on the identical challenge mode. 
The success probability in the identical challenge mode is:
\begin{align} \label{eq:prob_same_game}
    {\sf Tr}\left[ \left(\frac{\Pi^B_0 \otimes \Pi^C_0 + \Pi^B_1 \otimes \Pi^C_1}{2}\right) \ket{\sigma}\bra{\sigma} \right].%
\end{align}
By plugging $\ket{\sigma} = \ket{\sigma^\bad_{\Bs}} + \ket{\sigma^\bad_{\Cs}}$ in the above formula, \ref{eq:prob_same_game} is at most: 
\begin{align}\label{eq:prob_same_simplified}
    \frac{1}{2} + \varepsilon + \frac{1}{2}\brackets{\left| \langle \sigma^\bad_{\Bs} | \Pi^B_0 \otimes \Pi^C_0 | \sigma^\bad_{\Cs} \rangle \right| + \left| \langle \sigma^\bad_{\Bs} | \Pi^B_1 \otimes \Pi^C_1 | \sigma^\bad_{\Cs} \rangle \right| }.
\end{align}
The only difference between \ref{eq:prob_same_game} and \ref{eq:prob_ind_game} is the cross terms $\left| \langle \sigma^\bad_{\Bs} | \Pi^B_b \otimes \Pi^C_b | \sigma^\bad_{\Cs} \rangle \right|$, for $b \in \{0,1\}$. Perhaps surprisingly, we prove that the cross terms are \textbf{zero}. To show it, we prove a corollary of Jordan's lemma (see \Cref{cor:orthogonal_eigen}) that for any \emph{two} projections $\Pi_0, \Pi_1$, let $\ket{\phi_p}$ be the set of eigenvectors for $(\Pi_0 + \Pi_1)/2$; if $p + q \ne 1$ and $p \ne q$, then their cross terms $\langle \phi_p | \Pi_0 | \phi_q \rangle = \langle \phi_p | \Pi_1 | \phi_q \rangle = 0$. 
Applying this corollary to \ref{eq:prob_same_simplified}, we can show that $\left| \langle \sigma^\bad_{\Bs} | \Pi^B_b \otimes \Pi^C_b | \sigma^\bad_{\Cs} \rangle \right| = 0$ for both $b \in \{0,1\}$.
Therefore, we conclude the security in the identical challenge mode.

\subsection{Related Work}

\paragraph{Unclonable Encryption.} Broadbent and Lord~\cite{broadbentUncloneableQuantumEncryption2020} demonstrated the feasibility of unclonable encryption satisfying the weaker unclonability property. They present two constructions. The first construction based on Wiesner states achieve $0.85^{n}$-security (i.e., the probability that both ${\cal B}$ and ${\cal C}$ simultaneously guess the message is at most $0.85^{n}$), where $n$ is the length of the message being encrypted. Their second construction, in the quantum random oracle model, achieves $\frac{9}{2^n} + \negl(\secparam)$-security. In the same work, they show that any construction satisfying $2^{-n}$-unclonability implies unclonable indistinguishability property. Following Broadbent and Lord, Ananth and Kaleoglu~\cite{ananth2021uncloneable} construct public-key and private-key unclonable encryption schemes from computational assumptions. Even~\cite{ananth2021uncloneable} only achieve unclonable encryption with the weaker unclonability guarantees.
\par Majenz, Schaffner and Tahmasbi~\cite{majenzLimitationsUncloneableEncryption2021} explore the difficulties in constructing unclonable encryption schemes. In particular, they show that any scheme achieving unclonable indistinguishability should have ciphertexts with large eigenvalues. Towards demonstrating a better bound for unclonability, they also showed inherent limitations in the proof technique of Broadbent and Lord. 

\paragraph{Copy-Protection.} Copy-protection was first introduced by Aaronson~\cite{Aar09}. Recently, Aaronson, Liu, Liu, Zhandry and Zhang~\cite{aaronsonnew} demonstrated the existence of copy-protection in the presence of classical oracles. Coladangelo, Majenz and Poremba~\cite{CMP20} showed that copy-protection for multi-bit output point functions exists in the quantum random oracle model. They also showed that copy-protection for single-bit output point functions exists in the quantum random oracle model with constant security. 
\par Ananth and La Placa~\cite{ALP21} showed a conditional result that copy-protection for arbitrary unlearnable functions, without the use of any oracles, does not exist. Recently, Coladangelo, Liu, Liu and Zhandry~\cite{coladangelo2021hidden}, assuming post-quantum indistinguishability obfuscation and one-way functions,  demonstrated the first feasibility of copy-protection for a non-trivial class of functions (namely, pseudorandom functions) in the plain model. Another recent work by Broadbent, Jeffrey, Lord, Podder and Sundaram~\cite{BJLPS21} studies copy-protection for a novel (but weaker) variant of copy-protection.

\section*{Acknowledgements}
Qipeng Liu is supported in part by the Simons Institute for the Theory of Computing, through a Quantum Postdoctoral Fellowship and by the DARPA SIEVE-VESPA grant Np.HR00112020023 and by the NSF QLCI program through grant number OMA-2016245. Any opinions, findings and conclusions or
recommendations expressed in this material are those of the author(s) and do not necessarily reflect the views of the United States
Government or DARPA.

Mark Zhandry is supported in part by an NSF CAREER award.
\qipeng{add more?}

\section{Preliminaries}
\label{sec:prelim}
\subsection{Basics}
We will briefly introduce some basic notations in our work and some preliminaries on quantum computing in this section.

\par We denote by $\secparam$ the security parameter. 
We write $\poly(\cdot)$ to denote an arbitrary polynomial and $\negl(\cdot)$ to denote an arbitrary negligible function. We say that an event happens with \textit{overwhelming probability} if the probability is at least $1 - \negl(\secparam)$.

Readers unfamiliar with quantum computation and quantum information could refer to~\cite{nielsen2002quantum} for a comprehensive introduction. 

\par Given Hilbert space $\Hs$, we write $\Ss(\Hs)$ for the unit sphere set $\{x :||x||_2=1\}$ in $\Hs$, $\mathcal{U}(\Hs)$ for the set of unitaries acting on Hilbert space $\Hs$, $\mathcal{D}(\Hs)$ for the set of density operators on $\Hs$. We write $\Hs_X$ to denote the Hilbert space associated with a quantum register $X$. Given two quantum states $\rho, \sigma$, we denote the (normalized) trace distance between them by \begin{align*}
    \tracedist{\rho}{\sigma} := \frac{1}{2}\left\|\rho - \sigma\right\|_{\mathsf{tr}}.
\end{align*} 

We say that two states $\rho, \sigma$ are \textit{$\delta$-close} if $\tracedist{\rho}{\sigma} \le \delta$.
\par A positive operator-valued measurement (POVM) on the Hilbert space $\mathcal{H}$ is defined as a set of positive semidefinite operators $\{E_i\}$ on $\Hs$ that satisfies $\sum_i E_i=I$. A projective measurement means the case that $E_i$s are projectors. 

A common technique in quantum computation is uncomputing~\cite{bennett1997strengths}. A quantum algorithm could be modeled as a unitary $U$ acting on some hilbert space $\Hs$, then perform measurement on output registers on without loss of generality. By uncomputation we mean that acting $U^\dag$ on the same hilbert space after the measurement. It is easy to examine that if the measurement outputs same result with overwhelming probability, the trace distance between the final state and the original state is negligible.

\paragraph{Quantum Oracle Algorithms}
A quantum oracle for a function $f$ is defined as the controlled unitary $O_f$: $O_f\ket{x}\ket{y}=\ket{x}\ket{y\oplus f(x)}$. We define a query to the quantum oracle as applying $O_f$ on the given quantum state once.

\par We say that a quantum adversary $\alice$ with access to oracle(s) is \textit{query-bounded} if it makes at most $p(\secparam)$ queries to each oracle for some polynomial $p(\cdot)$.

\subsection{Quantum Random Oracle Model (QROM)} This is the quantum analogue of Random Oracle Model, where we model a hash function $H$ as a random classical function, and it can be accessed by an adversary in superposition, modeled by the unitary $O_H$. \\

\par The following theorem, paraphrased from \cite{bennett1997strengths}, will be used for reprogramming oracles without adversarial detection on inputs which are not queried with large weight:

\begin{theorem}[\cite{bennett1997strengths}] \label{thm:bbbv}
 Let $\alice$ be an adversary with oracle access to $H: \bit^m \to \bit^n$ that makes at most $T$ queries. Define $\ket{\phi_i}$ as the global state after $\alice$ makes $i$ queries, and $W_y(\ket{\phi_i})$ as the sum of squared amplitudes in $\ket{\phi_i}$ of terms in which $\alice$ queries $H$ on input $y$. Let $\epsilon>0$ and let $F \subseteq [0,T-1] \times \bit^m$ be a set of time-string pairs such that $\sum_{(i,y) \in F} W_y(\ket{\phi_i}) \le \epsilon^2 / T $.
 \par Let $H'$ be an oracle obtained by reprogramming $H$ on inputs $(i,y) \in F$ to arbitrary outputs. Define $\ket{\phi_i'}$ as above for $H'$. Then, $\tracedist{\ket{\phi_T}}{\ket{\phi_{T}'}} \le \epsilon / 2$. \pnote{maybe we should use a different notation for trace distance? $T$ is used to denote both trace distance and the number of queries.. there is something strange here: fix $\epsilon$, let  $\tracedist{\ket{\phi_T}}{\ket{\phi_{T}'}} > \epsilon / 2$ and let $F=\emptyset$ then the theorem does not hold.} \fatih{Changed notation. If $F$ is empty, then the oracles are identical, so the trace distance is 0.}
\end{theorem}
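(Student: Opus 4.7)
The plan is to use a standard hybrid argument over the $T$ queries, combining a single-query error bound with Cauchy--Schwarz. First I would define a sequence of hybrid states $\ket{\psi_j}$ for $j=0,1,\ldots,T$, where $\ket{\psi_j}$ is the final state produced by running $\alice$ starting from the common initial state but answering the first $j$ queries using $H$ and the remaining $T-j$ queries using $H'$. Since $\ket{\psi_T}=\ket{\phi_T}$ and $\ket{\psi_0}=\ket{\phi_T'}$, the triangle inequality gives $\|\ket{\phi_T}-\ket{\phi_T'}\| \leq \sum_{i=0}^{T-1}\|\ket{\psi_{i+1}}-\ket{\psi_i}\|$.

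Second, I would bound each summand by a per-query lemma. Hybrids $i$ and $i+1$ differ only in whether the $(i+1)$-th query is answered by $H$ or $H'$; since the first $i$ queries of both hybrids use $H$ and all intermediate unitaries are identical, the state entering the $(i+1)$-th query in either hybrid is unitarily equivalent to $\ket{\phi_i}$, and in particular has the same query weights $W_y$. Setting $F_i := \{y : (i,y)\in F\}$ and $W_{F_i}(\ket{\phi_i}) := \sum_{y\in F_i} W_y(\ket{\phi_i})$, decompose $\ket{\phi_i} = \ket{\phi_i^{\mathrm{in}}} + \ket{\phi_i^{\mathrm{out}}}$ into its components on, and orthogonal to, the subspace of query-register values in $F_i$. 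Then $O_H$ and $O_{H'}$ act identically on $\ket{\phi_i^{\mathrm{out}}}$, while on $\ket{\phi_i^{\mathrm{in}}}$ the triangle inequality gives $\|(O_H-O_{H'})\ket{\phi_i^{\mathrm{in}}}\| \leq 2\|\ket{\phi_i^{\mathrm{in}}}\| = 2\sqrt{W_{F_i}(\ket{\phi_i})}$. Because the subsequent unitaries in the two hybrids agree and are norm-preserving, this also bounds $\|\ket{\psi_{i+1}}-\ket{\psi_i}\|$.

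Third, I would combine the $T$ bounds using Cauchy--Schwarz: $\sum_{i=0}^{T-1} \sqrt{W_{F_i}(\ket{\phi_i})} \leq \sqrt{T \cdot \sum_{(i,y)\in F} W_y(\ket{\phi_i})} \leq \sqrt{T \cdot \epsilon^2/T} = \epsilon$. Plugging in, $\|\ket{\phi_T}-\ket{\phi_T'}\| \leq 2\epsilon$, and converting Euclidean distance between pure states to the normalized trace distance used in the paper (via $\tracedist{\ket{\phi}}{\ket{\phi'}} = \sqrt{1-|\langle\phi|\phi'\rangle|^2}$ and the standard relation to $\|\ket{\phi}-\ket{\phi'}\|$) yields a bound of the stated form up to the constant factor.

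The argument is not conceptually hard; the main pitfall is bookkeeping in the hybrid setup, namely ensuring that the state whose query weight $W_y(\ket{\phi_i})$ appears in the hypothesis is indeed the state at which the $(i+1)$-th query is made in the hybrid whose neighbors I am comparing. This forces me to define the hybrids so that they agree on the first $i$ query-unitary blocks, after which the single-query bound above applies cleanly. A minor secondary issue is matching the stated constant $\epsilon/2$ against the $2\epsilon$ produced by the direct triangle-inequality estimate; this is entirely a matter of conventions for normalizing trace distance versus Euclidean distance on pure states, and does not affect the structure of the proof.
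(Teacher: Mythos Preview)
The paper does not prove this theorem; it is stated as a preliminary result paraphrased from \cite{bennett1997strengths} and used as a black box. Your proposal is exactly the standard hybrid argument from that reference, and it is correct: the key bookkeeping point you flag---that both neighbouring hybrids agree on the first $i$ oracle calls, so the state entering query $i+1$ really is $\ket{\phi_i}$ and carries the hypothesised weights---is handled correctly by your ordering of the hybrids (first $j$ queries with $H$, the rest with $H'$). The only loose end is the constant: your argument yields a Euclidean bound of $2\epsilon$ and hence a normalized trace-distance bound of at most $2\epsilon$, not the $\epsilon/2$ stated here, but this is a known discrepancy between how different sources quote BBBV and has no bearing on any application in the paper.
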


Note that the theorem can be straightforwardly generalized to mixed states by convexity.

\subsection{More on Jordan's lemma}
\label{sec:Jordan}

We first recall the following version of Jordan's lemma, adapted from~\cite{Regev05} and ~\cite{Vidick21}:  %
\begin{lemma} \fatih{Generalized this to weighted averages.}
    Let $\weight \in [0,1]$, $\Hs$ be a finite-dimensional Hilbert space and let $\Pi_0, \Pi_1$ be any two projectors in $\Hs$, then there exists an orthogonal decomposition of $\Hs$ into one-dimensional and two dimensional subspaces $\Hs = \oplus_i \Ss_i$ that are invariant under both $\Pi_0$ and $\Pi_1$; each $\Ss_i$ is spanned by one or two eigenvectors of $\weight\Pi_0 + (1 - \weight)\Pi_1$. 
    
    Whenever $\Ss_i$ is $2$-dimensional, there is a basis for it in which $\Pi_0$ and $\Pi_1$ (restricting on $\Ss_i$) take the form:
    \begin{align*}
        \Pi_{0, \Ss_i} = \begin{pmatrix} 1 & 0 \\ 0 & 0 \end{pmatrix}  \quad\quad\text{ and }\quad\quad   \Pi_{1, \Ss_i} = \begin{pmatrix} c_i^2 & c_i s_i \\ c_i s_i & s_i^2 \end{pmatrix},
    \end{align*}
    where $c_i = \cos \theta_i$ and $s_i = \sin \theta_i$ for some principal angle $\theta_i \in [0, \pi/2]$. 
\end{lemma}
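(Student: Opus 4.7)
The plan is to reduce to the classical (unweighted) Jordan's lemma: the decomposition into invariant subspaces depends only on the pair $(\Pi_0, \Pi_1)$, so the weight $\weight$ enters only in the choice of eigenbasis within each block.

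First, I would invoke the classical Jordan's lemma to obtain an orthogonal decomposition $\Hs = \bigoplus_i \Ss_i$ into one- and two-dimensional subspaces, each invariant under both $\Pi_0$ and $\Pi_1$. The standard route diagonalizes the self-adjoint operator $\Pi_0 \Pi_1 \Pi_0$ on $\mathrm{im}(\Pi_0)$: eigenvectors with eigenvalues in $(0,1)$ pair canonically with their images under $\Pi_1$ to form two-dimensional blocks, while eigenvectors with eigenvalues $0$ or $1$ (together with analogous vectors in $\ker(\Pi_0)$) contribute the one-dimensional blocks.

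Second, on each two-dimensional $\Ss_i$ I would derive the claimed matrix form. Since $\Ss_i$ is minimal invariant, both restrictions $\Pi_0|_{\Ss_i}$ and $\Pi_1|_{\Ss_i}$ must be rank-one (rank $0$ or $2$ would further split the block into one-dimensional pieces). I would pick an orthonormal basis of $\Ss_i$ so that $\Pi_{0, \Ss_i} = |0\rangle\langle 0|$, write the unit vector defining $\Pi_{1, \Ss_i}$ as $c_i |0\rangle + s_i e^{i\phi_i}|1\rangle$, and absorb the phase $e^{i\phi_i}$ by rotating $|1\rangle$. This yields the stated matrix representation with $c_i, s_i \ge 0$, and parametrizing $c_i = \cos\theta_i$, $s_i = \sin\theta_i$ for $\theta_i \in [0, \pi/2]$ identifies $\theta_i$ as the principal angle.

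Third, I would argue that each $\Ss_i$ admits an eigenbasis for $M := \weight \Pi_0 + (1-\weight)\Pi_1$. Invariance of $\Ss_i$ under both projectors implies that $M$ preserves $\Ss_i$, so $M|_{\Ss_i}$ is self-adjoint on a space of dimension at most two and hence admits an eigenbasis within $\Ss_i$. These vectors are automatically eigenvectors of the full operator $M$ on $\Hs$ with the same eigenvalues, so each $\Ss_i$ is spanned by one or two eigenvectors of $\weight \Pi_0 + (1-\weight)\Pi_1$ as required.

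The main obstacle is really Step 1, the classical Jordan decomposition with its careful pairing argument; beyond that, the weighted refinement is essentially cosmetic. In particular, the invariant subspaces are intrinsic to $(\Pi_0, \Pi_1)$ and do not depend on $\weight$ at all, while within each two-dimensional block $M|_{\Ss_i}$ is just a $2 \times 2$ Hermitian matrix that diagonalizes automatically. The proof can therefore be stated as a direct corollary of the standard Jordan's lemma plus elementary linear algebra on each block.
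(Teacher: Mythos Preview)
Your proposal is correct and follows essentially the same approach as the paper: the paper's proof simply cites the standard Jordan's lemma (for $\weight = 1/2$) from \cite{Regev05,Vidick21} and remarks that the generalization to arbitrary $\weight$ is straightforward. Your write-up spells out exactly that straightforward generalization, noting that the invariant decomposition depends only on $(\Pi_0,\Pi_1)$ and that $M|_{\Ss_i}$ is a self-adjoint operator on each block, which is precisely the content the paper leaves implicit.
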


\begin{proof}
The proof for the case $\weight = 1/2$ can be found in the references above, and the generalization is straightforward. 
\end{proof}

We additionally show a relation between two eigenvalues in the same Jordan block. 
\begin{lemma} \label{lem:eigenvalues}
    For any two projectors $\Pi_0, \Pi_1$, let $\Ss_i$ be a $2$-dimensional subspace in the above decomposition. Let $\ket {\phi_0}, \ket{\phi_1}$ be two eigenvectors of $\weight\Pi_0 + (1 - \weight)\Pi_1$ that span $\Ss_i$ and $\lambda_0, \lambda_1$ be their eigenvalues. We have $\lambda_0 + \lambda_1 = 1$. 
\end{lemma}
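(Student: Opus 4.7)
The plan is to read off the claim directly from the explicit $2 \times 2$ matrix forms of $\Pi_0$ and $\Pi_1$ on $\Ss_i$ guaranteed by Jordan's lemma, and use the fact that the trace of a Hermitian operator equals the sum of its eigenvalues.

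Specifically, I would first invoke the preceding Jordan decomposition lemma to fix a basis of $\Ss_i$ in which
\[
\Pi_{0,\Ss_i} = \begin{pmatrix} 1 & 0 \\ 0 & 0 \end{pmatrix}, \qquad \Pi_{1,\Ss_i} = \begin{pmatrix} c_i^2 & c_i s_i \\ c_i s_i & s_i^2 \end{pmatrix},
\]
with $c_i^2 + s_i^2 = 1$. Then I would compute the restriction of $\weight \Pi_0 + (1-\weight)\Pi_1$ to $\Ss_i$ and take its trace, obtaining
\[
\operatorname{Tr}\bigl(\weight \Pi_{0,\Ss_i} + (1-\weight)\Pi_{1,\Ss_i}\bigr) = \weight \cdot 1 + (1-\weight)(c_i^2 + s_i^2) = \weight + (1-\weight) = 1.
\]
Since $\ket{\phi_0}, \ket{\phi_1}$ form an orthonormal eigenbasis of $\weight \Pi_0 + (1-\weight) \Pi_1$ on the invariant subspace $\Ss_i$, the trace of the restriction equals $\lambda_0 + \lambda_1$, yielding $\lambda_0 + \lambda_1 = 1$.

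There is essentially no obstacle here: the statement is a one-line consequence of the canonical form already packaged into the previous lemma, and the only thing to note is that the invariance of $\Ss_i$ under both $\Pi_0$ and $\Pi_1$ (again from the previous lemma) ensures that the eigenvalues of the restricted operator are exactly $\lambda_0, \lambda_1$, so no stray contribution from outside $\Ss_i$ enters the trace computation.
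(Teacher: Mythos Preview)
Your proof is correct and follows essentially the same approach as the paper: restrict to $\Ss_i$, take the trace of the weighted combination, and use $c_i^2 + s_i^2 = 1$. In fact you handle the general weight $\weight$ explicitly, whereas the paper's written proof computes the trace only for $\weight = 1/2$ (writing $\Tr[(\Pi_{0,\Ss_i} + \Pi_{1,\Ss_i})/2] = (1 + c_i^2 + s_i^2)/2 = 1$), so your version is slightly more complete.
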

\begin{proof}
    Restricting on $\Ss_i$, we have: 
    \begin{align*}
        \lambda_0 + \lambda_1 = \Tr\left[(\Pi_{0, \Ss_i} +\Pi_{1, \Ss_i})/2\right] = (1 + c_i^2 + s_i^2)/2 = 1. 
    \end{align*}
\end{proof}

\begin{corollary} \label{cor:orthogonal_eigen}
    For any two projectors $\Pi_0, \Pi_1$, let $\ket{\phi_0}$ and $\ket{\phi_1}$ be two eigenvectors of $\weight\Pi_0 + (1 - \weight)\Pi_1$ with eigenvalues $\lambda_0, \lambda_1$. If $\lambda_0 + \lambda_1 \ne 1$ and $\lambda_0 \ne \lambda_1$, then 
    \begin{align*}
        \langle \phi_0 |\Pi_0|\phi_1 \rangle = \langle \phi_0 |\Pi_1|\phi_1 \rangle = 0. 
    \end{align*}
\end{corollary}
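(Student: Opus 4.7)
The plan is to derive the corollary directly from Jordan's lemma together with Lemma~\ref{lem:eigenvalues}. The strategy is to argue that under the stated hypothesis on the eigenvalues, $\ket{\phi_0}$ and $\ket{\phi_1}$ must lie in \emph{different} Jordan subspaces $\Ss_i$, and then invoke invariance to conclude that $\Pi_0$ and $\Pi_1$ cannot produce any overlap between them.

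First, I would apply Jordan's lemma to $\Pi_0$ and $\Pi_1$ to obtain the orthogonal decomposition $\Hs = \bigoplus_i \Ss_i$, where each $\Ss_i$ is invariant under both projectors and spanned by eigenvectors of $\weight \Pi_0 + (1-\weight)\Pi_1$. Since any eigenvector of $\weight\Pi_0 + (1-\weight)\Pi_1$ can, after appropriate regrouping within eigenspaces, be taken to lie inside a single $\Ss_i$, it suffices to consider $\ket{\phi_0} \in \Ss_{i_0}$ and $\ket{\phi_1} \in \Ss_{i_1}$ for some indices $i_0, i_1$.

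Next, I would do case analysis on whether $i_0 = i_1$. If $\Ss_{i_0} = \Ss_{i_1}$ is $1$-dimensional, then $\ket{\phi_0}$ and $\ket{\phi_1}$ are proportional, forcing $\lambda_0 = \lambda_1$, contradicting the hypothesis. If $\Ss_{i_0} = \Ss_{i_1}$ is $2$-dimensional, then Lemma~\ref{lem:eigenvalues} gives $\lambda_0 + \lambda_1 = 1$, again contradicting the hypothesis. Hence $i_0 \neq i_1$, so $\Ss_{i_0} \perp \Ss_{i_1}$. By invariance of $\Ss_{i_1}$ under $\Pi_0$ and $\Pi_1$, both $\Pi_0 \ket{\phi_1}$ and $\Pi_1 \ket{\phi_1}$ remain in $\Ss_{i_1}$, which is orthogonal to $\Ss_{i_0} \ni \ket{\phi_0}$. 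Therefore $\langle \phi_0 | \Pi_0 | \phi_1\rangle = \langle \phi_0 | \Pi_1 | \phi_1\rangle = 0$, as desired.

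I do not anticipate a real obstacle here: the corollary is essentially a bookkeeping consequence of Jordan's lemma plus the eigenvalue-sum identity of Lemma~\ref{lem:eigenvalues}. The only subtle point is the initial reduction to the case where each $\ket{\phi_i}$ sits inside a single Jordan subspace; this is handled by noting that eigenspaces of $\weight\Pi_0 + (1-\weight)\Pi_1$ restricted to any $\Ss_i$ either match the whole $\Ss_i$ (in the $1$-dim case) or split as the two principal-angle eigendirections (in the $2$-dim case), so we may always choose a basis of eigenvectors respecting the Jordan decomposition.
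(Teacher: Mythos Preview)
Your proposal is correct and follows essentially the same route as the paper's proof: use Jordan's lemma together with Lemma~\ref{lem:eigenvalues} to force $\ket{\phi_0}$ and $\ket{\phi_1}$ into distinct Jordan subspaces, then conclude via invariance. The paper's argument is terser (it jumps straight from $\lambda_0+\lambda_1\ne 1$ to ``not in the same Jordan block'' and then to orthogonality), whereas you spell out the 1-dimensional versus 2-dimensional case split and flag the degeneracy issue; your version is in fact slightly more careful than the paper's own proof on that point.
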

\begin{proof}
    If $\lambda_0 + \lambda_1 \ne 1$, by \Cref{lem:eigenvalues}, $\ket {\phi_0}$ and $\ket {\phi_1}$ cannot be in the same Jordan block. Because $\ket{\phi_0}$ still belongs to the corresponding subspace $\Ss_0$ of its Jordan block after the action of $\Pi_0$, $\Pi_0 \ket {\phi_0}$ is orthogonal to $\ket {\phi_1}$. Similarly, $\Pi_1 \ket {\phi_0}$ is orthogonal to $\ket {\phi_1}$. 
\end{proof}

\fatih{ This particular fact is false for more than two projectors. Here is a counter-example:

\begin{align*}
    \Pi_1 = \ketbra{0}{0} \otimes I = \begin{pmatrix}
        1 & 0 & 0 & 0 \\
        0 & 1 & 0 & 0 \\
        0 & 0 & 0 & 0 \\
        0 & 0 & 0 & 0
    \end{pmatrix}
    , \quad \Pi_2 = I \otimes \ketbra{0}{0} = \begin{pmatrix}
        1 & 0 & 0 & 0 \\
        0 & 0 & 0 & 0 \\
        0 & 0 & 1 & 0 \\
        0 & 0 & 0 & 0
    \end{pmatrix}, \\
    \Pi_3 = \ketbra{+0}{+0} = \begin{pmatrix}
        1/2 & 0 & 1/2 & 0 \\
        0 & 0 & 0 & 0 \\
        1/2 & 0 & 1/2 & 0 \\
        0 & 0 & 0 & 0
    \end{pmatrix}, \quad \Pi = \frac{\Pi_1 + \Pi_2 + \Pi_3}{3} = \begin{pmatrix}
        5/6 & 0 & 1/6 & 0 \\
        0 & 1/3 & 0 & 0 \\
        1/6 & 0 & 1/2 & 0 \\
        0 & 0 & 0 & 0
    \end{pmatrix}.
\end{align*}

Two eigenvectors of $\Pi$ are given by \begin{align*}
    \ket{v_1} = \begin{pmatrix}
    1 + \sqrt{2} \\
    0 \\
    1 \\
    0
    \end{pmatrix}, \quad \ket{v_2} = \begin{pmatrix}
    1 - \sqrt{2} \\
    0 \\
    1 \\
    0
    \end{pmatrix},
\end{align*}
with eigenvalues $\lambda_1 = \frac{4 + \sqrt{2}}{6}$ and $\lambda_2 = \frac{4 - \sqrt{2}}{6}$. Observe that $\braket{v_1 | \Pi_1 | v_2} \ne 0$ and $\lambda_1 + \lambda_2 = 4/3 \ne 1$.

}

\subsection{Measuring Success Probability} \label{sec:measureprobability}
\fullversion{\subversion{\text{Here we discuss the background for the theorems in \Cref{sec:measureprobability}.}}
In this section, we give preliminaries on how to measure success probability of quantum programs (with respect to a test distribution). 
Part of this section is taken verbatim from \cite{aaronsonnew,coladangelo2021hidden}. Since this section will only be used for proving the strengthened monogamy-of-entanglement game of coset states in the quantum random oracle model (see \Cref{sec:moe_coset}), the reader can safely skip it to view our construction first, and return to this section when understanding the proof of the strengthened MOE game. %

\medskip

In classical cryptography, we are often interested in the success probability of a given program with respect to a test distribution. Assume that the test distribution is known to everyone and can be efficiently sampled, one can efficiently estimate the success probability of a given program within any inverse polynomial error. The estimating algorithm is fairly simple: just run the programs multiple times and output how many times the program succeeds. 
However, this method does not quite work when quantum programs are taken into account. One crucial reason is that the estimation algorithm only gets a single copy of the program. It is in general impossible to run the program multiple times without rewinding. However, rewinding a quantum program appears to be one of the difficulties in quantum cryptography. We refer the reader to \cite{z20} for a more in-depth discussion.

\paragraph{Measure Probability.}\  
In \cite{z20}, Zhandry formalizes a measurement operator for estimating the success probability of a quantum program. This operator is inefficient to implement, but Zhandry also shows how to efficiently estimate the probability with large statistical confidence in the same work (following the idea in QMA amplification \cite{marriott2005quantum}). We will discuss the efficient measurement procedure later in this section.

The starting point is that a binary POVM specifies the probability distribution over outcomes $\{0,1\}$ (``success'' or ``failure'') on any quantum program, but it does not uniquely determine the post-measurement state. Zhandry shows that, for any binary POVM $\cP = (P, I-P)$, there exists a nice projective measurement such that the post-measurement state is an eigenvector of $P$. In particular, Zhandry observes that there exists a projective measurement $\cE$ which \emph{measures} the success probability of a state with respect to $\cP$. More precisely,
\begin{itemize}
    \item $\cE$ outputs a  probability $p \in [0,1]$ from the set of eigenvalues of $P$. (We stress that $\cE$ actually outputs a real number $p$). %
    \item The post-measurement state upon obtaining outcome $p$ is an \emph{eigenvector} of $P$ with eigenvalue $p$; it is also an eigenvector of $Q = I-P$ with eigenvalue $1-p$. 
\end{itemize}

Note that since $\cE$ is projective, we are guaranteed that applying the same measurement again on the leftover state will yield the same outcome. Thus, what we obtain from applying $\cE$ is a state with a ``well-defined'' success probability with respect to $\cP$.

Furthermore, $\cE$ is compatible with $\cP$. In other words, one can safely measure the success probability of a program without disturbing the overall success probability.
We now give the formal theorem statement. 
}
\subversion{In this section we list theorems about simultaneously approximating the eigenvalues of a bipartite quantum program which are crucial tools in our security proofs.}
\begin{theorem}[Inefficient Measurement] \label{thm:inefficient_measure}
    Let $\cP = (P, Q)$ be a binary outcome POVM. 
    Let $\cD$ be the set of eigenvalues of $P$. 
    There exists a projective measurement $\cE = \{E_p\}_{p \in \cD}$ with index set $\cD$ that satisfies the following: for every quantum state $\rho$, let $\rho_p$ be the sub-normalized post-measurement state obtained after measuring $\rho$ with respect to $E_p$. That is, $\rho_p= E_p \rho E_p$. We have, 
    \begin{itemize}
        \item[(1)]  For every $p \in \cD$, $\rho_p$ is an eigenvector of $P$ with eigenvalue $p$;
        \item[(2)]  The probability of $\rho$ when measured with respect to $P$ is $\Tr[P \rho] = \sum_{p \in \cD} \Tr[P \rho_p]$.
    \end{itemize}
\end{theorem}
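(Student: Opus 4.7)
The plan is to define $\cE$ via the spectral decomposition of $P$. Since $(P, I-P)$ is a binary POVM on a finite-dimensional Hilbert space, $P$ is Hermitian with $0 \le P \le I$, so I can write $P = \sum_{p \in \cD} p \, \Pi_p$, where $\Pi_p$ is the orthogonal projector onto the $p$-eigenspace, the projectors $\{\Pi_p\}_{p \in \cD}$ are mutually orthogonal, and $\sum_{p \in \cD} \Pi_p = I$. I would then set $E_p := \Pi_p$, so that $\cE = \{E_p\}_{p \in \cD}$ is a valid projective measurement by construction.

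To establish property (1), I would observe that $\rho_p = E_p \rho E_p = \Pi_p \rho \Pi_p$ is supported on the $p$-eigenspace of $P$; concretely, $P \rho_p = (P \Pi_p) \rho \Pi_p = p \, \Pi_p \rho \Pi_p = p \, \rho_p$, which is the precise interpretation of ``$\rho_p$ is an eigenvector of $P$ with eigenvalue $p$'' in the sub-normalized density-matrix setting. The analogous identity with $Q = I-P$ yields eigenvalue $1-p$ for free, and also tells us that a subsequent application of $\cE$ would deterministically reproduce the same outcome $p$, matching the informal claim in the surrounding text that $\cE$ endows the state with a well-defined success probability.

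For property (2), the key computation uses cyclicity of the trace together with $\Pi_p^2 = \Pi_p$:
\[
\sum_{p \in \cD} \Tr[P \rho_p] \;=\; \sum_{p \in \cD} \Tr[P \, \Pi_p \rho \Pi_p] \;=\; \sum_{p \in \cD} p \, \Tr[\Pi_p \rho \Pi_p] \;=\; \sum_{p \in \cD} p \, \Tr[\Pi_p \rho] \;=\; \Tr[P \rho],
\]
where the final step substitutes $P = \sum_p p \Pi_p$ back in. Since this is a direct consequence of the finite-dimensional spectral theorem applied to the Hermitian operator $P$, there is no real obstacle; the only sanity check I would perform is to verify the convention that ``eigenvector with eigenvalue $p$'' for a (sub-normalized) density matrix $\rho_p$ means $P \rho_p = p \rho_p$, which is the natural reading consistent with how $\cE$ is used downstream in \Cref{sec:moe_coset}.
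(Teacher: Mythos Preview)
Your proposal is correct and matches the paper's own argument: the paper also takes $\cE$ to be the measurement in the common eigenbasis of $P$ and $Q=I-P$, i.e., the spectral decomposition of $P$. Your treatment is in fact slightly more careful than the paper's sketch, since you work with the eigenspace projectors $\Pi_p$ directly and thereby handle degenerate eigenvalues without the paper's ``assume $\{\lambda_i\}$ has no duplicated eigenvalues'' simplification.
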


A measurement $\cE$ which satisfies these properties is the measurement in the common eigenbasis of $P$ and $Q=I-P$ (due to simultaneous diagonalization theorem, such common eigenbasis exists since $P$ and $Q$ commute). %
Let $P$ have eigenbasis $\{\ket {\psi_i}\}$ with eigenvalues $\{\lambda_i\}$. 
Without loss of generality, let us assume $\rho$ is a pure state $\ket{\psi}\bra{\psi}$ and $\{\lambda_i\}$ has no duplicated eigenvalues. %
We write $\ket{\psi}$ in the eigenbasis of $P$: $\ket{\psi} = \sum_{i} \alpha_i \ket{\psi_i}$. Applying $\cE$ will result in an outcome $\lambda_i$ and a leftover state $\ket {\psi_i}$ with probability $|\alpha_i|^2$.

Looking ahead, we will write a quantum program under the eigenbasis of $P$ in the proof of the strengthened MOE game. 

\medskip

\begin{theorem}[Inefficient Threshold Measurement] \label{thm:inefficient_threshold_measure}
    Let $\cP = (P, Q)$ be a binary outcome POVM. Let $P$ have eigenbasis $\{\ket {\psi_i}\}$ with eigenvalues $\{\lambda_i\}$. Then, for every $\gamma \in (0,1)$ there exists a projective measurement $\cE_{\gamma} = (E_{\leq \gamma}, E_{> \gamma})$ such that:
    \begin{itemize}
        \item[(1)] ${E}_{\leq \gamma}$ projects a quantum state into the subspace spanned by $\{\ket{\psi_i}\}$ whose eigenvalues $\lambda_i$ satisfy $\lambda_i \leq \gamma$; 
        \item[(2)] ${E}_{> \gamma}$ projects a quantum state into the subspace spanned by $\{\ket{\psi_i}\}$ whose eigenvalues $\lambda_i$ satisfy $\lambda_i > \gamma$.
    \end{itemize}
    
    Similarly, for every $\gamma \in (0, 1/2)$, there exists a projective measurement ${\cE'}_\gamma = (\widetilde{E}_{\leq \gamma}, \widetilde{E}_{> \gamma})$ such that:
    \begin{itemize}
        \item[(1)] $\widetilde{E}_{\leq \gamma}$ projects a quantum state into the subspace spanned by $\{\ket{\psi_i}\}$ whose eigenvalues $\lambda_i$ satisfy $|\lambda_i - \frac{1}{2}| \leq \gamma$; 
        \item[(2)] $\widetilde{E}_{> \gamma}$ projects a quantum state into the subspace spanned by $\{\ket{\psi_i}\}$ whose eigenvalues $\lambda_i$ satisfy $|\lambda_i - \frac{1}{2}| > \gamma$.
    \end{itemize}
\end{theorem}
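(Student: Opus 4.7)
The plan is to realize the desired projective measurements directly from the spectral decomposition of $P$. Since $\cP = (P, Q)$ is a binary POVM with $P + Q = I$ and $P, Q$ positive semidefinite, $P$ is Hermitian with spectrum contained in $[0,1]$, and $Q = I - P$ shares the same eigenbasis with eigenvalues $\{1-\lambda_i\}$. Write $P = \sum_i \lambda_i \ket{\psi_i}\bra{\psi_i}$ in an orthonormal eigenbasis (grouping repeated eigenvalues into eigenspace projectors $\Pi_\lambda$ if needed, so that the construction does not depend on a choice of basis within a given eigenspace).

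For the first part, I would simply set
\begin{align*}
E_{\le \gamma} := \sum_{i : \lambda_i \le \gamma} \ket{\psi_i}\bra{\psi_i}, \qquad E_{> \gamma} := \sum_{i : \lambda_i > \gamma} \ket{\psi_i}\bra{\psi_i}.
\end{align*}
Each is an orthogonal projector as a sum of rank-one projectors onto mutually orthogonal eigenvectors, and $E_{\le \gamma} + E_{> \gamma} = \sum_i \ket{\psi_i}\bra{\psi_i} = I$ because the eigenbasis is complete and every eigenvalue satisfies exactly one of $\lambda_i \le \gamma$ or $\lambda_i > \gamma$. By construction, the image of $E_{\le \gamma}$ is precisely the span of $\{\ket{\psi_i} : \lambda_i \le \gamma\}$, and similarly for $E_{> \gamma}$, verifying properties (1) and (2). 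Thus $\cE_\gamma = (E_{\le \gamma}, E_{> \gamma})$ is a valid two-outcome projective measurement.

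For the second part, the construction is identical but with the partition of eigenvalues replaced by the partition according to distance from $1/2$: define
\begin{align*}
\widetilde{E}_{\le \gamma} := \sum_{i : |\lambda_i - 1/2| \le \gamma} \ket{\psi_i}\bra{\psi_i}, \qquad \widetilde{E}_{> \gamma} := \sum_{i : |\lambda_i - 1/2| > \gamma} \ket{\psi_i}\bra{\psi_i}.
\end{align*}
Again these are orthogonal projectors summing to $I$, and they project onto the two claimed subspaces. The restriction $\gamma \in (0, 1/2)$ merely reflects the fact that beyond this range the condition $|\lambda_i - 1/2| \le \gamma$ would be vacuous on one side (since $\lambda_i \in [0,1]$), but the construction still goes through.

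There is no real obstacle here: the theorem is essentially a direct restatement of the spectral theorem applied to $P$, bundled with a coarsening of the spectral resolution according to a threshold predicate on eigenvalues. The only mild subtlety is handling degenerate eigenvalues, which I would address by phrasing everything in terms of eigenspace projectors rather than picking individual eigenvectors, so that $E_{\le \gamma}$ and $\widetilde{E}_{\le \gamma}$ are uniquely defined regardless of how one decomposes a degenerate eigenspace into basis vectors.
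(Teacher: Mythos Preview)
Your proposal is correct and matches the paper's approach: the paper likewise constructs $\cE_\gamma$ and $\cE_\gamma'$ by summing the eigenspace projectors $E_{\lambda_i}$ from the preceding spectral measurement $\cE$ according to the threshold predicate, e.g.\ $\widetilde{E}_{\le\gamma}=\sum_{i:|\lambda_i-1/2|\le\gamma}E_{\lambda_i}$. There is nothing more to it.
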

It is easy to see how to construct $\cE_\gamma, \cE_\gamma'$ from $\cE$, e.g. by setting $\widetilde{E}_{\leq \gamma} = \sum_{i: |\lambda_i - 1/2| \leq \gamma} E_{\lambda_i}$. Note that for any quantum state $\rho$, $\Tr[\widetilde{E}_{> \gamma} \rho]$ is the weight over eigenvectors with eigenvalues $\lambda$ that are $\gamma$ away from $1/2$.

\fullversion{\paragraph{Efficient Measurement.}
The projective measurement $\cE$ above is not efficiently computable in general. 
However, they can be approximated if the POVM is a mixture of projective measurements, as shown by Zhandry \cite{z20}, using a technique first introduced by Marriott and Watrous \cite{marriott2005quantum}. 

Consider the following procedure as a binary POVM $\cP = (P, Q)$ acting on a quantum program $\rho$: samples a random challenge $r$, evaluates the program on $r$, and checks if the output is correct. 
This procedure can be viewed as (1). picking a uniformly random challenge $r$; (2). applying a \emph{projective measurement} $U_r$. In this case, $P = \frac{1}{R} \sum_r U_r$ where $R$ is the size of the challenge space. 
This POVM captures the situation where a challenger randomly samples a classical challenge and tests if a quantum program's classical outcome is correct on that challenge. 

}

Below, we give the formal theorem statement about efficient approximated threshold measurement, which is adapted from Theorem 6.2 in \cite{z20} and Lemma 3 in \cite{aaronsonnew}.

\begin{theorem}[Efficient Threshold Measurement] \label{thm:thres_approx_asymmetric}
    Let $\cP_b = (P_b,Q_b)$ be a binary outcome POVM over Hilbert space $\Hs_b$ that is a mixture of projective measurements for $b \in \{1,2\}$. Let $P_b$ have eigenbasis $\{\ket {\psi_i^b}\}$ with eigenvalues $\{\lambda_i^b\}$. For every $\gamma_1, \gamma_2 \in (0, 1), 0 < \epsilon < \min(\gamma_1/2, \gamma_2/2, 1 - \gamma_1, 1 - \gamma_2)$ and $\delta > 0$, there exist efficient binary-outcome quantum algorithms, interpreted as the POVM element corresponding to outcome 1, $\ati_{\cP_b, \gamma}^{\epsilon, \delta}$ such that for every quantum program $\rho \in \density{1} \otimes \density{2}$ the following are true about the product algorithm $\ati_{\cP_1, \gamma_1}^{\epsilon, \delta} \otimes \ati_{\cP_2, \gamma_2}^{\epsilon, \delta}$:
    \begin{itemize}
        \item[(0)] Let $ (E^b_{\leq \gamma}, E^b_{> \gamma})$ be the inefficient threshold measurement in \Cref{thm:inefficient_threshold_measure} for $\Hs_b$. 
        \item[(1)] The probability of measuring 1 on both registers satisfies $$\Tr\bracketsS{\brackets{ \ati_{\cP_1, \gamma_1}^{\epsilon, \delta} \otimes \ati_{\cP_2, \gamma_2}^{\epsilon, \delta} } \rho } \ge \Tr\bracketsS{ \brackets{ E^1_{> \gamma_1 + \epsilon} \otimes E^2_{> \gamma_2 + \epsilon} } \cdot \rho} - 2\delta.$$
        \item[(2)] The post-measurement state $\rho'$ after getting outcome (1,1) is $4\delta$-close to a state in the support of $\bracketsC{\ket{\psi^1_i}\ket{\psi^2_j}}$ such that $\lambda_i^1 > \gamma_1 - 2\epsilon$ and $\lambda^2_j > \gamma_2 - 2\epsilon$. 
        
        \item[(3)] The running time of the algorithm is polynomial in the running time of $P_1, P_2$, ${1}/{\epsilon}$ and $\log(1/\delta)$. 
    \end{itemize}
\end{theorem}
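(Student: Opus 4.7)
I would take $\ati_b := \ati_{\cP_b,\gamma_b}^{\epsilon,\delta}$ to be precisely the single-register approximate threshold implementation of Zhandry (Theorem~6.2 of \cite{z20}) and of Aaronson--Liu--Liu--Zhandry--Zhang (Lemma~3 of \cite{aaronsonnew}) applied to $\cP_b$ on $\Hs_b$. The inherited single-register guarantees are: for every $\sigma \in \density{b}$, (i) $\Tr[\ati_b\, \sigma] \geq \Tr[E^b_{>\gamma_b+\epsilon}\,\sigma] - \delta$; (ii) the post-measurement state conditional on outcome $1$ is $2\delta$-close in trace distance to a state supported on $\mathrm{span}\{\ket{\psi^b_i} : \lambda^b_i > \gamma_b - 2\epsilon\}$; and (iii) the runtime is polynomial in the runtime of $P_b$, $1/\epsilon$, and $\log(1/\delta)$. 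Part~(3) is then immediate, since $\ati_1$ and $\ati_2$ act on disjoint registers and can be run in parallel with combined runtime still polynomial in the stated parameters.

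For part~(1), I plan to exploit the commutativity of $\ati_1 \otimes I$ and $I \otimes \ati_2$ and analyze them as a sequential measurement. Define $\sigma_1 := \Tr_2[(I \otimes \ati_2)\rho]$, a sub-normalized density operator on $\Hs_1$ with $\Tr[\sigma_1] = \Tr[(I \otimes \ati_2)\rho] \leq 1$ and $\Tr[(\ati_1 \otimes \ati_2)\rho] = \Tr[\ati_1\,\sigma_1]$. Applying (i) to $\sigma_1/\Tr[\sigma_1]$ and rescaling yields $\Tr[\ati_1\,\sigma_1] \geq \Tr[E^1_{>\gamma_1+\epsilon}\,\sigma_1] - \delta\,\Tr[\sigma_1] \geq \Tr[(E^1_{>\gamma_1+\epsilon} \otimes \ati_2)\rho] - \delta$. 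A symmetric argument on register~$2$, applied now to the operator $E^1_{>\gamma_1+\epsilon} \otimes \ati_2$, yields an additional loss of at most $\delta$, so chaining produces the claimed $2\delta$ bound.

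For part~(2), I would sequence the two conditional measurements and invoke the post-measurement guarantee (ii) on each register. After $I \otimes \ati_2$ outputs $1$, the normalized joint state is, by (ii) on register~$2$, $2\delta$-close in trace distance to a state $\tilde{\rho}$ supported on $\Hs_1 \otimes \mathrm{span}\{\ket{\psi^2_j} : \lambda^2_j > \gamma_2 - 2\epsilon\}$. Because $\ati_1$ is implemented by a quantum algorithm acting only on register~$1$, its outcome-$1$ branch is a CP map on register~$1$ that trivially preserves the register-$2$ support; since trace distance is non-increasing under CP maps (after renormalization on success), the $2\delta$ proximity survives the application of $\ati_1$. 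A second invocation of (ii), this time on register~$1$, further displaces the state by at most $2\delta$ from the good-eigenvector subspace on register~$1$. The triangle inequality then delivers the $4\delta$ closeness to a state supported on $\{\ket{\psi^1_i}\ket{\psi^2_j} : \lambda^1_i > \gamma_1 - 2\epsilon,\ \lambda^2_j > \gamma_2 - 2\epsilon\}$.

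The main obstacle I expect is verifying that the ``good support'' property established on one register is not disturbed while the other register is being measured, and that the two single-register errors compose additively rather than multiplicatively. Both points reduce to monotonicity of trace distance under CP maps together with the fact that the two algorithms act on disjoint (hence commuting) registers; no genuinely new quantum information-theoretic ingredient beyond the single-register theorems of Zhandry and of Aaronson et al.\ should be required.
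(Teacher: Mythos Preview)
The paper does not actually prove this theorem: it is stated in the preliminaries as ``adapted from Theorem~6.2 in \cite{z20} and Lemma~3 in \cite{aaronsonnew}'' and no proof is given. So there is no ``paper's own proof'' to compare against; your task is really to justify the bipartite statement from the cited single-register results, and your plan is the natural one.

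Parts~(1) and~(3) are fine as written. For part~(2), one step deserves more care. You invoke the single-register guarantee~(ii) on register~$2$ and immediately conclude that the \emph{bipartite} post-measurement state is $2\delta$-close to a state supported on $\Hs_1\otimes\mathrm{span}\{\ket{\psi^2_j}:\lambda^2_j>\gamma_2-2\epsilon\}$. But (ii) as you stated it only controls the \emph{reduced} state on register~$2$: it says $\Tr_1$ of the post-measurement state is $2\delta$-close to the good subspace. Passing from ``reduced state is $2\delta$-close to a projector's range'' to ``bipartite state is $2\delta$-close to $I\otimes\Pi_{\mathrm{good}}$'' via a gentle-measurement argument would cost you a square root, giving $O(\sqrt{\delta})$ rather than $2\delta$.

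The fix is not hard, but it requires using more than the black-box statement~(ii): the $\ati$ procedure of \cite{z20,aaronsonnew} is block-diagonal with respect to the spectral decomposition of $P_b$ (it is built from alternating the two projections of the Marriott--Watrous/Jordan analysis), so it commutes with each eigenprojector $\ketbra{\psi^b_i}{\psi^b_i}$. This means that on a bipartite state expanded in the $\{\ket{\psi^b_i}\}$ basis on register~$b$, the $\ati$ acts independently within each eigen-sector, and the single-register error analysis lifts to the joint state without loss. This is in fact how \cite{aaronsonnew} obtains the bipartite statement, so once you appeal to that structural property your additive-error composition goes through as claimed.
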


Intuitively the theorem says that if a quantum state $\rho$ has weight $p$ on eigenvectors of $(P_1, P_2)$ with eigenvalues greater than $(\gamma_1 + \epsilon,\gamma_2 + \epsilon)$\pnote{should this be $(\gamma_1 - \epsilon,\gamma_2-\epsilon)$?}\fatih{I fixed the statement and adjusted the proofs accordingly.}, then the quantum algorithm will produce (with probability at least $p - 2\delta$) a post-measurement state which has weight $1 - 4\delta$ on eigenvectors with eigenvalues greater than $(\gamma_1 - 2\epsilon, \gamma_2 - 2\epsilon)$.

\par In this paper, we will work with indistinguishability games.
Therefore, we will particularly be interested in the projective measurement that projects onto eigenvectors with eigenvalues away from $1/2$ (meaning its behavior is more than random guessing). For this reason, we will need the following symmetric version of \Cref{thm:thres_approx_asymmetric}:

\begin{theorem}[Efficient Symmetric Threshold Measurement] \label{thm:thres_approx}
    Let $\cP_b = (P_b,Q_b)$ be a binary outcome POVM over Hilbert space $\Hs_b$ that is a mixture of projective measurements for $b \in \{1,2\}$. Let $P_b$ have eigenbasis $\{\ket {\psi_i^b}\}$ with eigenvalues $\{\lambda_i^b\}$. For every $\gamma_1, \gamma_2 \in (0, 1/2), 0 < \epsilon < \min(\gamma_1/2,\gamma_2/2)$, and $ \delta > 0$, there exist efficient binary-outcome quantum algorithms, interpreted as the POVM element corresponding to outcome 1, $\sati_{\cP_b, \gamma}^{\epsilon, \delta}$ %
    such that for every quantum program $\rho \in \density{1} \otimes \density{2}$ the following are true about the product algorithm $\sati_{\cP_1, \gamma_1}^{\epsilon, \delta} \otimes \sati_{\cP_2, \gamma_2}^{\epsilon, \delta}$:
    \begin{itemize}
        \item[(0)] Let $ (\widetilde{E}^b_{\leq \gamma_b}, \widetilde{E}^b_{> \gamma_b})$ be the inefficient threshold measurement in \Cref{thm:inefficient_threshold_measure} for $\Hs_b$. 
        \item[(1)] The probability of measuring 1 on both registers satisfies $$\Tr\bracketsS{\brackets{ \sati_{\cP_1, \gamma_1}^{\epsilon, \delta} \otimes \sati_{\cP_2, \gamma_2}^{\epsilon, \delta} } \rho } \ge \Tr\bracketsS{ \brackets{ \widetilde{E}^1_{> \gamma_1 + \epsilon} \otimes \widetilde{E}^2_{> \gamma_2 + \epsilon} } \cdot \rho} - 2\delta.$$

        \item[(2)] The post-measurement state $\rho'$ after getting outcome (1,1) is $4\delta$-close to a state in the support of $\bracketsC{\ket{\psi^1_i}\ket{\psi^2_j}}$ such that $|\lambda_i^1 - 1/2| > \gamma_1 - 2\epsilon$ and $|\lambda^2_j - 1/2| > \gamma_2 - 2\epsilon$.

        \item[(3)] The running time of the algorithm is polynomial in the running time of $P_1, P_2$, ${1}/{\epsilon}$ and $\log(1/\delta)$. 
    \end{itemize}
\end{theorem}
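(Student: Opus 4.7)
The plan is to reduce the symmetric version to \Cref{thm:thres_approx_asymmetric} via an orthogonal decomposition of the target projection. Since $Q_b = I - P_b$, the eigenvectors $\ket{\psi^b_i}$ are shared by $P_b$ and $Q_b$ with eigenvalues summing to $1$, and hence
\[
\widetilde{E}^b_{>\gamma_b} \;=\; E^b_{>1/2+\gamma_b}(P_b) \;+\; E^b_{>1/2+\gamma_b}(Q_b),
\]
where the two summands project onto orthogonal eigenspaces (one with $\lambda^b_i > 1/2+\gamma_b$, the other with $\lambda^b_i < 1/2-\gamma_b$). Moreover, if $P_b = \sum_r p_r U_r$ is a mixture of projective measurements, then so is $Q_b = \sum_r p_r (I - U_r)$, so \Cref{thm:thres_approx_asymmetric} applies equally well to $\cP_b$ and to $\mathcal{Q}_b := (Q_b, P_b)$.

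I define $\sati^{\epsilon,\delta}_{\cP_b,\gamma_b}$ by a two-step sequential procedure: first apply $\ati^{\epsilon,\delta/4}_{\cP_b,\, 1/2+\gamma_b}$; if the outcome is $1$, output $1$; otherwise apply $\ati^{\epsilon,\delta/4}_{\mathcal{Q}_b,\, 1/2+\gamma_b}$ and output its outcome. Efficiency (property (3)) is immediate from the corresponding property of $\ati$, since the two subroutines run sequentially with parameters polynomial in the originals. The bipartite product $\sati^{\epsilon,\delta}_{\cP_1,\gamma_1} \otimes \sati^{\epsilon,\delta}_{\cP_2,\gamma_2}$ thereby decomposes into four branches according to which of the two ATIs fires on each side. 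To establish property (1), I expand $\widetilde{E}^1_{>\gamma_1+\epsilon} \otimes \widetilde{E}^2_{>\gamma_2+\epsilon}$ into four orthogonal cross-terms via the decomposition above, and bound each term by the probability of the corresponding branch using \Cref{thm:thres_approx_asymmetric}. Property (2) follows by tracking the post-measurement guarantees of the asymmetric ATI through each of the four branches and again invoking orthogonality of the two eigenspaces on each side.

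The main obstacle will be making precise the sequential composition argument on a single side. Concretely, on a state supported on eigenvectors with $\lambda^b_i < 1/2-\gamma_b-\epsilon$ (equivalently, eigenvectors of $Q_b$ with eigenvalue $> 1/2+\gamma_b+\epsilon$), I must show that the first ATI fires with probability at most $O(\delta)$, so that the second ATI can then capture this subspace. This will follow from property (2) of \Cref{thm:thres_approx_asymmetric}: if the first ATI fired with probability $p'$ on such a state, the post-measurement state would be $4\delta$-close to a state supported on eigenvectors with $\lambda > 1/2 + \gamma_b - 2\epsilon$, an orthogonal subspace whenever $\gamma_b > \epsilon$ (which holds by the hypothesis $\epsilon < \gamma_b/2$), forcing $p'$ to be $O(\delta)$. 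Absorbing the accumulated error from both sequential ATI calls yields the final $2\delta$ and $4\delta$ bounds in properties (1) and (2), and the orthogonality of $E^b_{>1/2+\gamma_b}(P_b)$ and $E^b_{>1/2+\gamma_b}(Q_b)$ is what prevents the two sequential ATIs from interfering destructively.
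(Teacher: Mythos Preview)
The paper does not actually give a proof of this theorem: it is stated immediately after \Cref{thm:thres_approx_asymmetric} as ``the following symmetric version,'' with no argument, the implicit point being that the same Marriott--Watrous eigenvalue-estimation machinery from \cite{z20,aaronsonnew} yields it directly (estimate the eigenvalue and accept if the estimate lands outside $[1/2-\gamma_b,\,1/2+\gamma_b]$). So there is no paper proof to compare against in detail; your proposal is an attempt to \emph{derive} the symmetric version from the asymmetric one as a black box, which is more than the paper does.

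That black-box reduction, however, has a real gap. Your key step is the claim that on a state supported in the ``low'' eigenspace ($\lambda^b_i<1/2-\gamma_b-\epsilon$) the first $\ati$ (threshold $1/2+\gamma_b$) fires with probability $O(\delta)$, and you justify this by saying that property~(2) forces the post-outcome-$1$ state into an orthogonal subspace. But property~(2) is a statement about the \emph{conditional} post-measurement state, not about the firing probability $p'$: nothing in the stated guarantees of \Cref{thm:thres_approx_asymmetric} prevents $\ati$ from accepting with large probability and then mapping the state into the high-eigenvalue subspace. Equally problematic, \Cref{thm:thres_approx_asymmetric} says nothing about the state after outcome~$0$, so when the first $\ati$ rejects you have no handle on $\Tr[E^{Q_b}_{>1/2+\gamma_b+\epsilon}\,\rho_0]$ and cannot invoke property~(1) for the second $\ati$. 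In short, the sequential-composition analysis needs an \emph{upper} bound on the firing probability of $\ati$ (or, equivalently, approximate commutation of $\ati$ with the eigenprojections of $P_b$), and that is not among properties (1)--(3).

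The fix is to open the box: the Marriott--Watrous-based $\ati$ of \cite{z20} is (exactly) block-diagonal in the Jordan decomposition of $P_b$, hence it does commute with the eigenprojections and does satisfy the missing upper bound. Once you use that, your construction works; but then it is just as easy, and cleaner, to build $\sati$ directly by running the same alternating-projection estimator and thresholding the outcome on $|\hat\lambda-1/2|>\gamma_b$, which is the route the paper implicitly has in mind.
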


\noindent %

\subsection{Unclonable Encryption}

In this subsection, we provide the definition of unclonable encryption schemes. 
By unclonable encryption, we are refering to the security defined in ~\cite{ananth2021uncloneable}. This is a variant of the original security definition in \cite{broadbentUncloneableQuantumEncryption2020}, which forces one of $m_0,m_1$ to be uniformly random. We would remark that our security is stronger than the original one in \cite{broadbentUncloneableQuantumEncryption2020}, since in our definition $m_0,m_1$ can be arbitrarily chosen. %

\begin{definition}
\label{def:Unclonable_enc}
An unclonable encryption scheme is a triple of efficient quantum algorithms $(\gen,\enc,\dec)$ with the following interface:
\begin{itemize}
    \item $\gen(1^\lambda):\sk$ on input a security parameter $1^\lambda$, returns a classical key $\sk$.%
    \item $\enc(\sk,\ket{m}\bra{m}):\rho_{ct}$ takes the key $\sk$ and the message $\ket{m}\bra{m}$ for $m \in \{0,1\}^{\mathrm{poly}(\lambda)}$, outputs a quantum ciphertext $\rho_{ct}$. 
    \item $\dec(\sk,\rho_{ct}):\rho_m$ takes the key $\sk$ and the quantum ciphertext $\rho_{ct}$, outputs a message in the form of quantum states $\rho_m$.
\end{itemize}

\paragraph{Correctness.} The following must hold for the encryption scheme. For $\sk\leftarrow \gen(1^{\lambda})$, we must have $\Tr[\ket{m}\bra{m}\dec(\sk,\enc(\sk,\ket{m}\bra{m}))]\geq 1-\negl(\lambda)$.

\paragraph{Unclonability.} In the following sections, we focus on unclonable IND-CPA security. To define our unclonable security, we introduce the following security game.

\end{definition}

\begin{definition}[Unclonable IND-CPA game]
Let $\lambda\in \mathbb{N}^+$. Given encryption scheme $\mathcal{S}$, consider the following game against the adversary $(\mathcal{A},\mathcal{B},\mathcal{C})$.
\begin{itemize}
    \item The adversary $\As$ generates $m_0,m_1\in\{0,1\}^{n(\lambda)}$ and sends to the challenger as the chosen plaintext.
    \item The challenger randomly chooses a bit $b\in\{0,1\}$ and returns $\enc(\sk,m_b)$ to $\As$. $\As$ produces a quantum state $\rho_{BC}$ in register $B$ and $C$, and sends corresponding registers to $\mathcal{B}$ and $\mathcal{C}$.
    \item $\Bs$ and $\Cs$ receive the key $\sk$, and output bits $b_{\Bs}$ and $b_{\Cs}$ respectively
\end{itemize}
 and the adversary wins if $b_\Bs=b_\Cs=b$. 

\end{definition}

We denote the advantage (success probability) of above game by $\adv_{\mathcal{G},\As,\Bs,\Cs}(\lambda)$. We say that scheme $\mathcal{S}$ is informational (computational) secure if for all(efficient) adversaries $(\mathcal{G},\mathcal{A},\mathcal{B},\mathcal{C})$,
\begin{align*}
    \adv_{\mathcal{G},\As,\Bs,\Cs}(\lambda)\leq \frac{1}{2}+\negl(\lambda).
\end{align*}

\fullversion{\section{On the Impossibility of Deterministic Schemes}
\label{sec:imp}
In this section, we provide an impossibility result for \emph{deterministic information-theoretically secure} schemes. 
This result suggests that either computational assumptions or randomness is necessary for achieving unclonable encryption with optimal security. 
We also noticed that previously  in~\cite{majenzLimitationsUncloneableEncryption2021}, the authors have provided an impossibility result for more general schemes. Nevertheless, our result provides a better asymptotic lower bound for deterministic schemes and is based on observations on Haar random states.

To be precise, we define deterministic schemes as follows:
\begin{definition}[Deterministic Scheme]
We call an encryption scheme $(\gen,\enc,\dec)$ is a deterministic encryption scheme if it satisfies following:
\begin{itemize}
    \item The encryption algorithm $\enc$ can be realized as a unitary $U_{\sk}$ acting on the plaintext register $\ket{m}$ and ancillary bits initialized to $0$, resulting in the ciphertext pure state in the form $\ket{c_{\sk}}$ of length $\lambda$. %
    \item The decryption algorithm $\dec$ acts the inverse $U_{\sk}^{\dag}$ on received registers, then measures in computational basis to obtain the message.
\end{itemize}
\end{definition}
The correctness of deterministic schemes is satisfied. An example of deterministic scheme is the following: let $\sk$ encode two (arbitrary and) orthogonal states $\ket {\phi_0}, \ket{\phi_1}$; a message $b$ is mapped to $\ket{\phi_b}$. Another example is the conjugate encryption defined in \cite{broadbentUncloneableQuantumEncryption2020}:
\begin{enumerate}
    \item $\sk=(r,\theta)$ where $r,\theta$ is independent random samples from $\{0,1\}^n$
    \item $\enc(\sk,m)=\ket{(m\oplus r)^\theta}\bra{(m\oplus r)^\theta}$, where $\ket{x^\theta}=H^{\theta_1}\otimes H^{\theta_2}\otimes\dots \otimes H^{\theta_n}\ket{x}$ is the BB84 state.
    \item $\dec(\sk,\rho)$: computes $\rho'=H^\theta\rho H^\theta$, measures $\rho'$ in computational basis to obtain $c$, obtaining $m=c\oplus r$.
\end{enumerate}

Though the authors in ~\cite{broadbentUncloneableQuantumEncryption2020} 
have already proven this scheme does not satisfy the unclonable IND-CPA security, our attack scheme provided a no go theorem for a larger class of possible constructions.

For these schemes, we provide a universal adversary for the unclonable IND-CPA game.

\begin{theorem}
\label{thm:IT_imp}
For any deterministic encryption scheme, we have a universal information-theoretical adversary $(\mathcal{G},\As,\Bs,\Cs)$ that satisfies 
\begin{align*}
    \adv_{\mathcal{G},\As,\Bs,\Cs}(\lambda)\geq 0.568,
\end{align*}
as $\lambda\to\infty$.
\end{theorem}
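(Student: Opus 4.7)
The plan is to exhibit a universal information-theoretic attack and analyze it by a Haar symmetry reduction to a calculation over random orthonormal pairs in $\Cp^{2^\lambda}$. The adversary $\As$ fixes two distinct plaintexts $m_0 \ne m_1$. Upon receiving the ciphertext $\ket{c_b^\sk} = U_\sk\ket{m_b, 0^{\lambda - 1}}$, it samples a unitary $V$ from the Haar measure on $\Cp^{2^\lambda}$, applies $V$ to the ciphertext, and measures the result in the computational basis to obtain $y \in \{0,1\}^\lambda$. The classical pair $(V, y)$ is then copied into two classical registers and forwarded to $\Bs$ and $\Cs$. Each party, holding $\sk$ together with $(V, y)$, computes the two hypothesized rotated ciphertexts $\ket{\phi_{b'}} := V U_\sk \ket{m_{b'}, 0^{\lambda-1}}$ and outputs $b^\star := \arg\max_{b'} |\langle y|\phi_{b'}\rangle|^2$. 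Since both parties apply the same deterministic rule to identical classical data, $b_\Bs = b_\Cs$ holds with certainty, so it suffices to lower bound the probability that this common output equals the challenge bit $b$.

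The heart of the argument is Haar symmetry. Since $V$ is Haar-random and $\ket{c_0^\sk}, \ket{c_1^\sk}$ form a fixed orthonormal pair (orthogonality is forced by perfect correctness of $\dec$), left-invariance of the Haar measure implies that the rotated pair $(V\ket{c_0^\sk}, V\ket{c_1^\sk})$ is distributed uniformly over orthonormal pairs in $\Cp^{2^\lambda}$, \emph{independently} of $\sk$ and of the encryption unitary $U_\sk$. Consequently, the success probability is a universal function of $\lambda$ alone. Writing $(v^0, v^1)$ for a Haar-random orthonormal pair and $d := 2^\lambda$, the success probability equals
\begin{align*}
\frac{1}{2}\sum_y \E\bracketsS{\max(|v^0_y|^2,|v^1_y|^2)} \;=\; \frac{d}{2}\cdot\E\bracketsS{\max(|v^0_1|^2,|v^1_1|^2)},
\end{align*}
by unpacking the argmax decision rule and invoking permutation symmetry over coordinates in the second step.

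The final step is to evaluate this Haar expectation as $\lambda \to \infty$ and show it exceeds $0.568$. For large $d$, the rescaled quantities $d|v^0_1|^2$ and $d|v^1_1|^2$ converge in distribution to independent $\mathrm{Exp}(1)$ random variables, because the single orthogonality constraint is a codimension-one condition with negligible effect on individual coordinate marginals. Using $\E[\max(X, Y)] = 3/2$ for independent $\mathrm{Exp}(1)$ variables, the limiting success probability works out to $3/4$, which exceeds $0.568$ with room to spare. The main technical subtlety is controlling the coupling between $v^0$ and $v^1$ at finite $d$; the natural approach is to condition on $v^0$, write $v^1$ as a Haar-random unit vector on the $(d-1)$-dimensional orthogonal complement of $v^0$, and use standard Dirichlet concentration to bound the perturbation of the relevant marginals from the independent-exponential limit by $O(1/d)$.
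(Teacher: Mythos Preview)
Your argument is correct and in fact yields a stronger bound than the paper does, but the route is genuinely different. The paper's adversary keeps the ciphertext quantum: after the Haar twist $V$, $\As$ \emph{bipartitions} the $\lambda$ qubits, sending half to $\Bs$ and half to $\Cs$; each party then runs the optimal Helstrom measurement on its reduced density matrix. The success probability is lower bounded via a union bound by $\tfrac12\bigl(\tracedist{\rho_0^B}{\rho_1^B}+\tracedist{\rho_0^C}{\rho_1^C}\bigr)$, and the paper invokes the random--matrix result of \cite{MZB16} (together with L\'evy concentration to pass from an orthonormal pair to two independent Haar states) to show that each half--trace distance concentrates at $\tfrac14+\tfrac1\pi\approx 0.568$. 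Your adversary, by contrast, collapses the rotated ciphertext to a \emph{classical} outcome $y$ and broadcasts $(V,y)$; since $\Bs$ and $\Cs$ apply the same deterministic decision rule, the analysis reduces to the single--party quantity $\tfrac12\sum_y\max(|v^0_y|^2,|v^1_y|^2)$, which you evaluate by the elementary $\mathrm{Exp}(1)$ limit of rescaled Haar coordinates to obtain $3/4$ asymptotically. What each approach buys: the paper's splitting argument is ``more quantum'' and showcases a nontrivial random--matrix fact, but requires the external input of \cite{MZB16}; your classical--broadcast attack is self--contained, needs only standard Beta/Dirichlet asymptotics, and actually proves the stronger bound $\adv\to 3/4$, which of course implies the stated $\geq 0.568$. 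One small point you leave implicit (as does the paper, in a footnote): ``sending $V$'' must be discretized via an $\varepsilon$--net over $\mathcal{U}(\Cp^{2^\lambda})$, which perturbs the advantage only by $O(\varepsilon)$.
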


Since any deterministic encryption scheme can only suffice one-time security, we also considered whether our result can be extended to general encryption schemes that take randomness as input, such as the following scheme inspired by~\cite{GL89}. 
\begin{itemize}
    \item $\sk=(\theta,u)$ for $\theta,u\leftarrow\{0,1\}^\lambda$.
    \item $Enc_k(m,r)=\ket{r^{\theta}}\ket{\langle r,u\rangle\oplus m}$ for $m\in\{0,1\}$, $r\leftarrow\{0,1\}^\lambda$.
    \item $Dec_k(\rho):$ Decode $r$ by applying $H^\theta$ on first $\lambda$ register, and measure $\rho$ in computational basis to get $ct$. We can extract $m=\langle ct_{1\dots\lambda}, u\rangle\oplus ct_{\lambda+1}$.
\end{itemize}
However, our impossibility result met some barriers in the generalization. We would try to characterize them as following:
\begin{itemize}
    \item Since in quantum algorithms, randomness is generated intrinsically from measurements. Consider implementing a classical randomized algorithm by quantum circuits, the random bits in the classical algorithm would be replaced by measuring $\ket{+}$ states in the computational basis. Thus for general encryption algorithms, they should be modeled as quantum channels rather than unitaries, with cipher texts modeled as mixed states accordingly. However, the understanding on the actions of random unitaries on mixed states is a much less studied and more complicated problem.
    \item Our adversary $(\As,\Bs,\Cs)$ also relies on \emph{all} information of the cipher text states to decide its measurement. But if the encryption algorithm additionally takes some randomness, then the adversary $\Bs$ and $\Cs$ cannot decide the actual ciphertext state. 
\end{itemize}

\subsection{Preliminaries on Haar Measure}
To prove our result, we provide a quick introduction to the theorems related to Haar measure in this subsection. For more information on Haar measure, readers can refer to ~\cite{watrousTheoryQuantumInformation2018}. We denote the uniform spherical measure on unit sphere $\Ss((\Cp^2)^{\otimes n})$ as $\mu_n$, the Haar measure on the unitary group $\mathcal{U}((\Cp^2)^{\otimes n})$ as $\eta_n$.

The following lemma relates the Haar measure on unitary operators to uniform spherical measure.

\begin{lemma}
\label{lem:mueta}
Let $f$ be a function from $\Ss((\Cp^2)^{\otimes n})\times\Ss((\Cp^2)^{\otimes n})\rightarrow\mathbb{R}$. Then for any two fixed vectors $\ket{\phi_0},\ket{\phi_1}\in \Ss((\Cp^2)^{\otimes n})$ such that $\braket{\phi_0|\phi_1}=0$, we have that 
\begin{align*}
    \E_{U\leftarrow \eta_n} f(U\ket{\phi_0},U\ket{\phi_1})=\E_{\ket{\psi_0},\ket{\psi_1}\leftarrow \mu_n,\braket{\psi_0|\psi_1}=0}f(\ket{\psi_0},\ket{\psi_1}).
\end{align*}

\end{lemma}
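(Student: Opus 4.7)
The statement is an invariance property, so the plan is to reduce it to the uniqueness of the $U(d)$-invariant probability measure on the Stiefel manifold of orthonormal 2-frames in $(\Cp^2)^{\otimes n}$ (writing $d=2^n$).

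First I would use right-invariance of the Haar measure to reduce to a canonical orthogonal pair. Since $\ket{\phi_0}\perp\ket{\phi_1}$, there exists a unitary $W$ such that $W\ket{0}=\ket{\phi_0}$ and $W\ket{1}=\ket{\phi_1}$. Then $U\ket{\phi_0}=UW\ket{0}$ and $U\ket{\phi_1}=UW\ket{1}$, and by right-invariance of $\eta_n$, the variable $U'=UW$ is again Haar distributed. So the LHS equals $\E_{U'\sim\eta_n} f(U'\ket{0},U'\ket{1})$, which in particular shows the LHS does not depend on the choice of $(\ket{\phi_0},\ket{\phi_1})$.

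Next I would identify both sides as integrals against the same probability measure on the space of ordered orthonormal 2-frames $V_2(\Cp^d)=\{(\ket{\psi_0},\ket{\psi_1}): \ket{\psi_i}\in\Ss((\Cp^2)^{\otimes n}),\ \braket{\psi_0|\psi_1}=0\}$. The map $U\mapsto(U\ket{0},U\ket{1})$ pushes $\eta_n$ forward to a probability measure $\nu_{\mathrm{Haar}}$ on $V_2(\Cp^d)$. By left-invariance of $\eta_n$, for any $V\in U(d)$ the map $(\ket{\psi_0},\ket{\psi_1})\mapsto(V\ket{\psi_0},V\ket{\psi_1})$ preserves $\nu_{\mathrm{Haar}}$. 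On the other hand, the RHS is by definition an integral against the measure $\nu_{\mathrm{unif}}$ obtained by sampling $\ket{\psi_0}\sim\mu_n$ and then $\ket{\psi_1}$ uniformly on the unit sphere of $\ket{\psi_0}^{\perp}$; this construction is manifestly $U(d)$-invariant as well.

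Finally I would invoke uniqueness of the invariant probability measure on the homogeneous space $V_2(\Cp^d)\cong U(d)/U(d-2)$: since $U(d)$ acts transitively on $V_2(\Cp^d)$ and both $\nu_{\mathrm{Haar}}$ and $\nu_{\mathrm{unif}}$ are $U(d)$-invariant probability measures, they coincide, which gives the desired equality of integrals of $f$.

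\emph{Main obstacle.} The only subtle point is the correct measure-theoretic interpretation of the RHS (the event $\braket{\psi_0|\psi_1}=0$ has $\mu_n\otimes\mu_n$-measure zero), which is why I prefer to define it as the iterated integral above and then identify it with the unique invariant measure on the Stiefel manifold rather than as a literal conditional expectation. Once this interpretation is fixed, the argument reduces to standard invariance facts, so no nontrivial calculation is required.
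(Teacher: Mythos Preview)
The paper states this lemma without proof, treating it as a standard preliminary fact about the Haar measure (with a general reference to \cite{watrousTheoryQuantumInformation2018}), so there is no proof in the paper to compare against.

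Your argument is correct. Reducing via right-invariance to the canonical frame $(\ket{0},\ket{1})$, then identifying both sides as integrals against $U(d)$-invariant probability measures on the Stiefel manifold $V_2(\Cp^d)\cong U(d)/U(d-2)$ and invoking uniqueness of the invariant measure on a compact homogeneous space is exactly the standard way to establish this. Your handling of the RHS as an iterated integral (first $\ket{\psi_0}\sim\mu_n$, then $\ket{\psi_1}$ uniform on the sphere of $\ket{\psi_0}^\perp$) is the right way to make sense of the conditioning on a measure-zero event, and the $U(d)$-invariance of that construction is clear. One minor point worth making explicit for full rigor: to apply the uniqueness theorem you want $f$ to be integrable with respect to the invariant measure (e.g., bounded or measurable and in $L^1$); the paper is silent on this, but in its application $f$ is bounded, so this is harmless.
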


We introduce Lévy's lemma, which could be viewed as the counterpart of Chernoff bound on the uniform spherical measure.

\begin{lemma}[Lévy's Lemma]
\label{lem:levy}
Let $f$ be a function from $\Ss((\Cp^2)^{\otimes n})\to\mathbb{R} $ that satisfies
\begin{align*}
    |f(\ket{\phi})-f(\ket{\psi})|\leq \kappa||\ket{\phi}-\ket{\psi}||_2,
\end{align*} 
for some $\kappa>0$. Then there exists a universal $\delta>0$ for which the following holds.  For every $\epsilon>0$:
\begin{align*}
    \Pr_{\ket{\psi}\leftarrow\mu_n}\left[\left|f(\ket{\psi})-\E_{\ket{\phi}\leftarrow\mu_n}[f(\ket{\phi})]\right|\geq\epsilon\right]\leq3\exp\left(-\frac{\delta\epsilon2^n}{\kappa^2}\right).
\end{align*}
\end{lemma}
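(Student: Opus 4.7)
The plan is to follow the classical proof of Lévy's concentration inequality, which reduces the statement to the spherical isoperimetric inequality. First I would identify the complex unit sphere $\Ss((\Cp^2)^{\otimes n})$ with the real unit sphere $S^{2 \cdot 2^n - 1} \subseteq \mathbb{R}^{2 \cdot 2^n}$ via the natural real-linear isomorphism $\Cp^N \cong \mathbb{R}^{2N}$. Under this identification the measure $\mu_n$ pulls back to the uniform surface measure on $S^{D-1}$ with $D = 2 \cdot 2^n$, and the Euclidean distance on this real sphere agrees with $\|\cdot\|_2$ on the ambient Hilbert space, so the Lipschitz constant of $f$ is preserved. The statement therefore reduces to a real concentration bound on a sphere of dimension $\Theta(2^n)$.

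Next I would invoke Lévy's isoperimetric inequality on $S^{D-1}$: among measurable subsets of a given measure, spherical caps minimize the volume of every $r$-neighborhood. A standard corollary is that any measurable $A \subseteq S^{D-1}$ with $\Pr[A] \geq 1/2$ satisfies
\[
\Pr[A_r^c] \leq 2\exp(-c \, r^2 D),
\]
where $A_r = \{x : \operatorname{dist}(x, A) < r\}$ and $c > 0$ is a universal constant. Let $M$ be a median of $f$ and set $A = \{f \leq M\}$, so $\Pr[A] \geq 1/2$. The $\kappa$-Lipschitz property guarantees $f(x) \leq M + \epsilon$ for all $x \in A_{\epsilon/\kappa}$, hence $\Pr[f > M + \epsilon] \leq 2\exp(-c \epsilon^2 D / \kappa^2)$. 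The same argument applied to $\{f \geq M\}$ handles the lower tail, yielding a two-sided concentration bound of $4\exp(-c \epsilon^2 D / \kappa^2)$ around the median $M$.

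Finally I would replace the median by the mean. Integrating the median-concentration tail gives
\[
|M - \E_{\ket{\phi} \leftarrow \mu_n}[f(\ket{\phi})]| \;\leq\; \E |f - M| \;\leq\; \int_0^\infty 4\exp(-c t^2 D / \kappa^2)\, dt \;=\; O(\kappa / \sqrt{D}).
\]
For $\epsilon \geq C \kappa / \sqrt{D}$ (large enough to make the inequality nontrivial) this gap can be absorbed into the exponent at the cost of halving the constant $c$; outside this range the right-hand side exceeds $3$ and the bound is vacuous. Collecting terms and folding the factor $D = \Theta(2^n)$ into the universal constant yields the advertised bound $3 \exp\!\left(-\delta \epsilon^2 2^n / \kappa^2\right)$ for some universal $\delta > 0$.

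The main obstacle is the spherical isoperimetric inequality itself, which is a deep result (originally due to Lévy and Schmidt, proved by a two-point symmetrization / rearrangement argument); in a cryptography paper it is standard to cite it rather than reprove it. The remaining work is bookkeeping: tracking constants through the real/complex identification, verifying that the metric induced on $S^{D-1}$ matches $\|\cdot\|_2$ so that the Lipschitz constant transfers with no loss, and handling the median-to-mean conversion carefully so that the exponent is preserved up to an absolute constant.
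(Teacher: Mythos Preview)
The paper does not actually prove this lemma; it is stated in the preliminaries subsection on Haar measure as a standard result (with a pointer to \cite{watrousTheoryQuantumInformation2018} for background), and is then invoked directly in the attack analysis. So there is no ``paper's own proof'' to compare against.

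Your outline is the standard textbook derivation of L\'evy's concentration inequality and is correct: identify the complex unit sphere with a real sphere of dimension $\Theta(2^n)$, apply the spherical isoperimetric inequality to get concentration around the median, then convert median to mean. This is exactly what one would expect in a detailed exposition, and your acknowledgment that the isoperimetric inequality itself would be cited rather than reproved is appropriate for a cryptography paper.

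One remark worth making: the exponent in the paper's statement reads $\delta\epsilon\, 2^n/\kappa^2$, whereas your derivation (correctly) produces $\delta\epsilon^2\, 2^n/\kappa^2$. This appears to be a typo in the paper. Indeed, when the paper applies the lemma with $\epsilon = \lambda\, 2^{-\lambda/2}$ and $\kappa = 2$, it obtains the bound $3\exp(-\delta\lambda^2/4)$, which only follows from the $\epsilon^2$ version. So your proposal matches what the paper actually uses, not what it literally states.
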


The following simplified theorem from \cite{MZB16} plays a crucial role in our proof. 

\begin{theorem}
\label{thm:limit_dis}
Let $\ket{\psi_1},\ket{\psi_2}\in \Ss((\Cp^2)^{\otimes 2 n})$ be two states independently sampled from $\mu_{2n}$. Then let $\rho_1,\rho_2$ be the corresponding reduced density matrix in the first $n$ qubit register. As $n\to \infty$, the trace distance $\tracedist{\rho_1}{\rho_2}$ almost surely converges to
\begin{align*}
    \tracedist{\rho_1}{\rho_2}\stackrel{a.s.}{\longrightarrow}\frac{1}{4}+\frac{1}{\pi}\approx 0.568.
\end{align*}
\end{theorem}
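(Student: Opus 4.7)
The plan is to combine two ingredients: (i) the limiting spectral distribution of the difference of two independent induced density matrices (via random matrix / free probability), and (ii) Lévy's concentration (\Cref{lem:levy}) to pass from convergence of the expectation to almost sure convergence. Concretely, set $N = 2^n$ and write $\rho_i = \Tr_B[\ket{\psi_i}\bra{\psi_i}]$ for $i = 1,2$. Each $\rho_i$ is distributed according to the Hilbert--Schmidt induced measure on $N \times N$ density matrices coming from tracing out an $N$--dimensional environment. It is a classical fact (Marchenko--Pastur / Page's theorem regime) that the empirical spectral distribution of $N\rho_i$ converges weakly, as $n \to \infty$, to the Marchenko--Pastur law with parameter $1$, whose density is $f(x) = \frac{1}{2\pi}\sqrt{(4-x)/x}$ on $[0,4]$.

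Next I would use unitary invariance of the induced measure to write $\rho_2 = U \widetilde\rho_2 U^\dagger$ with $U$ Haar--random and independent of $\rho_1$ and $\widetilde\rho_2$. By Voiculescu's asymptotic freeness theorem, $\rho_1$ and $U \widetilde\rho_2 U^\dagger$ are asymptotically free, so the empirical spectral distribution of $N(\rho_1 - \rho_2)$ converges weakly to the free convolution $\nu := \mathrm{MP}(1) \boxplus (-\mathrm{MP}(1))$. Writing $\{\mu_j\}$ for the eigenvalues of $N(\rho_1-\rho_2)$, we have
\begin{equation*}
\tracedist{\rho_1}{\rho_2} \;=\; \tfrac{1}{2}\|\rho_1 - \rho_2\|_1 \;=\; \frac{1}{2N}\sum_j |\mu_j| \;\xrightarrow{n\to\infty}\; \tfrac{1}{2}\int |x|\, d\nu(x).
\end{equation*}
Using $R$--transforms, $R_{\mathrm{MP}(1)}(z) = 1/(1-z)$ and $R_{-\mathrm{MP}(1)}(z) = -1/(1+z)$, so $R_\nu(z) = 2z/(1-z^2)$, from which one can invert to recover the Cauchy transform and hence the density of $\nu$, and then a direct (if somewhat tedious) integration gives $\tfrac{1}{2}\int|x|\,d\nu(x) = \tfrac{1}{4} + \tfrac{1}{\pi}$. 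I would either perform this computation explicitly, or invoke the closed form from \cite{MZB16}.

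Finally I would upgrade convergence in expectation to almost sure convergence using \Cref{lem:levy}. Define
\begin{equation*}
F(\ket{\psi_1},\ket{\psi_2}) \;:=\; \tracedist{\Tr_B[\ket{\psi_1}\bra{\psi_1}]}{\Tr_B[\ket{\psi_2}\bra{\psi_2}]}.
\end{equation*}
Because trace distance is contractive under partial trace and satisfies $\tracedist{\ket\phi\bra\phi}{\ket{\phi'}\bra{\phi'}} \le \|\ket\phi-\ket{\phi'}\|_2$, the map $F$ is $1$--Lipschitz in each of its arguments with respect to the Euclidean metric on $\Ss((\Cp^2)^{\otimes 2n})$. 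Applying Lévy's lemma to $F$ (viewed as a function on the product sphere, with Lipschitz constant $\kappa = O(1)$ independent of $n$) yields
\begin{equation*}
\Pr\bigl[\,|F - \E F| \ge \epsilon\,\bigr] \;\le\; 6 \exp\!\bigl(-\delta\, \epsilon\, 2^{2n}/\kappa^2\bigr),
\end{equation*}
which is summable in $n$ for every fixed $\epsilon > 0$. Borel--Cantelli then gives $F \to \tfrac14 + \tfrac1\pi$ almost surely.

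The main obstacle is step two, namely identifying the limiting density of $\nu$ and computing $\tfrac12\int |x|\,d\nu(x)$ in closed form. The free--cumulant calculation is clean but inverting back to the density via the Stieltjes transform and then integrating $|x|$ requires care; this is where I expect to lean on \cite{MZB16} for the explicit value. The freeness step itself is standard but does require checking that the induced measure is unitarily invariant, which follows from the invariance of $\mu_{2n}$ under unitaries acting on the first factor.
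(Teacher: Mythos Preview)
The paper does not actually prove this statement: it is introduced as ``the following simplified theorem from \cite{MZB16}'' and is used as a black box in the attack analysis. So there is no in-paper proof to compare your proposal against.

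That said, your outline is essentially the right way to establish the result and is consistent with what one expects from \cite{MZB16}: Marchenko--Pastur for each $N\rho_i$, asymptotic freeness of $\rho_1$ and $U\widetilde\rho_2 U^\dagger$ by unitary invariance, free additive convolution via $R$-transforms, and then concentration. Two technical points are worth tightening. First, weak convergence of the empirical spectral measure of $N(\rho_1-\rho_2)$ to $\nu$ does not by itself give $\frac{1}{2N}\sum_j|\mu_j|\to\frac12\int|x|\,d\nu(x)$ since $|x|$ is unbounded; you need an operator-norm bound on $N\rho_i$ (edge of the Marchenko--Pastur support) to localize to a compact interval and then approximate $|x|$ by polynomials there. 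Second, \Cref{lem:levy} as stated in the paper applies to a single sphere, whereas your $F$ lives on a product of two spheres; you should either apply Lévy in each coordinate and combine (which works since $F$ is $1$-Lipschitz in each argument separately), or cite the standard fact that products of spaces with Gaussian concentration again concentrate. With those two gaps filled, the Borel--Cantelli step gives almost sure convergence exactly as you describe.
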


For simplicity, in this section, $\E_{\ket{\psi}}$ stands for taking expectation over $\ket{\psi}$ sampled from uniform spherical measure on corresponding Hilbert space, $\E_V$ stands for $V$ over Haar measure respectively.

\subsection{Attack schemes}

We are ready to present an attack for any deterministic information-theoretically secure schemes. 

\paragraph{Attack.}
\begin{itemize}
    \item For the adversary $\As$, it first chooses $00\dots00,00\dots01$ and sends to the challenger. After receiving the $n$ qubit ciphertext state $\ket{ct_k}$, it applies a random Haar unitary $V$, then divides the output register into two parts, $R_{\Bs}$ for the qubits indexed $[1,\frac{\lambda}{2}]$, $R_{\Cs}$ for the qubits indexed $[\frac{\lambda}{2}+1,\lambda]$.
    \item $\As$ then sends two registers respectively to $\Bs$ and $\Cs$, together with the description of $V$
    \footnote{Here we actually mean sending the corresponding minimal distance $\widetilde{V}$ in the $\epsilon$-net of $\mathrm{U}(\Hs_2^{\otimes\lambda})$ to approximate the distribution of $V$. Thus we can sample from a finite set instead. %
    Since we have a constant advantage in the end, we can take $\epsilon$ small enough such that it will only have a negligible effect on our result. }.
    \item   With the given information, $\Bs$ and $\Cs$ can perform POVMs $\{\Pi_b^B\}_b$ and $\{\Pi_b^C\}_b$ to distinguish different messages. We will define the POVMs in detail in the following section.
\end{itemize}
The success probability of our attack scheme is equal to the success probability of the following game.  

\begin{definition}
Let $\lambda\in \mathbb{N}^+$. Consider the following game with a challenger and an (unbounded) adversary $(\Bs,\Cs)$. 
\begin{itemize}
    \item The challenger generates two Haar random states $\ket{\phi_0},\ket{\phi_1}$ with restriction $\braket{\phi_0|\phi_1}=0$ and sends the description of two states\footnote{Similarly, $\As$ sends an element in the $\epsilon$-net of $\mathcal{S}(\Hs_2^{\otimes \lambda})$ in implementation.} to $\Bs$ and $\Cs$.
    \item The challenger randomly chooses $b_{ch}\in\{0,1\}$, and divides the state $\ket{\phi_{b_{ch}}}$ into two parts, $R_{\Bs}$ for the qubits indexed $[1,\frac{\lambda}{2}]$, $R_{\Cs}$ for the qubits indexed $[\frac{\lambda}{2}+1,\lambda]$.
    \item $\Bs$ and $\Cs$ perform POVM $\{\Pi_{b_B}^B\}_{b_B}$ and $\{\Pi_{b_C}^C\}_{b_C}$ on their received registers, outputs $b_B$ and $b_C$ from measurement results.
\end{itemize}
The adversary wins the game if $b_B=b_C=b_{ch}$.
\end{definition}

 The success probability of our distinguishing game is given by the following optimization problem: 
\begin{align*}
    \max_{\Pi_0^B,\Pi_1^B,\Pi_0^C,\Pi_1^C}&\frac{1}{2}\left(\bra{\phi_0}\Pi_0^B\otimes\Pi_0^C\ket{\phi_0}+\bra{\phi_1}\Pi_1^B\otimes\Pi_1^C\ket{\phi_1}\right)\\
    \textit{s.t.}\ \ &\Pi_0^B+\Pi_1^B=I_{\frac{\lambda}{2}}
                    ,\Pi_0^C+\Pi_1^C=I_{\frac{\lambda}{2}},\\
                    &0\leq\Pi_i^B\leq I_{\frac{\lambda}{2}},
                    0\leq\Pi_i^C\leq I_{\frac{\lambda}{2}}.
\end{align*}

The $\frac{1}{2}$ comes from the requirement that the challenger sends $\ket{\phi_0},\ket{\phi_1}$ with equal probability. We denote this probability as $G(\ket{\phi_0},\ket{\phi_1})$. For simplicity, in following sections we will abbreviate $\{\Pi_{b_B}^B\}_{b_B}$ and $\{\Pi_{b_C}^C\}_{b_C}$ as $\{\Pi^B\}$ and $\{\Pi^C\}$ respectively.

In our attack scheme, our success probability is given by $\E_k\E_V[G(VU_k\ket{0\dots00},VU_k\ket{0\dots01})]$. By lemma \ref{lem:mueta}, we have that

\begin{align*}
    \E_V\left[G\left(VU_k\ket{0\dots00},VU_k\ket{0\dots01}\right)\right]=\E_{\ket{\phi_0},\ket{\phi_1}\braket{\phi_0|\phi_1}=0}\left[G\left(\ket{\phi_0},\ket{\phi_1}\right)\right]=\Pr[\text{$(\Bs,\Cs)$ wins} ]
\end{align*}

Then we can provide a lower bound for the success probability via following inequalities:
\begin{align*}
    &\, \Pr[\text{$(\Bs,\Cs)$ wins}|\ket{\phi_0},\ket{\phi_1}] \\
    =&\, \max_{\{\Pi^B\},\{\Pi^C\}}\Pr[(b_B=b_{ch})\wedge(b_C=b_{ch})|\{\Pi^B\},\{\Pi^C\},\ket{\phi_0},\ket{\phi_1}]\\
    \geq&\, 1-\min_{\{\Pi^B\}}\Pr[b_B\neq b_{ch}|\{\Pi^B\},\ket{\phi_0},\ket{\phi_1}]-\min_{\{\Pi^C\}}\Pr[b_C\neq b_{ch}|\{\Pi^C\},\ket{\phi_0},\ket{\phi_1}]\\
    =&\, 1-\frac{1}{2}(1-\tracedist{\rho_0^B}{\rho_1^B})-\frac{1}{2}(1-\tracedist{\rho_0^C}{\rho_1^C})\\
    =&\, \frac{1}{2}(\tracedist{\rho_0^B}{\rho_1^B}+\tracedist{\rho_0^C}{\rho_1^C}),
\end{align*}
where the first line is by definition, the second line follows from union bound, the third line is by the property of trace distance.

Then by taking expectation, we have that for large enough $\lambda$,
\begin{align*}
    \Pr[\text{$(\Bs,\Cs)$ wins}]&=\E_{\ket{\phi_0},\ket{\phi_1},\braket{\phi_0|\phi_1}=0}\Pr[\text{$(\Bs,\Cs)$ wins}|\ket{\phi_0},\ket{\phi_1}]\\
    &\geq\E_{\ket{\phi_0},\ket{\phi_1},\braket{\phi_0|\phi_1}=0}\left[\frac{1}{2}(\tracedist{\rho_0^B}{\rho_1^B}+\tracedist{\rho_0^C}{\rho_1^C})\right]\\
    &\geq \E_{\ket{\phi_0},\ket{\phi_1}}\left[\frac{1}{2}(\tracedist{\rho_0^B}{\rho_1^B}+\tracedist{\rho_0^C}{\rho_1^C})\right]-\negl(\lambda)\\
    &\geq \frac{1}{4}+\frac{1}{\pi}-\epsilon\geq 0.568,
\end{align*}
where the first line is by definition, the second line is by the inequality before, the third line is by concentration property of the Haar measure, the last line is by theorem \ref{thm:limit_dis} as $\lambda \to \infty$. Thus we finished the proof of theorem \ref{thm:IT_imp}

Here we provide rigorous proof of the third line. Note that for an arbitrary $\ket{\phi_1}$, given $\ket{\phi_0}$ it can be written as $\ket{\phi_1}=a\ket{\phi_0}+\sqrt{1-|a|^2}\ket{\phi_0^{\bot}}$, where $a=\braket{\phi_0|\phi_1}$ and $\braket{\phi_0|\phi_0^{\bot}}=0$. By symmetry, we have that $\E_{\ket{\phi_1}}[|a|^2]=\frac{1}{2^{\lambda}}$. Taking $\epsilon={\lambda}{2^{-\frac{\lambda}{2}}},\kappa=2$ in lemma \ref{lem:levy}, we obtain that 
\begin{align*}
    \Pr\left[\left||a|^2-\frac{1}{2^{\lambda}}\right|\geq \frac{\lambda}{2^\frac{\lambda}{2}}\right]\leq 3\exp\left(-\frac{\delta\lambda^2}{4}\right),
\end{align*}
thus we can derive that 
\begin{align*}
    \E_{\ket{\phi_1}}[|a|]\leq 3\exp\left(-\frac{\delta\lambda^2}{4}\right) \cdot 1 + 1 \cdot \frac{\sqrt{\lambda}+1}{2^{\frac{\lambda}{4}}}=\negl(\lambda).
\end{align*}

Consider the trace distance $\tracedist{\rho_0^B}{\rho_1^B}$ for two random states $\ket{\phi_0},\ket{\phi_1}$. By definition it can be rewritten as $|\Tr_C[\ket{\phi_0}\bra{\phi_0}-\ket{\phi_1}\bra{\phi_1}]|_1$, then following the decomposition of $\ket{\phi_1}=a\ket{\phi_0}+\sqrt{1-|a|^2}\ket{\phi_0^{\bot}}$, we expand the expectation of the term as

\begin{align*}
    &\E_{\substack{a \\ \ket{\phi_0},\ket{\phi_0^\bot} \\ \braket{\phi_0|\phi_0^\bot}=0}}\left[\frac{1}{2}\left|\Tr_C[(1-|a|^2)(\ket{\phi_0}\bra{\phi_0}-\ket{\phi_0^\bot}\bra{\phi_0^\bot})-\sqrt{1-|a|^2}(a\ket{\phi_0}\bra{\phi_0^\bot}+a^*\ket{\phi_0^\bot}\bra{\phi_0})]\right|_1\right]\\
    \leq& \E_{\substack{\ket{\phi_0},\ket{\phi_1}\\ \braket{\phi_0|\phi_1}=0}}\left[\frac{1}{2}\left|\Tr_C[\ket{\phi_0}\bra{\phi_0}-\ket{\phi_1}\bra{\phi_1}]\right|_1\right]+\E_a[|a|]\\
    \leq&\E_{\substack{\ket{\phi_0},\ket{\phi_1}\\ \braket{\phi_0|\phi_1}=0}}\left[\tracedist{\rho^B_0}{\rho^B_1}\right]+\negl(\lambda),
\end{align*}
where the second line is by definition, the third line is from the decomposition of $\ket{\phi_1}$, the fourth line is by triangle inequality and renaming $\ket{\phi_0^\bot}$ to $\ket{\phi_1}$, the last line is by definition and previous bounds on $\E[|a|]$.
}

\section{More on Coset States} \label{sec:moe_coset}

In this section, we will recall the basic properties of coset states. We will then introduce a strengthened \MOE game in the quantum random oracle model (QROM), upon which we will build our unclonable encryption scheme. The last subsection is devoted to prove the security of this strengthened game. 

\subsection{Preliminaries}

In this subsection, we recall the basic definitions and properties of coset states in \cite{coladangelo2021hidden}. 
Let $A \subseteq \F_2^n$ be a subspace. Define its orthogonal complement of $A$ as $A^\perp = \{ b \in \mathbb{F}_2^n \,|\,  \langle a, b\rangle \bmod 2 = 0 \,,\, \forall a \in A \}$. It satisfies $\dim(A) + \dim(A^\perp) = n$. We also let $|A| = 2^{\dim(A)}$ denote the size of $A$.

\begin{definition}[Coset States]
For any subspace $A \subseteq \mathbb{F}_2^n$ and vectors $s, s' \in \mathbb{F}_2^n$, the coset state $\ket {A_{s,s'}}$ is defined as:
\begin{align*}
    \ket {A_{s,s'}} = \frac{1}{\sqrt{|A|}} \sum_{a \in A} (-1)^{\langle s', a\rangle} \ket {a + s}\,.
\end{align*}
\end{definition}

By applying $H^{\otimes n}$ to the state $\ket {A_{s,s'}}$, one obtains exactly $\ket {A^{\perp}_{s', s}}$. Given $A, s, s'$, the coset state is efficiently constructible. 

For a subspace $A$ and vectors $s,s'$, we define $A+s = \{v +s : v \in A\}$, and $A^{\perp}+s' = \{v +s': v \in A^{\perp}\}$. We define $P_{A+s}$ and $P_{A^\perp+s'}$ as the membership checking oracle for both cosets.

It is also convenient for later sections to define a canonical representation of a coset $A+s$, with respect to subspace $A$,

\begin{definition}[Canonical Representative of a Coset]
\label{def:canonical_vec_func}
    For a subspace $A$, we define the function $\can_A(\cdot)$ such that $\can_A(s)$ is the lexicographically smallest vector contained in $A + s$. We call this the canonical representative of coset $A+s$.
\end{definition}

If $\tilde{s} \in A + s$, then $\can_A(s) = \can_A(\tilde{s})$.
We also note that $\can_A(\cdot)$ is polynomial-time computable given the description of $A$. Accordingly, we can efficiently sample from $\canonicalset(A) := \bracketsC{\can_A(s): s \in \F_2^n}$, which denotes the set of canonical representatives for $A$.\\

\par For a fixed subspace $A$, the coset states $\bracketsC{\ket{A_{s,s'}}}_{s \in \canonicalset(A), s' \in \canonicalset(A^\perp)}$ form an orthonormal basis.
{(See Lemma C.2 in \cite{coladangelo2021hidden})}

\medskip
Next, we recall the regular direct product and MOE properties of coset states. These properties will be used to prove the strengthened \MOE property. 

\paragraph{Direct Product Hardness}

\begin{theorem}[Theorem 4.5,4.6 in \cite{coladangelo2021hidden}]
\label{thm:dp_hard}
Let $A\subseteq \mathbb{F}_2^{\lambda}$ be a uniformly random subspace of dimension $\frac{\lambda}{2}$, and $s,s'$ be two uniformly random vectors from $\mathbb{F}_2^\lambda$. Let $\epsilon>0$ such that $1/\epsilon=o(2^{n/2})$. Given one copy of $\ket{A_{s,s'}}$ and oracle access to $P_{A+s}$ and $P_{A^\perp + s'}$, an adversary needs $\Omega(\sqrt{\epsilon}2^{\lambda/2})$ queries to output a pair $(v,w)$ that $v\in A+s$ and $w\in A^\perp+s'$ with probability at least $\epsilon$.
\end{theorem}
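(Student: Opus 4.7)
The plan is to follow the strategy of~\cite{coladangelo2021hidden}, combining an unconditional direct-product bound for coset states with a BBBV-style hybrid argument to handle the membership-checking oracles.

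First, I would establish the bound in the oracle-free setting. The key structural fact is that $\bracketsC{\ket{A_{s,s'}}}_{s \in \canonicalset(A), s' \in \canonicalset(A^\perp)}$ forms an orthonormal basis of the $2^\lambda$-dimensional Hilbert space. Measuring $\ket{A_{s,s'}}$ in the computational basis yields a uniformly random element of $A+s$ and destroys all information about $s'$; the Hadamard basis behaves symmetrically. Consequently, for any zero-query adversary, conditioned on outputting any correct $v\in A+s$, the residual state carries no information about the dual coset $A^\perp + s'$. Since $|A^\perp + s'| = 2^{\lambda/2}$ sits inside an ambient space of size $2^\lambda$, the probability of also outputting a valid $w$ is at most $O(2^{-\lambda/2})$.

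Next, I would promote this bound to the oracle setting via a hybrid argument. Let $\mathcal{A}$ make at most $T$ queries total to $P_{A+s}$ and $P_{A^\perp+s'}$, and let $\ket{\phi_i}$ denote its state just before the $i$-th query. Let $W_i$ be the total squared amplitude of $\ket{\phi_i}$ on the ``useful'' coset for that particular query. Averaging over the uniform choice of $(s,s')$ (for any fixed $A$), each coset covers only a $2^{-\lambda/2}$ fraction of $\mathbb{F}_2^\lambda$, so any fixed query places expected weight at most $2^{-\lambda/2}$ on it, giving $\E\bracketsS{\sum_i W_i} = O(T\cdot 2^{-\lambda/2})$. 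Applying \Cref{thm:bbbv}, the trace distance between the true-oracle state and the state obtained by replacing both oracles with the constant-zero oracle is at most $O(\sqrt{T \sum_i W_i})$. Since the zero-oracle adversary is effectively oracle-free and therefore wins with probability $O(2^{-\lambda/2})$, we obtain $\epsilon \leq O(2^{-\lambda/2}) + O(T^2 \cdot 2^{-\lambda})$ after squaring and averaging, which rearranges to $T = \Omega(\sqrt{\epsilon}\cdot 2^{\lambda/2})$ in the regime $1/\epsilon = o(2^{\lambda/2})$.

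The main obstacle will be that the coset state itself is supported entirely on $A+s$, so at the first query the adversary automatically has unit weight on $A+s$ when querying $P_{A+s}$. To handle this, I would treat the two oracles asymmetrically: use the coset state directly to obtain $v \in A+s$ (which is ``free''), and then view the residual task of producing $w \in A^\perp + s'$ as a search problem with a $2^{-\lambda/2}$-dense marked set, to which the BBBV bound cleanly applies. Alternatively, following~\cite{coladangelo2021hidden} more directly, one can recast the task as a monogamy-style game and invoke a semidefinite programming relaxation that naturally absorbs the query weight contributed by the initial state. Verifying that the averaging argument is tight in the stated regime, and that the asymmetric treatment of the two oracles is without loss of generality, is the only remaining technical step.
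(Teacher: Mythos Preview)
The paper does not prove \Cref{thm:dp_hard}; it is quoted verbatim as Theorems~4.5 and~4.6 of~\cite{coladangelo2021hidden} and used as a black box, so there is no in-paper proof to compare your proposal against.

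That said, your sketch has the right two-phase shape (oracle-free bound, then BBBV to add the membership oracles), which is indeed how~\cite{coladangelo2021hidden} proceeds. The main gap in your write-up is the averaging step in the BBBV argument. You write that ``averaging over the uniform choice of $(s,s')$, each coset covers only a $2^{-\lambda/2}$ fraction of $\mathbb{F}_2^\lambda$, so any fixed query places expected weight at most $2^{-\lambda/2}$ on it.'' But the adversary's state at query time is \emph{not} independent of $(s,s')$: it holds $\ket{A_{s,s'}}$, whose computational-basis support is exactly $A+s$ and whose Hadamard-basis support is exactly $A^\perp+s'$. So the naive average-over-$(s,s')$ argument fails for \emph{both} oracles, not just $P_{A+s}$. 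Your proposed asymmetric fix---measure to obtain $v\in A+s$ for free and then treat finding $w$ as a search problem---does not close this, because the phase information in $\ket{A_{s,s'}}$ encodes $s'$, so unless you actually perform a computational-basis measurement (which the adversary need not do), the residual state and hence subsequent queries to $P_{A^\perp+s'}$ remain correlated with $s'$.

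The actual argument in~\cite{coladangelo2021hidden} handles this more carefully: one shows that an adversary with access to $P_{A+s},P_{A^\perp+s'}$ can be simulated by an adversary with access only to the \emph{subspace} oracles $P_A,P_{A^\perp}$ (using the coset state itself to translate), and then one argues that those subspace oracles are useless because the adversary does not know $A$---this is where the inner-product/subspace-hiding machinery enters. Your alternative suggestion of ``recasting as a monogamy-style game and invoking an SDP relaxation'' is not what~\cite{coladangelo2021hidden} does for direct-product hardness; that machinery is used for the separate MOE theorem (\Cref{thm: monogamy info}).
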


An important corollary immediately follows.

\begin{corollary}
\label{col:dp_negl}
There exists an exponential function $\exp$ such that, for any query-bounded (polynomially many queries to $P_{A+s}, P_{A^\perp + s'}$) adversary, its probability to output a pair $(v,w)$ that $v\in A+s$ and $w\in A^\perp+s'$ is smaller than $1/{\exp{(\lambda)}}$.
\end{corollary}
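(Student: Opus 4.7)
The plan is to derive the corollary as an essentially immediate consequence of the contrapositive of Theorem \ref{thm:dp_hard}. The intuition is that if a $\poly(\lambda)$-query adversary $\As$ could output a valid coset pair with probability that is only inverse-polynomial (or even slightly better than exponentially small), then the query lower bound $\Omega(\sqrt{\epsilon}\cdot 2^{\lambda/2})$ would already force the number of queries to be super-polynomial, contradicting the query-boundedness hypothesis.

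Concretely, I would first fix an explicit exponential function, for instance $\exp(\lambda) := 2^{\lambda/4}$, and argue by contradiction. Suppose there exists a query-bounded adversary $\As$ making $q(\lambda) = \poly(\lambda)$ queries to the oracles $P_{A+s}$ and $P_{A^\perp + s'}$, which outputs a valid pair $(v, w)$ with $v\in A+s$ and $w\in A^\perp+s'$ with probability $\epsilon(\lambda) \geq 2^{-\lambda/4}$ for infinitely many $\lambda$. For any such $\lambda$, we have $1/\epsilon(\lambda) \leq 2^{\lambda/4} = o(2^{\lambda/2})$, so the hypothesis of Theorem \ref{thm:dp_hard} is satisfied. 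Applying the theorem then yields $q(\lambda) = \Omega(\sqrt{\epsilon(\lambda)}\cdot 2^{\lambda/2}) = \Omega(2^{-\lambda/8}\cdot 2^{\lambda/2}) = \Omega(2^{3\lambda/8})$, which contradicts $q(\lambda) = \poly(\lambda)$ for all sufficiently large $\lambda$. Hence the adversary's success probability is strictly below $2^{-\lambda/4} = 1/\exp(\lambda)$ in the asymptotic regime, and adjusting the leading constant (or the exponent) takes care of finitely many small $\lambda$.

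The only subtle point is the choice of the exponent $1/4$: it is picked so that both the side condition $1/\epsilon = o(2^{\lambda/2})$ of Theorem \ref{thm:dp_hard} is automatically met in the regime where the implication is non-vacuous, and the resulting lower bound $2^{3\lambda/8}$ genuinely contradicts any fixed polynomial query bound. Any constant in the open interval $(0, 1/2)$ works equally well in place of $1/4$, so there is no real technical obstacle; the corollary is a clean quantitative repackaging of the direct product hardness theorem tailored to the polynomial-query setting that will be used throughout the rest of the paper.
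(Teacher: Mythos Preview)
Your proposal is correct and is precisely the argument the paper has in mind: the paper states the corollary ``immediately follows'' from Theorem~\ref{thm:dp_hard} without giving a separate proof, and your contrapositive computation with the explicit choice $\exp(\lambda)=2^{\lambda/4}$ is the natural way to spell out that immediacy.
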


\paragraph{Monogamy-of-Entanglement (with Membership Checking Oracles).}

\begin{definition}
\label{def:regMOE}
Let $\lambda \in \mathbb{N}^+$. Consider the following game between a challenger and an adversary $(\As, \Bs, \Cs)$.
\begin{itemize}
    \item The challenger picks a uniformly random subspace $A \subseteq \mathbb{F}_2^\lambda$ of dimension $\frac{\lambda}{2}$, and uniformly random vectors $(s, s') \in \canonicalset(A) \times \canonicalset(A^\perp)$.
    It sends $\ket{A_{s,s'}}$ to $\As$. 
    \item $\As, \Bs, \Cs$ get (quantum) oracle access to $P_{A+s}$ and $P_{A^\perp + s'}$. 
    \item $\As$ creates a bipartite state on registers $\mathsf{B}$ and $\mathsf{C}$. Then, $\As$ sends register $\mathsf{B}$ to $\Bs$, and $\mathsf{C}$ to $\Cs$. 
    \item The description of $A$ is then sent to both $\Bs, \Cs$. 
    \item $\Bs$ and $\Cs$ return respectively $(s_1,s_1')$ and $(s_2, s_2')$.
\end{itemize}
$(\As, \Bs, \Cs)$ wins if and only if for $i \in \{1,2\}$, $s_i = s$ and  $s_i' = s'$.
\end{definition}

\medskip

We denote the advantage (success probability) of the above game by $\adv_{\As, \Bs, \Cs}(\lambda)$. 
We have the following theorem. 
\begin{theorem}[Theorem 4.14, 4.15 in \cite{coladangelo2021hidden}]
\label{thm: monogamy info}
There exists an exponential function $\exp$ such that, for every $\lambda \in \mathbb{N}^+$, for any query-bounded (polynomially many queries to $P_{A+s}, P_{A^\perp + s'}$) adversary $(\As, \Bs, \Cs)$, 
$$\adv_{\As, \Bs, \Cs}(\lambda) \leq 1/\exp(\lambda)\,.$$
\end{theorem}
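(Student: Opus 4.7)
The plan is to reduce the monogamy game to the direct product hardness of coset states, i.e.\ \Cref{col:dp_negl}. Given a query-bounded MOE adversary $(\As, \Bs, \Cs)$ with advantage $p := \adv_{\As, \Bs, \Cs}(\lambda)$, I construct a single direct-product adversary $\mathcal{D}$ which, given $\ket{A_{s,s'}}$, the description of $A$, and oracle access to $P_{A+s}$ and $P_{A^\perp+s'}$, outputs a pair $(v,w)$ with $v \in A+s$ and $w \in A^\perp+s'$ with probability at least $p$. Combined with \Cref{col:dp_negl}, this yields $p \le 1/\exp(\lambda)$ as claimed.

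The reduction is direct. $\mathcal{D}$ first runs $\As$ on $\ket{A_{s,s'}}$, relaying oracle queries to its own membership oracles, and obtains a bipartite state on registers $\mathsf{B} \otimes \mathsf{C}$. Next, $\mathcal{D}$ hands the description of $A$ to both $\Bs$ and $\Cs$, runs $\Bs$ on register $\mathsf{B}$ to obtain $(s_1, s_1')$, and then runs $\Cs$ on register $\mathsf{C}$ to obtain $(s_2, s_2')$, in both cases forwarding oracle queries to its own oracles. Finally, $\mathcal{D}$ outputs $(v, w) := (s_1, s_2')$. The crucial observation is that since $\Bs$ and $\Cs$ act on disjoint registers, their unitaries and terminal measurements commute; hence the joint distribution of $(s_1, s_1', s_2, s_2')$ is identical to that in the MOE game, regardless of whether the two are executed sequentially or in parallel. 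Whenever $(\As, \Bs, \Cs)$ wins, we have $s_1 = s = s_2$ and $s_1' = s' = s_2'$, so $s_1 \in A + s$ and $s_2' \in A^\perp + s'$ (using $0 \in A$ and $0 \in A^\perp$), meaning $\mathcal{D}$ wins the direct product game on the same run.

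The query complexity of $\mathcal{D}$ is polynomial in those of $\As, \Bs, \Cs$, so $\mathcal{D}$ remains query-bounded, and invoking \Cref{col:dp_negl} finishes the argument. The only point I see as requiring real care — and thus the main thing to verify cleanly — is that sequentializing $\Bs$ and $\Cs$ across disjoint tensor factors genuinely preserves the joint statistics of their outputs; this reduces to the standard fact that operators on register $\mathsf{B}$ commute with operators on register $\mathsf{C}$, so no rewinding, cloning, or coherence-preserving tricks are needed. A minor bookkeeping point is that $\As$ in the MOE game is not given the description of $A$ explicitly, only oracle access, while $\mathcal{D}$ does know $A$; this is harmless since $\mathcal{D}$ simply does not forward $A$ when simulating $\As$.
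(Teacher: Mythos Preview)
Your reduction has a fatal gap: the direct-product adversary $\mathcal{D}$ you construct must know the description of $A$ in order to hand it to $\Bs$ and $\Cs$ in the second phase of the MOE game. But \Cref{thm:dp_hard} and \Cref{col:dp_negl} are stated for adversaries that receive only $\ket{A_{s,s'}}$ and oracle access to $P_{A+s}, P_{A^\perp+s'}$, \emph{not} the description of $A$. This is not a technicality that can be patched: once $A$ is known, direct-product hardness is trivially false. The paper's own decryption algorithm (and \Cref{lem:cp:useful}) shows that given $\ket{A_{s,s'}}$ together with a description of $A$, one can coherently compute $\can_A$ on the state to extract $s$ without disturbing it, then apply $H^{\otimes n}$ and repeat with $A^\perp$ to extract $s'$. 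So an adversary knowing $A$ wins the direct-product game with probability $1$, and your reduction proves nothing.

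More conceptually, the two hardness statements rely on orthogonal obstructions. Direct-product hardness holds because the subspace $A$ is hidden; once it is revealed, a single party with the full state recovers $(s,s')$ easily. The MOE game, by contrast, reveals $A$ to $\Bs$ and $\Cs$ and is hard only because of the split: no-cloning prevents $\As$ from giving each party enough of the state to recover both coordinates. There is no black-box reduction from MOE to direct-product hardness in the direction you attempt.

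The paper does not prove \Cref{thm: monogamy info} at all; it imports the result from \cite{coladangelo2021hidden} (sub-exponential bound) and \cite{culf2021monogamy} (exponential bound). Those proofs are direct analyses of the MOE game---extending the overlap techniques of \cite{Tomamichel_2013} to coset states and handling the membership oracles separately---and do not proceed via direct-product hardness.
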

Note that in \cite{coladangelo2021hidden}, the authors only proved the above theorem for a sub-exponential function and membership checking oracles are given in the form of indistinguishability obfuscation (iO). The proof trivially holds if we replace iO with VBB obfuscation (quantum access to these oracles). Culf and Vidick \cite{culf2021monogamy} further proved the theorem holds for an exponential function.

\subsection{Strengthened MOE Game in the QROM}

In this subsection, we will introduce the strengthened MOE game in the QROM and state our main theorem. We present the proof in the next section.

\begin{definition}
Let $\lambda \in \mathbb{N}^+$. Consider the following security game between a challenger and an adversary $(\As, \Bs, \Cs)$ with a random oracle $H:\F_2^\lambda \times \F_2^\lambda \to \{0,1\}^{n(\lambda)}$ .
\begin{itemize}
    \item The adversary $\As$ generates $\Delta\in\{0,1\}^{n(\lambda)}$ and sends $\Delta$ to the challenger.
    \item The challenger samples a random subspace $A \subseteq \F_2^\lambda$ of dimension $\lambda/2$ and two random vectors $(s,s') \in \canonicalset(A) \times \canonicalset(A^\perp)$. %
    The challenger also randomly chooses a bit $b\in\{0,1\}$ and calculates $w=H(s,s')\oplus( b\cdot\Delta)$.
    
    It gives $\ket{A_{s, s'}}$ and $w$ to $\As$.
    
    \item $\As, \Bs, \Cs$ get (quantum) oracle access to $P_{A+s}$ and $P_{A^\perp + s'}$. %
    
    \item $\As$ produces a quantum state over registers $\B \C$ and sends $\B$ to $\Bs$ and $\C$ to $\Cs$.
    
    \item $\Bs, \Cs$ are given the description of $A$, they try to produce bits $b_{\Bs}, b_{\Cs}$. 
\end{itemize}
$(\As, \Bs, \Cs)$ win if and only if $b_{\Bs} =b_{\Cs} = b$. 
\end{definition}

We denote the advantage of the above game by $\adv_{\As, \Bs, \Cs}(\lambda)$. Note that since $s, s'$ is defined as the canonical vector of both cosets, they are uniquely defined; similarly, $H(s, s')$ is also uniquely defined.

We show the following theorem:
\begin{theorem} \label{thm:moe_rom}
    Let $n = \Omega(\lambda)$, then for every $\lambda \in \mathbb{N}^+$ and all query-bounded algorithms $(\As, \Bs, \Cs)$, $\adv_{\As, \Bs, \Cs}(\lambda) \leq \frac{1}{2} + \negl(\lambda)$. 
\end{theorem}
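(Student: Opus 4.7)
The plan is to follow the overview sketched in the technical overview and reduce $\adv_{\As,\Bs,\Cs}(\lambda)$ to a ``reprogramming game'' whose winning probability can be bounded via \Cref{thm: monogamy info} (the MOE game with membership checking oracles), using Jordan's lemma as the glue. First I would replace the challenger-supplied value $w$ by a uniformly random string in $\{0,1\}^{n(\lambda)}$: since $\As$ only sees $w$ and not $H(s,s')$ directly, if we view $H(s,s')$ as fresh randomness then the marginal distribution of $w$ is uniform regardless of $b$; formally this step is justified by \Cref{thm:bbbv} applied to $\As$'s queries to $H$ at input $(s,s')$, because large query weight on $(s,s')$ would let one extract $s,s'$ by measuring a random query and contradict \Cref{thm: monogamy info}. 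After this, the game becomes the identical-challenge reprogramming game of \Cref{fig:CS_reprogram} (up to negligible loss), in which $\Bs$ and $\Cs$ both get either $H_0:=H$ or $H_1:=H$ with the value at $(s,s')$ xored by $\Delta$, and must guess the shared bit $b$.

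Next I would bound the identical-challenge winning probability by the independent-challenge winning probability plus a negligible term, using the Jordan-block machinery of \Cref{sec:Jordan}. Let $\ket{\sigma}_{\mathbf{BC}}$ denote the bipartite state produced by $\As$, and define the projections $\Pi^B_b$ and $\Pi^C_b$ (``run $\Bs$ on $H_b$, project on output $b$, rewind'', and symmetrically for $\Cs$). The identical-challenge success probability equals $\Tr[\frac{1}{2}(\Pi^B_0\otimes\Pi^C_0+\Pi^B_1\otimes\Pi^C_1)\ket{\sigma}\bra{\sigma}]$, while the independent-challenge probability is $\Tr[\frac{\Pi^B_0+\Pi^B_1}{2}\otimes\frac{\Pi^C_0+\Pi^C_1}{2}\ket{\sigma}\bra{\sigma}]$. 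Expanding $\ket{\sigma}=\sum_{p,q}\alpha_{p,q}\ket{\phi_p}\ket{\psi_q}$ in the joint eigenbases of $(\Pi^B_0+\Pi^B_1)/2$ and $(\Pi^C_0+\Pi^C_1)/2$, I would split $\ket{\sigma}$ into $\ket{\sigma^{\bad}_{\Bs}}$ (supported on $|p-1/2|\le\varepsilon$) and $\ket{\sigma^{\bad}_{\Cs}}$ (supported on $|q-1/2|\le\varepsilon$, $|p-1/2|>\varepsilon$). The diagonal contribution from each piece is already $\le 1/2+\varepsilon+\text{(independent-mode bound)}$. For the cross terms $\langle\sigma^{\bad}_{\Bs}|\Pi^B_b\otimes\Pi^C_b|\sigma^{\bad}_{\Cs}\rangle$, \Cref{cor:orthogonal_eigen} applied coordinate-wise to the Jordan decomposition forces these to vanish, because any two eigenvectors $\ket{\phi_p},\ket{\phi_{p'}}$ of $(\Pi^B_0+\Pi^B_1)/2$ contributing to those cross terms have $p+p'\neq 1$ and $p\neq p'$ (one is $\le 1/2+\varepsilon$, the other is $>1/2+\varepsilon$).

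Therefore it suffices to bound the independent-challenge game. Here I would argue by contradiction: if $(\As,\Bs,\Cs)$ wins with probability $\ge 1/2+\gamma$ for non-negligible $\gamma$, then applying the symmetric efficient threshold measurement $\sati_{\cP_B,1/4}^{\epsilon,\delta}\otimes\sati_{\cP_C,1/4}^{\epsilon,\delta}$ of \Cref{thm:thres_approx} to $\ket{\sigma}$ produces, with non-negligible probability, a product (or near-product) state $(\Bs',\Cs')$ where each side has success probability bounded away from $1/2$ by $\Omega(\gamma)$ when guessing which of $H_0,H_1$ it is given oracle access to. Since the two oracles differ only at $(s,s')$, the only way to distinguish them is to query $(s,s')$ with non-negligible amplitude, so by \Cref{thm:bbbv} plus a random-query measurement I can extract $(s,s')$ from $\Bs'$ and independently from $\Cs'$, violating \Cref{thm: monogamy info}. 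A subtle point is that the reduction does not know $s,s'$ and must simulate the oracles $H_0,H_1$ given only $A$ and the membership oracles $P_{A+s},P_{A^\perp+s'}$; as sketched in the overview, the required point functions $Q_s,Q_{s'}$ can be implemented by combining membership checking with a lexicographic minimality test via Gaussian elimination on $A$.

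The main obstacle, and the place where the use of coset states (rather than Wiesner states) is essential, is precisely this simulation of $H_1$ under membership oracles. Without oracles like $P_{A+s},P_{A^\perp+s'}$ one cannot reprogram $H$ at $(s,s')$ without knowing $(s,s')$, and the analogous simulation for BB84 basis oracles collapses the MOE hardness by the attack of \cite{lutomirski2010online}. A secondary subtlety is that \Cref{thm:thres_approx} only produces a state that is approximately of product form on eigenvectors of the appropriate threshold projections; I would therefore work with the weaker ``conditional on non-negligible one-sided event the other side still succeeds with probability $p_{\Cs}$'' guarantee mentioned in the overview, which is what actually suffices to independently extract from $\Bs'$ and $\Cs'$ and contradict \Cref{thm: monogamy info}.
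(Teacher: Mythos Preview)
Your proposal is correct and follows essentially the same approach as the paper: the same hybrid sequence (replace $w$ by fresh randomness via \Cref{thm:bbbv}, then recast as the reprogramming game), the same Jordan-basis decomposition of $\ket{\sigma}$ into $\ket{\sigma^{\bad}_{\Bs}}+\ket{\sigma^{\bad}_{\Cs}}$ with cross terms killed by \Cref{cor:orthogonal_eigen}, and the same contradiction of \Cref{thm: monogamy info} via $\sati$ and a random-query extraction, with the membership oracles used to simulate $H_0,H_1$. One small deviation: for the first hybrid the paper invokes direct product hardness (\Cref{col:dp_negl}) rather than the full MOE game, since only $\As$ is involved at that stage; your use of \Cref{thm: monogamy info} works but is overkill there.
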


\subsection{Proof for \texorpdfstring{\Cref{thm:moe_rom}}{Strengthened MOE Game}}
\label{sec:main_proof}

\begin{proof}
    We prove the theorem by following hybrid arguments.
    
    \begin{hybrid}{0}
    \label{hyb:ue0}
    This hybrid is the original game.
    \end{hybrid}
    \begin{hybrid}{1}
    \label{hyb:ue1}
     This hybrid follows \Cref{hyb:ue0}, but the oracle of $\As$ will be reprogrammed as $H_{s,s'}$ defined as follows:
    
    \begin{align*}
                H_{s,s'}(z, z') = \begin{cases}
                        u   & \text{ if } z = s, z' = s' \\
                        H(z, z') & \text{ otherwise }
                    \end{cases},
    \end{align*}
    where $u \in \bit^n$ is chosen uniformly at random.
    \end{hybrid}
    
    \begin{hybrid}{2}
    \label{hyb:ue2}
    This hybrid will modify the access to random oracle of $\Bs$ and $\Cs$.
    \begin{itemize}
    \item The adversary $\As$ generates $\Delta\in\{0,1\}^{n(\lambda)}$ and sends $\Delta$ to the challenger.
    \item The challenger samples a random subspace $A \subseteq \F_2^\lambda$ of dimension $\lambda/2$ and two random vectors $(s,s') \in \canonicalset(A) \times \canonicalset(A^\perp)$. The challenger uniform randomly samples a bit $b\in\{0,1\}$ and $r\in\{0,1\}^{n(\lambda)}$, and defines the oracle $H_{s,s'}^b$ as follows:

    \begin{align*}
                H_{s,s'}^b(z, z') = \begin{cases}
                        r\oplus (b\cdot\Delta)   & \text{ if } z = s, z' = s' \\
                        H(z, z') & \text{ otherwise }
                    \end{cases},
            \end{align*}
            
    It gives $\ket{A_{s, s'}}$ and $r$ to $\As$.
    
    \item $\As, \Bs, \Cs$ get (quantum) oracle access to $P_{A+s}$ and $P_{A^\perp + s'}$. %
    
    \item With access to quantum random oracle $H_{s,s'}$, $\As$ produces a quantum state over registers $\B \C$ and sends $\B$ to $\Bs$ and $\C$ to $\Cs$.
    
    \item With access to quantum random oracle $H_{s,s'}^b$, $\Bs, \Cs$ are given the description of $A$, they try to produce bits $b_{\Bs}, b_{\Cs}$. 
\end{itemize}
$(\As, \Bs, \Cs)$ win if and only if $b_{\Bs} =b_{\Cs} = b$.
    \end{hybrid} 

    We denote by $p_i$ the optimal success probability of the game in \textbf{Hybrid $\mathbf{i}$}. For the relations between different $p_i$, we have following lemmas:
    
    \begin{lemma}
    \label{lem:uehyb01}
        $\abs{p_0-p_1}\le\negl(\lambda)$.
    \end{lemma}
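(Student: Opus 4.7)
The plan is a standard BBBV-based extraction reduction from \Cref{hyb:ue0} vs.\ \Cref{hyb:ue1} to the direct product hardness of coset states (\Cref{col:dp_negl}). The first step is to couple the two hybrids so that they share all randomness except for the single value that $\As$'s oracle returns at the input $(s,s')$. In \Cref{hyb:ue0}, equivalently resample by drawing $w \in \{0,1\}^{n}$ uniformly first and then defining $H(s,s') := w \oplus (b\cdot\Delta)$ while keeping $H$ uniform elsewhere. In \Cref{hyb:ue1}, the same resampling trick shows that $\As$'s view consists of a uniform $w$ together with a fully uniform random oracle (since $u$ at $(s,s')$ and $H$ elsewhere are independent uniform). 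Under this coupling, $A, s, s', b, \Delta, w$ and the restriction of $\As$'s oracle to all inputs other than $(s,s')$ are identically distributed across the two hybrids; only the value of $\As$'s oracle at the single point $(s,s')$ differs.

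The second step is to bound $|p_0 - p_1|$ by the trace distance between the bipartite states $\As$ outputs in the two hybrids, which is valid because $\Bs$ and $\Cs$ perform exactly the same operations (their oracle is the unchanged $H$). Applying \Cref{thm:bbbv} with $F = \{0, \ldots, T-1\} \times \{(s,s')\}$, this trace distance is at most $O(\sqrt{T \cdot W})$, where $T = \poly(\lambda)$ bounds $\As$'s number of queries and $W$ is $\As$'s total query weight on the input $(s,s')$. Consequently, if $|p_0 - p_1| \geq \epsilon$ for some non-negligible $\epsilon$, then $W \geq \Omega(\epsilon^2/T)$.

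The third step is to use this to break direct product hardness. Construct $\As'$ that, on input $\ket{A_{s,s'}}$ and oracle access to $P_{A+s}, P_{A^{\perp}+s'}$, internally samples $b, \Delta, w$ uniformly, simulates a fresh random oracle via a $2T$-wise independent hash, runs $\As$ on the resulting simulated \Cref{hyb:ue1} view up to a uniformly chosen query index $i \in \{0, \ldots, T-1\}$, measures the query-input register in the computational basis to obtain a candidate $(\tilde{s}, \tilde{s}')$, and outputs this pair. The simulation is exact because \Cref{hyb:ue1}'s view of $\As$ requires nothing beyond a random oracle, a uniform $w$, the coset state, and the membership oracles --- in particular, no explicit knowledge of $A$, $s$, or $s'$. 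By the query-weight bound, $(\tilde{s}, \tilde{s}') = (s, s')$ with probability at least $\Omega(\epsilon^2/T^2) = 1/\poly(\lambda)$, and since $s \in A+s$ and $s' \in A^\perp+s'$ hold trivially, this wins the direct product game, contradicting \Cref{col:dp_negl}.

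The main obstacle is setting up the coupling cleanly enough that $\As'$'s simulated view is statistically identical to \Cref{hyb:ue1}; once this is in place, the BBBV-plus-measure-a-random-query template is entirely standard, and the only remaining work is choosing constants and verifying that the $2T$-wise independent simulation fools $\As$ exactly (which it does, since $\As$ makes at most $T$ quantum queries).
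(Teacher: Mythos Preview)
Your proposal is correct and follows essentially the same approach as the paper: both argue via \Cref{thm:bbbv} that a non-negligible gap between the hybrids forces $\As$ to place non-negligible query weight on $(s,s')$, and then extract $(s,s')$ by measuring a random query to contradict \Cref{col:dp_negl}. Your write-up is simply more detailed---you make the coupling explicit, spell out why $\As'$ can simulate \Cref{hyb:ue1} perfectly (uniform $w$, fully random oracle, $2T$-wise independent hash), and note that $\Bs,\Cs$'s processing is identical across the hybrids---whereas the paper compresses all of this into three lines.
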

    
    \begin{lemma}
    \label{lem:uehyb12}
        $p_1=p_2$.
    \end{lemma}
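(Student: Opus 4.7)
The plan is to observe that the two hybrids induce the same joint distribution on all data visible to the adversary, so the optimal winning probabilities coincide. The bridge is a trivial change of variables in which $H(s,s')$ plays the role of a one-time pad.

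First, I will make explicit the random variables and inputs/oracles in each hybrid. In both hybrids the challenger samples $A, s, s', b, \Delta$ and a random oracle $H$, plus an independent uniform $u \in \bit^{n(\lambda)}$ used to define the oracle $H_{s,s'}$ seen by $\As$ (which agrees with $H$ off $(s,s')$ and equals $u$ at $(s,s')$). In Hybrid 1, $\As$ additionally receives the scalar $w := H(s,s') \oplus (b \cdot \Delta)$, while $\Bs, \Cs$ have access to the unreprogrammed $H$. In Hybrid 2, the challenger instead samples an additional uniform $r \in \bit^{n(\lambda)}$, hands $w := r$ to $\As$, and gives $\Bs, \Cs$ access to $H_{s,s'}^b$, which agrees with $H$ off $(s,s')$ and equals $r \oplus (b \cdot \Delta)$ at $(s,s')$.

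Second, I will argue distributional equivalence by renaming. In Hybrid 1 the value $H(s,s')$ is uniform and independent of everything else in the view --- independent of $u$, of $H$'s values at all points $\neq (s,s')$, and of $A, s, s', b, \Delta$ --- because $\As$'s oracle $H_{s,s'}$ has replaced $H(s,s')$ by the independent value $u$. Therefore, defining $r := H(s,s') \oplus (b \cdot \Delta)$ yields a uniform $r$ that is independent of all other variables, and equivalently $H(s,s') = r \oplus (b \cdot \Delta)$. Under this relabeling the scalar given to $\As$ becomes $w = r$, and the oracle seen by $\Bs, \Cs$ equals $r \oplus (b \cdot \Delta)$ at $(s,s')$ and $H$ elsewhere --- which is exactly $H_{s,s'}^b$. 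The joint distribution of $(A, s, s', b, \Delta, \text{input to }\As, \text{oracle of }\Bs, \text{oracle of }\Cs)$ therefore coincides with the Hybrid~2 distribution, so any adversary wins with the same probability; hence $p_1 = p_2$.

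The hard part is mostly conceptual bookkeeping: one must verify that in Hybrid 1 the variable $H(s,s')$ genuinely does not appear anywhere in $\As$'s view other than through the one-time-padded $w$ (this is where the reprogramming to $H_{s,s'}$ in Hybrid 1 is essential), and that after the substitution the $\Bs, \Cs$ oracle is correctly reconstituted as $H_{s,s'}^b$ rather than as either $H$ or $H_{s,s'}$. Once these two sanity checks are dispatched, there is no quantitative slack to track and the equality is exact rather than up to negligible error.
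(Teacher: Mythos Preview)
Your proposal is correct and follows essentially the same approach as the paper: both argue that renaming $w = H(s,s') \oplus (b\cdot\Delta)$ to $r$ makes the two hybrids identically distributed, using that $H(s,s')$ is uniform and (after $\As$'s oracle has been reprogrammed to $H_{s,s'}$) independent of everything else in the view. Your write-up is more explicit about the independence bookkeeping, but the core one-time-pad change of variables is the same.
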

    
    \begin{lemma}
    \label{lem:rprom}
        $p_2\leq\frac{1}{2}+\negl(\lambda)$.
    \end{lemma}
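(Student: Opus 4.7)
The plan is to follow the two-step strategy outlined in the technical overview: first reduce the identical challenge mode in Hybrid 2 to an independent challenge mode via a Jordan-type decomposition, and then bound the independent mode by reducing to the monogamy-of-entanglement game for coset states with membership checking oracles (\Cref{thm: monogamy info}). Throughout, let $\ket{\sigma}_{\B\C}$ denote the bipartite state output by $\As$ after the challenger's message $(\ket{A_{s,s'}}, r)$ and $\As$'s oracle access to $H_{s,s'}$, and define for each $b \in \bit$ the projector $\Pi^B_b$ on $\B$ corresponding to ``run $\Bs$ with oracle $H^b_{s,s'}$ and access to $P_{A+s}, P_{A^\perp+s'}$ and $A$, project onto output $b$, uncompute,'' and analogously $\Pi^C_b$ on $\C$. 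By construction, $p_2 = \E_{A,s,s',H,r,\Delta}\,\Tr\!\left[\tfrac12(\Pi^B_0\otimes \Pi^C_0 + \Pi^B_1\otimes \Pi^C_1)\,\ket{\sigma}\bra{\sigma}\right]$.

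The first step is to bound the \emph{independent challenge} success probability $p_2^{\mathrm{ind}} := \E\,\Tr\!\left[\tfrac{\Pi^B_0+\Pi^B_1}{2}\otimes \tfrac{\Pi^C_0+\Pi^C_1}{2}\,\ket{\sigma}\bra{\sigma}\right]$ by $1/2+\negl(\lambda)$. Fix an inverse-polynomial $\eps$. I apply the efficient symmetric threshold measurement $\sati_{\cP_B,\eps}^{\eps/4,\delta}\otimes \sati_{\cP_C,\eps}^{\eps/4,\delta}$ from \Cref{thm:thres_approx} to $\ket{\sigma}\bra{\sigma}$, where $\cP_B$ is the binary POVM corresponding to $(\Pi^B_0+\Pi^B_1)/2$ (written as a mixture of projective measurements by first sampling $b$ then running $\Pi^B_b$) and analogously for $\cP_C$. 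If $p_2^{\mathrm{ind}} \geq 1/2 + 2\eps$, then a Markov-style argument shows the algorithm outputs $(1,1)$ with non-negligible probability, yielding a post-measurement state $4\delta$-close to one supported on eigenvectors where both $\Bs$'s and $\Cs$'s distinguishing advantages exceed $\eps/2$. The crucial observation is that since $H_{s,s'}^0$ and $H_{s,s'}^1$ differ only on the single input $(s,s')$, any distinguisher with advantage $\eps/2$ must place weight $\Omega(\eps^2/T)$ on query $(s,s')$ (by \Cref{thm:bbbv}); hence measuring a uniformly random query to $\Bs'$ yields $(s,s')$ with inverse-polynomial probability, and the same holds for $\Cs'$. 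Since $\Bs'$ and $\Cs'$ act on disentangled registers and both can simulate their oracles using $A$ together with $P_{A+s}, P_{A^\perp+s'}$ (via $H^b_{s,s'}(z,z') = H(z,z') \oplus [z=s\wedge z'=s']\cdot(r\oplus b\Delta)$, and the canonical-representative check is efficient from $A$ and the membership oracle as sketched in the overview), this produces an MOE adversary contradicting \Cref{thm: monogamy info}. Alice's oracle $H_{s,s'}$ can likewise be simulated, since the challenger need only know $r$ (chosen independently) to answer her queries at $(s,s')$.

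The second step, which I expect to be the main technical obstacle, is to relate $p_2$ to $p_2^{\mathrm{ind}}$. Let $\{\ket{\phi_p}\}$ be an eigenbasis of $(\Pi^B_0+\Pi^B_1)/2$ with eigenvalues $p$, and $\{\ket{\psi_q}\}$ an eigenbasis of $(\Pi^C_0+\Pi^C_1)/2$ with eigenvalues $q$. Expand $\ket{\sigma}=\sum_{p,q}\alpha_{p,q}\ket{\phi_p}\ket{\psi_q}$, and decompose $\ket{\sigma}=\ket{\sigma_{\Bs}^{\bad}}+\ket{\sigma_{\Cs}^{\bad}}$, where $\ket{\sigma_{\Bs}^{\bad}}$ collects terms with $|p-1/2|\le \eps$ and $\ket{\sigma_{\Cs}^{\bad}}$ collects those with $|p-1/2|>\eps$ (which forces $|q-1/2|\le\eps$ up to $\negl$, by the first step applied to the inefficient threshold projector $\widetilde E^B_{>\eps}\otimes \widetilde E^C_{>\eps}$). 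Then
\begin{align*}
p_2 &\le \tfrac12+\eps+\tfrac12\!\left(\big|\bra{\sigma_{\Bs}^{\bad}}\Pi^B_0\otimes\Pi^C_0\ket{\sigma_{\Cs}^{\bad}}\big|+\big|\bra{\sigma_{\Bs}^{\bad}}\Pi^B_1\otimes\Pi^C_1\ket{\sigma_{\Cs}^{\bad}}\big|\right).
\end{align*}
I will prove each cross term vanishes by invoking \Cref{cor:orthogonal_eigen}: for any $p$ with $|p-1/2|\le\eps$ and $p'$ with $|p'-1/2|>\eps$, one has $p+p'\ne 1$ and $p\ne p'$ (taking $\eps$ small enough), so $\bra{\phi_p}\Pi^B_b\ket{\phi_{p'}}=0$ for $b\in\bit$; tensoring with the $\Cs$ side and summing gives zero. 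Combining with the first step yields $p_2\le 1/2+\eps+\negl(\lambda)$, and since $\eps$ is an arbitrary inverse polynomial, $p_2\le 1/2+\negl(\lambda)$.

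The main delicate points I anticipate are: (i) handling eigenvalue multiplicities in the Jordan decomposition (requires picking a joint eigenbasis and grouping blocks rather than relying on distinct eigenvalues as in the overview); (ii) formalizing the ``weak disentanglement'' guarantee from the ATI output so that the MOE reduction uses two genuinely non-communicating parties; and (iii) the bookkeeping to simulate Alice's reprogrammed oracle $H_{s,s'}$ inside the MOE reduction without knowing $(s,s')$, which is the place where the membership checking oracles of \Cref{def:regMOE} are used essentially.
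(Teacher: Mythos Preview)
Your proposal is correct and follows essentially the same approach as the paper's proof: establish the weight bound (what the paper isolates as \Cref{lem:smallweight}) via the MOE reduction using $\sati$ together with BBBV extraction, then decompose $\ket{\sigma}$ in the joint eigenbasis of $(\Pi^B_0+\Pi^B_1)/2$ and $(\Pi^C_0+\Pi^C_1)/2$ and invoke \Cref{cor:orthogonal_eigen} to kill the cross terms, leaving $p_2\le 1/2+1/p+\negl$. One small clarification on your point (iii): the membership checking oracles are not needed to simulate $\As$'s oracle $H_{s,s'}$ (its value at $(s,s')$ is a fresh random $u$, so the reduction's freshly sampled $H$ already has the right distribution and can be handed to $\As$ unmodified), but they are essential for $\Bs',\Cs'$ to implement $\Pi^B_b,\Pi^C_b$, since those require reprogramming at $(s,s')$ to the \emph{specific} values $r$ and $r\oplus\Delta$.
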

    Combining the three lemmas, we have completed the proof of \Cref{thm:moe_rom}.

\end{proof}

Now we provide proofs for lemmas beyond.

\begin{proof}[Proof for \Cref{lem:uehyb01}] 
    We prove by contradiction. Suppose $p_0\geq p_1+1/q(\lambda)$ for some polynomial $q(\lambda)$, then we can construct an adversary $\As'$ that violates the direct product hardness of coset states. %
    $\As'$ will perform as follows:
    \begin{itemize}
        \item $\As'$ 
        samples a random oracle $H:\F_2^\lambda \times \F_2^\lambda \to \{0,1\}^{n(\lambda)}$.
        \item $\As'$ simulates $\As$ using $H$ and applies computational basis measurement on a random quantum query made by $\As$ to the random oracle. 
    \end{itemize} 
    By \Cref{thm:bbbv}, assuming $\As$ makes at most $T$ queries, then $\As'$ gets $(s,s')$ with probability at least $4/(q^2T)$, a contradiction to \Cref{col:dp_negl}.
\end{proof}

\begin{proof}[Proof of \Cref{lem:uehyb12}]
    Fixing $\Delta$ and $b$, the two games are identical by renaming the $w=H(s,s')\oplus(b\cdot\Delta)$ to $r$. Since $H(s,s')$ is uniformly random, its distribution is identical to $r$.
\end{proof}

\begin{proof}[Proof of \Cref{lem:rprom}]

    Fixing $A, r, \Delta$, two canonical vectors $s, s'$, let $H_{-s, s'}$ be a partial random oracle that is defined on every input except $(s, s')$. Fix any partial random oracle $H_{-s, s'}$, 
    we define two \emph{projectors} $\Pi^B_0, \Pi^B_1$ over register $\B$ as:
    \begin{itemize}
        \item $\Pi^B_0$: runs $\Bs$ on input $A$ with oracle access to $ H_{s,s'}^0$ where $H_{s,s'}^0$ is the same as $H_{-s,s'}$ except on input $(s, s')$ it outputs $r$;  it measures if the outcome is $r$; then it undoes all the computation. 
        \item $\Pi^B_1$: similar to $\Pi^B_0$ except on input $(s, s')$, the random oracle $H^1_{s,s'}$ outputs $r\oplus\Delta$ and it checks if the outcome is $r\oplus\Delta$. 
    \end{itemize}
    Let $\{\ket {\phi_i}\}_{i}$ be a set of the eigenvectors of $(\Pi^B_0 + \Pi^B_1)/2$ with eigenvalues $\{\lambda_i\}_i$.

    Fixing the same $A, s, s', r $ and $H_{-s,s'}$, we can similarly define $\Pi^C_0, \Pi^C_1$ for $\Cs$. Let $\{\ket {\psi_j}\}_{j}$ be a set of the eigenvectors of $(\Pi^C_0 + \Pi^C_1)/2$ with eigenvalues $\{\mu_j\}_{j}$. 
    
    \medskip
    
    Let $\ket {\phi_{\B \C}}$ be the state prepared by $\As$. Without loss of generality, we can assume the state is pure. 
    We write the state under the basis $\{\ket {\phi_i}\}_{i}$ and  $\{\ket {\psi_j}\}_{j}$: 
    \begin{align*}
        \ket {\phi_{\B \C}} = \sum_{i, j} \alpha_{i, j} \ket{\phi_i}_\B \otimes \ket{\psi_j}_\C. 
    \end{align*}
    \begin{lemma}\label{lem:smallweight}
        Taken the randomness of $A, s, s'$ and $H_{-s,s'}$, for every polynomial $p(\cdot)$, there exists a negligible function $\negl$ such that with overwhelming probability 
        the following weight is bounded: 
        \begin{align*}
            \sum_{\substack{i: \; |\lambda_i - 1/2| > 1/p \\ j: \; |\mu_j - 1/2| > 1/p}} |\alpha_{i,j}|^2 \leq \negl(n). 
        \end{align*}
    \end{lemma}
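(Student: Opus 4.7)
The plan is to prove the lemma by a reduction to the coset-state monogamy-of-entanglement game (\Cref{thm: monogamy info}), exploiting the fact that the MOE game bounds only the number of queries to $P_{A+s}, P_{A^\perp+s'}$ but allows unbounded computation. The target is to construct a query-bounded MOE adversary whose winning probability is at least (a polynomial factor times) $\E[W]$, where $W$ denotes the bad weight in the lemma statement; \Cref{thm: monogamy info} then forces $\E[W] \le 2^{-\Omega(\lambda)}$, and Markov's inequality upgrades this to $W \le \negl$ with overwhelming probability, as claimed.

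Concretely, I would build $(\As', \Bs', \Cs')$ as follows. $\As'$ receives $\ket{A_{s,s'}}$ and oracle access to $P_{A+s}, P_{A^\perp+s'}$; it samples $H_{-s,s'}, \Delta, u$ internally and simulates the reprogrammed oracle $H_{s,s'}$ for the underlying $\As$ by detecting the unknown input $(s,s')$ through the membership oracles together with the public functions $\can_A, \can_{A^\perp}$. After $\As$ produces $\ket{\phi_{\B\C}}$, $\As'$ applies the inefficient eigenvalue projective measurements of \Cref{thm:inefficient_measure}, one for the POVM whose success element is $\tfrac{\Pi^B_0 + \Pi^B_1}{2}$ on register $\B$ and one for the symmetric POVM on $\C$. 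These two measurements commute and, on the expansion $\sum_{i,j} \alpha_{i,j} \ket{\phi_i}\ket{\psi_j}$, return $(\lambda_i, \mu_j)$ with probability $|\alpha_{i,j}|^2$ while collapsing the state onto the \emph{product} eigenvector $\ket{\phi_i}\ket{\psi_j}$. If both eigenvalues lie in the bad range, $\As'$ forwards the two halves to $\Bs', \Cs'$; otherwise it aborts. Each of $\Bs', \Cs'$ then receives $A$ and its half of the state, simulates both $H^0_{s,s'}$ and $H^1_{s,s'}$ via the same canonical-representative trick, samples a bit $b$, runs the underlying $\Bs$ (resp.\ $\Cs$) with oracle $H^b_{s,s'}$, and measures a uniformly random query register in the computational basis to output $(z,z')$.

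For the analysis, $\As'$ does not abort with probability exactly $W$, and conditioned on this event $\ket{\phi_i}$ is an eigenvector on which the underlying $\Bs$ distinguishes the two oracles with advantage greater than $1/p$. Since the oracles differ only at $(s,s')$, \Cref{thm:bbbv} forces $\Bs$'s total query amplitude on $(s,s')$ to be $\Omega(1/(p^2 T))$, so measuring a random query yields $(s,s')$ with probability $\Omega(1/(p^2 T^2))$; the symmetric statement holds for $\Cs'$. Because the forwarded state is a product and the two extractions act on disjoint registers, the successes are independent and jointly yield $(s,s')$ with probability $\Omega(1/(p^4 T^4))$. Averaging over $(A, s, s', H_{-s,s'})$ gives total winning probability $\Omega(\E[W]/(p^4 T^4))$, and \Cref{thm: monogamy info} finishes the proof.

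The main obstacle, and the technical reason coset states (rather than Wiesner states) make this argument go through, is the oracle-simulation step used throughout the reduction: reprogramming a quantum random oracle at the secret input $(s,s')$ given only $A$ and oracle access to $P_{A+s}, P_{A^\perp+s'}$. This is handled by building a point-function oracle $Q_s$ from $P_{A+s}$ together with an equality check against $\can_A(\cdot)$, and symmetrically $Q_{s'}$, as outlined in the technical overview. A minor subtlety, possible degeneracy in the spectra of $(\Pi^B_0 + \Pi^B_1)/2$ and $(\Pi^C_0 + \Pi^C_1)/2$, is harmless since any choice of eigenbasis works and the two bases are fixed separately on the two registers.
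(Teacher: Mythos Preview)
Your overall plan---reduce to the MOE game with membership oracles, then extract $(s,s')$ on both sides by measuring a random oracle query and invoking \Cref{thm:bbbv}---is exactly the paper's strategy. The gap is in the threshold step.

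You have $\As'$ apply the inefficient eigenvalue measurement of \Cref{thm:inefficient_measure}. This does not work in the MOE game of \Cref{def:regMOE}: the description of $A$ is revealed only to $\Bs',\Cs'$, never to $\As'$, and the projectors $\Pi^B_b,\Pi^C_b$ (hence their eigenbases) are defined by running $\Bs,\Cs$ on input $A$. So $\As'$ cannot even write down the measurement. (Relatedly, your remark that $\As'$ uses $\can_A,\can_{A^\perp}$ to simulate $\As$'s oracle is off for the same reason; fortunately that step is unnecessary, since $\As$'s oracle in \Cref{hyb:ue2} is distributed as a fresh random function and $\As'$ can just hand $\As$ a freshly sampled oracle.)

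Pushing the measurements to $\Bs',\Cs'$ does not rescue the argument either. The exact eigenbasis of $(\Pi^B_0+\Pi^B_1)/2$ depends on the full truth tables of $H^0_{s,s'},H^1_{s,s'}$, hence on the secret point $(s,s')$. $\Bs'$ knows $A$ but only has \emph{query} access to $P_{A+s},P_{A^\perp+s'}$, and \Cref{thm: monogamy info} applies only to query-bounded adversaries. With polynomially many queries $\Bs'$ cannot learn $(s,s')$ (that is precisely \Cref{col:dp_negl}), so it cannot compute the eigenbasis and cannot implement the inefficient projective measurement; unbounded running time does not help here because the obstruction is informational, not computational.

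The paper avoids both issues by having $\Bs',\Cs'$ apply the \emph{efficient} symmetric threshold measurement $\sati$ of \Cref{thm:thres_approx}. That procedure needs only black-box access to the projectors $\Pi^B_b$ themselves, and each such projector is implementable with $O(T)$ membership-oracle queries via exactly the canonical-representative reprogramming trick you describe. The cost is that the post-measurement state is merely $4\delta$-close to a state supported on the ``bad'' eigenvectors rather than an exact product $\ket{\phi_i}\ket{\psi_j}$; the paper absorbs this approximation into the final bound. Your product-state independence argument is cleaner, but it is unavailable for the reason above---replace $\cE$ by $\sati$ applied in the second phase, and carry the $\epsilon,\delta$ errors through.
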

    The proof for this lemma is given at the end of this section.

    With the above lemma, we can claim that over the randomness of $A, s, s'$ and $H_{-s, s'}$, for every polynomial $p(\cdot)$, $\ket {\phi_{\B \C}}$ is negligibly close to the following state $\ket {\phi'_{\B \C}}$: 
    \begin{align*}
        \sum_{\substack{i: |\lambda_i - 1/2| \leq 1/p}} \alpha_{i, j} \ket{\phi_i}_\B \otimes \ket{\psi_j}_\C + \sum_{\substack{i: |\lambda_i - 1/2| > 1/p \\ j: |\mu_j - 1/2| \leq 1/p}} \alpha_{i, j} \ket{\phi_i}_\B \otimes \ket{\psi_j}_\C. 
    \end{align*}
    For convenience, we name the left part as $\ket{\phi'_\Bs}$ (indicating $\Bs$ can not win) and the right part as $\ket{\phi'_\Cs}$ (indicating $\Cs$ can not win). Thus, for every polynomial $p(\cdot)$, there exists a negligible function $\negl(\cdot)$, $|\ket{\phi_{\B \C}} - (\ket{\phi'_\Bs} + \ket{\phi'_\Cs})|_1$ is at most $\negl(\cdot)$ (in expectation, taken the randomness of $A, s, s',r$ and $H_{-s, s'}$).
    
    \medskip
    The probability that $(\As, \Bs, \Cs)$ wins is at most:
    \begin{align*}
        (\left| (\Pi^B_0 \otimes \Pi^C_0) \ket {\phi'_{\B \C}}\right|^2 + \left| (\Pi^B_1 \otimes \Pi^C_1) \ket {\phi'_{\B \C}}\right|^2)/2. 
    \end{align*}
    $\Pi^B_0 \otimes \Pi^C_0$ is the case that they both get access to $H_0$ and $\Pi^B_1 \otimes \Pi^C_1$ for $H_1$. 
    
    The probability is at most
    \begin{align*}
        &(\left| (\Pi^B_0 \otimes \Pi^C_0) (\ket{\phi'_\Bs} + \ket {\phi'_\Cs})\right|^2 + \left| (\Pi^B_1 \otimes \Pi^C_1) (\ket{\phi'_\Bs} + \ket {\phi'_\Cs)}\right|^2)/2 \\
        =& \frac{1}{2} \cdot \left( \langle \phi'_{\Bs} | (\Pi^B_0 \otimes \Pi^C_0) | \phi'_{\Bs}  \rangle +  \langle \phi'_{\Bs} | (\Pi^B_1 \otimes \Pi^C_1) | \phi'_{\Bs}  \rangle  + \langle \phi'_{\Cs} | (\Pi^B_0 \otimes \Pi^C_0) | \phi'_{\Cs}  \rangle  +  \langle \phi'_{\Cs} | (\Pi^B_1 \otimes \Pi^C_1) | \phi'_{\Cs}  \rangle \right) \\
        +& \mathsf{Re}\left(  \langle \phi'_{\Bs} | (\Pi^B_0 \otimes \Pi^C_0) | \phi'_{\Cs} \rangle + \langle \phi'_{\Bs} | (\Pi^B_1 \otimes \Pi^C_1) | \phi'_{\Cs}  \rangle \right) \\
        \leq & \frac{1}{2} \cdot \left( \langle \phi'_{\Bs} | (\Pi^B_0 \otimes I) | \phi'_{\Bs}  \rangle +  \langle \phi'_{\Bs} | (\Pi^B_1 \otimes I) | \phi'_{\Bs}  \rangle  + \langle \phi'_{\Cs} | (I \otimes \Pi^C_0) | \phi'_{\Cs}  \rangle  +  \langle \phi'_{\Cs} | (I \otimes \Pi^C_1) | \phi'_{\Cs}  \rangle \right) \\
        +& \mathsf{Re}\left(  \langle \phi'_{\Bs} | (\Pi^B_0 \otimes \Pi^C_0) | \phi'_{\Cs} \rangle + \langle \phi'_{\Bs} | (\Pi^B_1 \otimes \Pi^C_1) | \phi'_{\Cs}  \rangle \right).
    \end{align*}
    We bound each term separately. 
    \begin{itemize}
        \item $\frac{1}{2} \left( \langle \phi'_{\Bs} | (\Pi^B_0 \otimes I) | \phi'_{\Bs}  \rangle +  \langle \phi'_{\Bs} | (\Pi^B_1 \otimes I) | \phi'_{\Bs}  \rangle \right)$. It is equal to $\langle \phi'_{\Bs} | (\Pi^B_0 + \Pi^B_1)/2 \otimes I | \phi'_{\Bs} \rangle$; by the definition of $\ket{\phi'_\Bs}$, it will be at most $(\frac{1}{2} + \frac{1}{p}) |\ket{\phi'_\Bs}|^2$. 
        
        \item $\frac{1}{2} \left( \langle \phi'_{\Cs} | (I \otimes \Pi^C_0) | \phi'_{\Cs}  \rangle  +  \langle \phi'_{\Cs} | (I \otimes \Pi^C_1) | \phi'_{\Cs}  \rangle \right)$. Similar to the above case, it is at most $(\frac{1}{2} + \frac{1}{p}) |\ket{\phi'_\Cs}|^2$.
        
        \item $\mathsf{Re}\left(  \langle \phi'_{\Bs} | (\Pi^B_0 \otimes \Pi^C_0) | \phi'_{\Cs} \rangle \right)$.  By \Cref{cor:orthogonal_eigen}, the inner product will be 0: 
        \begin{align*}
            \langle \phi'_{\Bs} | (\Pi^B_0 \otimes \Pi^C_0) | \phi'_{\Cs} \rangle &= \sum_{i: |\lambda_i - 1/2| \leq 1/p}  \sum_{\substack{i': |\lambda_{i'} - 1/2| > 1/p \\ j': |\mu_{j'} - 1/2| \leq 1/p}}  \alpha^\dagger_{i, j} \alpha_{i', j'} \langle \phi_i | \Pi^B_0 | \phi_{i'} \rangle \langle \psi_j | \Pi^C_0 | \psi_{j'} \rangle; 
        \end{align*}
        since every possible $i, i'$ satisfy $\lambda_i + \lambda_{i'} \ne 1$, we have $\langle \phi_i | \Pi^B_0 | \phi_{i'} \rangle = 0$.
        
        \item $\mathsf{Re}\left(  \langle \phi'_{\Bs} | (\Pi^B_1 \otimes \Pi^C_1) | \phi'_{\Cs} \rangle \right)$.  By \Cref{cor:orthogonal_eigen}, the inner product will be 0 as well. 
    \end{itemize}
    
    Therefore, the total probability will be at most $\left( \frac{1}{2} + \frac{1}{p} \right) (|\ket{\phi'_\Bs}|^2 + |\ket{\phi'_\Cs}|^2) + \negl(n) \leq \frac{1}{2} + \frac{1}{p} + \negl(n)$.
    
    \medskip
    
    Since the above statement holds for every polynomial $p(\cdot)$, it finishes the proof for \Cref{thm:moe_rom}. 
    
\end{proof}

Finally, we give the proof for \Cref{lem:smallweight}. 
    \begin{proof}[Proof of \Cref{lem:smallweight}]
        We prove by contradiction: suppose there exists an adversary $(\As, \Bs, \Cs)$ such that the weight, which we call $W$, is non-negligible, i.e. $W > 1/q(\lambda)$ for some polynomial $q(\cdot)$, with some non-negligible probability $\eta(\secparam)$. For convenience, we will omit $\lambda$ in the proof when it is clear from the context. 
        
        We construct the following adversary $(\As', \Bs', \Cs')$ that breaks the regular MOE game in~\Cref{def:regMOE}:
        \begin{enumerate}
            \item $\As', \Bs', \Cs'$ get (quantum) oracle access to $P_{A+s}$ and $P_{A^\perp + s'}$.

            \item $\As'$ first receives $\Delta$ from simulated $\As$, it samples $r\in\{0,1\}^{n(\lambda)}$ and a random oracle $H$. Given $\ket{A_{s, s'}},r $ and two membership checking oracles, it simulates $\As$ via reprogrammed $H_{s,s'}$, and produces $\ket{\phi_{\B \C}}$; it gives $\B$ to $\Bs'$ and $\C$ to $\Cs'$.

            Note that, although $H$ is a total random oracle, we will later reprogram $H$ at the input $(s, s')$.
            Thus, $H$ will only serve as $H_{-s,s'}$. 
            Since $\As'$ does not know $(s, s')$, it is hard for $\As'$ to only sample $H_{-s,s'}$.

            \item Define two projectors $\Pi^B_0, \Pi^B_1$ over register $\B$ as what we have described at the beginning of the proof, with the random oracle $H_{s,s'}^0$ and $H_{s,s'}^1$ is defined as: 
            \begin{align*}
                H_{s,s'}^0(z, z') = \begin{cases}
                        r   & \text{ if } z = s, z' = s' \\
                        H(z, z') & \text{ otherwise }
                    \end{cases},
            \end{align*}
            and
            \begin{align*}
                H_{s,s'}^1(z, z') = \begin{cases}
                        r\oplus\Delta   & \text{ if } z = s, z' = s' \\
                        H(z, z') & \text{ otherwise }
                    \end{cases}.
            \end{align*}
            
            Given $P_{A+s}, P_{A^\perp+s'}$ and the description of $A$, one can efficiently implement point functions that check the canonical vectors $s$ and $s'$; thus, additionally given $H$, $H_{s,s'}^0$ and $H_{s,s'}^1$ can also be efficiently simulated. Therefore, $\Bs'$ can implement both $\Pi^B_0, \Pi^B_1$ efficiently. 
            
            $\Bs'$ gets $\B$, it applies the efficient approximate threshold measurement $\sati_{(P,Q), \gamma}^{\epsilon, \delta}$ in \Cref{thm:thres_approx}
            with $P = (\Pi^B_0 + \Pi^B_1)/2$, $Q = I - P$, $\gamma = 3/4p$, $\epsilon = 1/4p$ and $\delta = 2^{-\lambda}$.

            If the outcome is 1, $\Bs'$ then runs $\Bs$ on the leftover state with $H_0$ or $H_1$ picked uniformly at random. It measures and outputs a random query $\Bs$ makes to the random oracle. 

            \item Similarly define $\Pi^C_0, \Pi^C_1$ as above on register $\C$. $\Cs'$ gets $\C$, it applies the efficient approximated threshold measurement $\sati_{(P,Q), \gamma}^{\epsilon, \delta}$ with $P = (\Pi^C_0 + \Pi^C_1)/2$, $Q = I - P$,  $\gamma = 3/4p$, $\epsilon = 1/2p$, and $\delta = 2^{-\lambda}$. 
            
            When the outcome is $1$, $\Cs'$ runs $\Cs$ on the leftover state with $H_0$ or $H_1$ picked uniformly at random. It measures and outputs a random query to the random oracle.
        \end{enumerate}

        By \Cref{thm:thres_approx} bullet (1), conditioned on $W \ge 1/q$, both $\Bs'$ and $\Cs'$ will get outcome 1 with probability $1/q - 2\delta = O(1/q)$. When both outcomes are 1, by bullet (2) of \Cref{thm:thres_approx}, the leftover state is $4\delta$-close to the the following state:
        \begin{align*}
            \sum_{\substack{i: |\lambda_i - 1/2| > 1/4p \\ j: |\mu_j - 1/2| > 1/4p}} \beta_{i,j} \ket{\phi_i}_\B \otimes \ket{\psi_j}_\C. 
        \end{align*}
        Observe that when $\Bs$ does not query $(s,s')$, it will succeed with probability exactly $1/2$. Therefore, by \Cref{thm:bbbv}, the query weight of $\Bs$ on $(s,s')$ is at least $1/4p^2T - \negl(\secparam)$, where $T$ is an upper-bound on the number of queries made by $\Bs$.
        Arguing similarly for $\Cs$, we conclude that the adversary $(\As', \Bs', \Cs')$ wins with probability at least $O(\eta/(q p^4 T^2))$, which is non-negligible. 

    \fatih{Updated this proof to align with \Cref{thm:thres_approx}.}
    \end{proof}

\section{Unclonable Encryption in the QROM}
\label{sec:ue_scheme}

The following is the unclonable encryption scheme for a single bit:
\begin{enumerate}
    \item $\sk = A$ where $A$ is a random subspace $A \subseteq \F_2^n$ of dimension $n/2$; 
    \item $\enc^H(\sk, m)$: it samples $s \leftarrow \canonicalset(A)$ and $s' \leftarrow \canonicalset(A^\perp)$ uniformly at random; it outputs $\ket{A_{s, s'}}$, $c = H(s, s') \oplus m$;
    \item $\dec^H(\sk = A, (\ket{A_{s, s'}}, c))$:  
    \begin{itemize}
        \item It first computes $s$ in superposition. We know that there is a classical algorithm that on any vector in $A + s$ and the description of $A$, outputs the canonical vector of $A+s$ (which is $s$ in this case).  See~\cite{coladangelo2021hidden} Definition 4.3 for more references. 
        
        We can run this classical algorithm coherently on $\ket{A_{s, s'}}$ to learn $s$.
        
        \item Since the algorithm on any vector in $A+s$ outputs the same vector, the quantum state stays intact. We can run the same algorithms coherently on the Hadamard basis and the description of $A^\perp$ to learn $s'$.
        
        \item Output $c \oplus H(s, s')$. 
    \end{itemize}
\end{enumerate}

With \Cref{thm:moe_rom}, we can show the scheme satisfy the unclonable IND-CPA security. 

\begin{proof}
If we have some adversary $(\As,\Bs,\Cs)$ for the scheme beyond, we can construct an adversary $(\As',\Bs',\Cs')$ for the strengthened MOE game with the same advantage. 
\begin{itemize}
    \item The adversary $\As'$ gets $(m_0,m_1)\leftarrow \As$ and sends $\Delta=m_0\oplus m_1$ to the challenger.
    \item After receiving $\ket{A_{s,s'}}$ and $w$ from the challenger, $\As'$ calculates $c=w\oplus m_0$, and sends $(\ket{A_{s,s'}},c)$ to $\As$. The output registers $\B, \C$ of $\As$ are sent to $\Bs',\Cs'$ respectively.
    \item $\Bs',\Cs'$ exactly run the algorithm of $\Bs,\Cs$, and output their output respectively.
\end{itemize}

Thus we have concluded the unclonable IND-CPA security of our game.

\end{proof}

\begin{remark}
Notice that compared to the strengthened MOE game, our construction does not provide additional membership checking oracles. 
\end{remark}

\section{Copy-Protection for Point Functions in QROM}
\label{sec:CP_qrom}

\subsection{Copy-Protection Preliminaries}
\label{sec:prelims_cp}
Below we present the definition of a copy-protection scheme.
\begin{definition} [Copy-Protection Scheme] \label{def:copyprotection}
 Let $\fclass = \fclass(\secparam)$ be a class of efficiently computable functions of the form $f: X \to Y$. A copy protection scheme for $\fclass$ is a pair of QPT algorithms $(\copyprotect, \eval)$ such that: \begin{itemize}
    \item \textbf{Copy Protected State Generation:} $\copyprotect(1^\secparam, d_f)$ takes as input the security parameter $1^\secparam$ and a classical description $d_f$ of a function $f \in \fclass$ (that efficiently computes $f$). It outputs a mixed state $\rho_f \in \density{Z}$, where $Z$ is the output register.
    \item \textbf{Evaluation:} $\eval(1^\secparam, \rho, x)$ takes as input the security parameter $1^\secparam$, a mixed state $\rho\in \density{Z}$, and an input value $x\in X$. It outputs a bipartite state $\rho' \otimes \ket{y}\bra{y} \in \density{Z} \otimes \density{Y}$.
    
\end{itemize}
\end{definition}

\noindent We will sometimes abuse the notation and write $\eval(1^\secparam, \rho, x)$ to denote the classical output $y \in Y$ when the residual state $\rho'$ is not significant.

\begin{definition}[Correctness] \label{def:cpcorrectness}
A copy-protection scheme $(\copyprotect,\eval)$ for $\fclass$ is \textbf{$\delta$-correct} if the following holds: for every $x \in X$, $f \in \fclass$,
$$\Pr \left[ f(x) \leftarrow \eval(1^\secparam, \rho_f,x)\ :\ \rho_f \leftarrow \copyprotect(1^{\secparam},d_f) \right] \geq \delta.$$
If $\delta \ge 1 - \negl(\secparam)$, we simply say that the scheme is \textbf{correct}.
\end{definition}

\begin{remark}
When $\delta$ is negligibly close to 1, the evaluation algorithm $\eval$ can be implemented so that it does not disturb the state $\rho_f$. This ensures that $\rho_f$ can be reused polynomially many times with arbitrary inputs.
\end{remark}

We define security via a piracy experiment.

\begin{definition}[Piracy Experiment]
\label{def:piracyexperiment}
A \textbf{piracy experiment} is a security game defined by a copy-protection scheme $(\cp, \eval)$ for a class of functions $\fclass$ of the form $f:X\to Y$, a distribution $\distr_\fclass$ over $\fclass$, and a class of distributions $\distrclass = \{\distrclass(f)\}_{f \in \fclass}$ over $X \times X$. It is the following game between a challenger and an adversary, which is a triplet of algorithms $\abc$: %
\begin{itemize}
    \item \textbf{Setup Phase:} The challenger samples a function $f\leftarrow \distr_\fclass$ and sends $\rho_f \leftarrow \cp(1^\secparam, d_f)$ to $\alice$.
    \item \textbf{Splitting Phase:} $\alice$ applies a CPTP map to split $\rho_f$ into a bipartite state $\rho_{\B\C}$; it sends the $\B$ register to $\bob$ and the $\C$ register to $\charlie$. No communication is allowed between $\bob$ and $\charlie$ after this phase.
    \item \textbf{Challenge Phase:} The challenger samples $(x_B, x_C) \leftarrow \distrclass(f)$ and sends $x_B, x_C$ to $\bob, \charlie$, respectively. %
    \item \textbf{Output Phase:} $\bob$ and $\charlie$ output $y_B \in Y$ and $y_C \in Y$, respectively, and send to the challenger. The challenger outputs 1 if $y_B = f(x_B)$ and $ y_C = f(x_C)$, indicating that the adversary has succeeded, and 0 otherwise. 
\end{itemize}
The bit output by the challenger is denoted by $\pirateexp{\cp}{\eval}{\distr_\fclass}{\distrclass}(1^\secparam, \abc)$.
\end{definition}

As noted by \cite{CMP20}, the adversary can always succeed in this game with probability negligibly close to \[\trivialprob(\distr_\fclass, \distrclass) := \max_{E \in \bracketsC{B,C}} \E_{\substack{f \from \distr_\fclass \\ (x_B,x_C) \from \distrclass(f)}} \max_{y \in Y} \prob \bracketsS{y \given x_E} \]
by sending $\rho_f$ to $\bob$ and have $\charlie$ guess the most likely output $y$ given input $x_C$ (or vice versa). In other words, $\trivialprob$ is the success probability of optimal guessing strategy for one party $E \in \bracketsC{B,C}$ given only the test input $x_E$.

Bounding the success probability of the adversary is bounded by $\trivialprob$ captures the intuition that $\rho_f$ is no more helpful for simultaneous evaluation than a black-box program that could only be given to one party. 

\begin{definition}[Copy-Protection Security] \label{def:cpsecurity}
Let $(\cp, \eval)$ be a copy-protection scheme for a class $\fclass$ of functions $f:X \to Y$. Let $\distr_\fclass$ be a distribution over $\fclass$ and $\distrclass = \left\{\distrclass(f)\right\}_{f \in \fclass}$ a class of distributions over $X$. Then, $(\cp, \eval)$ is called $\left(\distr_\fclass,\distrclass\right)$-\textbf{secure} if there exists a negligible function $\negl$ such that any QPT adversary $\abc$ satisfies \[\prob \left[ b = 1\ :\ b \leftarrow \pirateexp{\cp}{\eval}{\distr_\fclass}{\distrclass}\left(1^\secparam, \abc\right)  \right] \le \trivialprob(\distr_\fclass, \distrclass) + \negl(\secparam) .\]

\end{definition}

\paragraph{Copy Protection for Point Functions}
A \textit{point function} $f_y: \bit^{m} \to \bit$ is of the form \begin{align*}
    f_{y}(x) = \begin{cases} 1, & x = y \\ 0, & x \ne y  \end{cases}.
\end{align*}

When dealing with point functions, the classical description of $f_y$ will simply be $y$, and accordingly the distribution $\distr_\fclass$ over point functions will be represented by a distribution $\distr = \distr_\secparam$ over $\bit^m$. Since copy protection is trivially impossible for a learnable distribution $\distr$, we are going to restrict our attention to unlearnable distributions. 

\begin{definition}\label{def:unlearnable}
A distribution $\distr_\secparam$ over $\bit^m$, with $m = \poly(\secparam)$, is called \subversion{\textit{unlearnable}}\fullversion{\textbf{unlearnable}} if for any query-bounded adversary $\alice^{f_y(\cdot)}$ with oracle access to $f_y(\cdot)$, we have \begin{align*}
    \prob \bracketsS{y' = y: \substack{y \from \distr_\secparam \\ y' \from \alice^{f_y(\cdot)}(1^\secparam) }} \le \negl(\secparam).
\end{align*}

\end{definition}

\begin{definition}[Copy-Protection Security for Point Functions]\label{def:cpsecurityforpf} Let $m = \poly(\secparam)$ and $\fclass$ be the class of point functions $f_y : \bit^m \to \bit$. Let $\distrclass = \bracketsC{\distrclass(f)}_{f \in \fclass}$ be a class of input distributions over $\bit^m \times \bit^m$. A copy protection scheme $(\cp, \eval)$ for $\fclass$ is called $\distrclass$-\textbf{secure} if there exists a negligible function $\negl$ such that $(\cp, \eval)$ is $(\distr_\secparam, \distrclass)$-\textbf{secure} for all unlearnable distributions $\distr_\secparam$ over $\bit^m$.
\end{definition}

\subsection{Construction}

\noindent In this section, we design copy-protection for a class of point functions. We set $n = 2\secparam$ and $d = \secparam$ throughout the section. Our construction will use two hash functions: (a) $\oracleone:\{0,1\}^{\secparam} \rightarrow \{0,1\}^{n \cdot d}$ and (b) $\oracletwo: \F_2^n \times \F_2^n \rightarrow \{0,1\}^{4n+\secparam}$. In the security proof, we will treat $\oracleone$ and $\oracletwo$ as random oracles. We will use $\F_2^n$ and $\bit^n$ interchangeably.
\par We denote the set of all $d$-dimensional subspaces of  $\F_2^n$ by $\subspace_d$.\fullversion{\ We will need the following lemma for correctness.} %
\fullversion{\begin{lemma}
\label{lem:cp:useful}
There exists a set of efficient unitaries $\{U_{A'}\}_{A' \in \subspace_d} \subseteq \U \brackets{\Hs_{\bf X} \otimes \Hs_{\bf Z} \otimes \Hs_{\bf anc}}$, where ${\bf X}, {\bf Z}, {\bf anc}$ are registers of length $n, 2n, \poly(\secparam)$, such that the following holds for any $A \in \subspace_d$:  %
\begin{itemize}

    \item For any $s \in \canonicalset(A), s' \in \canonicalset(A^\perp)$, we have $U_{A} \ket{A_{s,s'}}\ket{0^{2n}}_{\bf Z}\ket{0^{\poly(\secparam)}}_{{\bf anc}} \allowbreak =\ket{A_{s,s'}}\ket{s,s'}_{\bf Z}\ket{0^{\poly(\secparam)}}_{{\bf anc}}$.
    \item For any $A' \in \subspace_d$ such that $\frac{|A' \cap A|}{2^d} \leq \nu(\secparam)$, for some negligible function $\nu(\cdot)$, there exists a negligible function $\nu'(\secparam)$ such that the following holds for all $s \in \canonicalset(A), s' \in \canonicalset(A^\perp)$:
    $$ \left\|  (I_{\bf X} \otimes \ketbra{s,s'}{s,s'}_{\bf Z} \otimes I_{\bf anc} )\left( U_{A'} \ket{A_{s,s'}}\ket{0^{2n}}_{\bf Z}\ket{0^{\poly(\secparam)}}_{{\bf anc}}\right) \right\|^2 \leq \nu'(\secparam).$$
\end{itemize}
\end{lemma}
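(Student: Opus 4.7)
The plan is to construct $U_{A'}$ as a fixed four-step circuit indexed by the classical description of $A'$, and then verify each bullet by direct computation. Concretely, $U_{A'}$ acts as follows: (i) coherently compute $\can_{A'}$ on register $\mathbf{X}$ into the first $n$ qubits of $\mathbf{Z}$ (call them $Z_1$), using $\mathbf{anc}$ as uncomputed workspace for Gaussian elimination; (ii) apply $H^{\otimes n}$ to $\mathbf{X}$; (iii) coherently compute $\can_{(A')^\perp}$ on $\mathbf{X}$ into the remaining $n$ qubits $Z_2$ of $\mathbf{Z}$, again uncomputing workspace; (iv) apply $H^{\otimes n}$ to $\mathbf{X}$. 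Each step is efficient since $\can_{B}(\cdot)$ is polynomial-time computable given a description of $B$.

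For the first bullet I would argue as follows. The state $\ket{A_{s,s'}}$ is a superposition only over basis vectors in the coset $A+s$, on which $\can_A$ is constantly equal to $s$. Hence step (i) writes $\ket{s}$ into $Z_1$ without disturbing $\mathbf{X}$. Step (ii) maps $\ket{A_{s,s'}}$ to $\ket{A^\perp_{s',s}}$, which is supported on $A^\perp + s'$, so step (iii) writes $\ket{s'}$ into $Z_2$ without disturbance. Step (iv) inverts step (ii), returning $\mathbf{X}$ to $\ket{A_{s,s'}}$, which gives the claimed identity.

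For the second bullet, the observation is that $Z_1$ is only touched by step (i), so its marginal distribution at the end of $U_{A'}$ equals its distribution right after step (i). This distribution puts mass on $s$ equal to
\[ \Pr[Z_1 = s] \;=\; \frac{\lvert\{v \in A+s : \can_{A'}(v) = s\}\rvert}{|A|}. \]
Because $\can_{A'}(v)$ always lies in $A' + v$, the equation $\can_{A'}(v) = s$ forces $v \in A' + s$; combined with $v \in A + s$, this constrains $v$ to $(A+s) \cap (A'+s) = (A \cap A') + s$, a set of size $|A \cap A'|$. The small-intersection hypothesis then yields $\Pr[Z_1 = s] \le |A \cap A'|/2^d \le \nu(\secparam)$. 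Since $\ketbra{s,s'}{s,s'}_{\mathbf{Z}} \le \ketbra{s}{s}_{Z_1} \otimes I_{Z_2}$ as projectors, the squared norm in the statement is at most $\Pr[Z_1 = s]$, and we may take $\nu'(\secparam) = \nu(\secparam)$.

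The only subtlety I anticipate is bookkeeping: one must check that the coherent computations in steps (i) and (iii) leave $\mathbf{X}$ in the same state they found it, so that the subsequent Hadamards act cleanly, and that all workspace is returned to $\ket{0^{\poly(\secparam)}}$ on $\mathbf{anc}$. This is standard compute-copy-uncompute for the classical canonical-representative procedure and introduces no conceptual difficulty.
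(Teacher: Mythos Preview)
Your construction of $U_{A'}$ and the argument for the first bullet are essentially identical to the paper's (the paper writes QFT where you write $H^{\otimes n}$, which over $\F_2^n$ is the same thing).

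For the second bullet your route is genuinely different. The paper expands $\ket{A_{s,s'}}$ in the orthonormal basis $\{\ket{A'_{t,t'}}\}_{t,t'}$, applies the first bullet to each basis vector so that $U_{A'}$ writes $(t,t')$ into $\mathbf{Z}$, and then observes that projecting $\mathbf{Z}$ onto $\ket{s,s'}$ only retains the $(t,t')=(s,s')$ component; the bound is therefore $|\braket{A_{s,s'}|A'_{s,s'}}|^2 = (|A\cap A'|/2^d)^2 \le \nu(\secparam)^2$. Your argument instead tracks only the marginal of $Z_1$ after step~(i), counts the vectors $v\in A+s$ with $\can_{A'}(v)=s$ (which lie in $(A\cap A')+s$), and bounds the projection by $|A\cap A'|/2^d \le \nu(\secparam)$. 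Both are correct; the paper's basis-decomposition argument yields the quadratically sharper $\nu' = \nu^2$, while your approach is more elementary in that it sidesteps the coset-state orthonormal-basis fact and any analysis of steps~(ii)--(iv), at the cost of the weaker $\nu' = \nu$. Since only negligibility is needed, your version suffices.
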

\begin{proof}
To get unitaries satisfying the first bullet, recall that there exists an efficient procedure which computes $\can_A(\cdot)$ given the description of $A$. We can represent this procedure by a unitary $U$ followed by measurement of $s,s'$. We describe $U_A$ as follows: \begin{enumerate}
    \item Apply $U$ to the ${\bf X}, {\bf anc}$ registers. Copy the answer to the first half of the ${\bf Z}$ register. Note that the answer is always $s$ since $\ket{A_{s,s'}}$ is a superposition of vectors in $A + s$.
    \item Apply $U^\dagger$ to the ${\bf X}, {\bf anc}$ registers.
    \item Apply QFT on the ${\bf X}$ register to obtain $\ket{A^\perp_{s',s}}$.
    \item Repeat the first two steps and copy the answer $s'$ to the second half of the ${\bf Z}$ register.
    \item Appy QFT again to recover $\ket{A_{s,s'}}_{\bf X}$
\end{enumerate}

\par We will show that the second bullet follows from the first bullet. We first observe that the inner product between the coset states $\ket{A_{s,s'}}$ and $\ket{A'_{s,s'}}$ is small. Indeed, since $\abs{(A + s) \cap (A' + s)} = \abs{A \cap A'} \le 2^d \nu(\secparam)$, we have \begin{align*}
    \abs{\braket{A_{s,s'} | A'_{s,s'}}}^2 &= \abs{\frac{1}{\sqrt{|A|}} \sum_{a \in A} (-1)^{\langle s', a\rangle} \bra{a + s} \frac{1}{\sqrt{|A|}} \sum_{a' \in A'} (-1)^{\langle s', a'\rangle} \ket {a' + s}}^2 \\
    &\le \abs{\frac{1}{|A|} 2^d \nu(\secparam) }^2 = \nu(\secparam)^2.
\end{align*}
Fix $U_{A'}$. Recall that the coset states $\bracketsC{\ket{A'_{t,t'}}}_{t \in \canonicalset(A),t' \in \canonicalset(A^\perp)}$ form an orthonormal basis. By the first bullet, we have \begin{align*}
    & \left\|  (I_{\bf X} \otimes \ketbra{s,s'}{s,s'}_{\bf Z} \otimes I_{\bf anc} ) \left( U_{A'} \ket{A_{t,t'}}_{\bf X} \ket{0^{2n}}_{\bf Z} \ket{0^{\poly(\secparam)}}_{{\bf anc}}  \right)\right\|^2 \\ = \; &\left\| (I_{\bf X} \otimes \ketbra{s,s'}{s,s'}_{\bf Z} \otimes I_{\bf anc} )\left( \ket{A_{t,t'}}_{\bf X} \ket{t,t'}_{{\bf Z}} \ket{0^{\poly(\secparam)}}  \right)\right\|^2 = 0
\end{align*}
for any $(t,t') \ne (s,s')$. Therefore, we have \begin{align*}
    &\left\|  (I_{\bf X} \otimes \ketbra{s,s'}{s,s'}_{\bf Z} \otimes I_{\bf anc} )\left( U_{A'} \ket{A_{s,s'}}_{\bf X} \ket{0^{2n}}_{\bf Z} \ket{0^{\poly(\secparam)}}_{{\bf anc}}  \right)\right\|^2 \\
    &= \bigg\| \sum_{\substack{t \in \canonicalset(A) \\ t' \in \canonicalset(A^\perp)}}   (I_{\bf X} \otimes \ketbra{s,s'}{s,s'}_{\bf Z} \otimes I_{\bf anc} )\left( U_{A'} \ket{A'_{t,t'}}\braket{A'_{t,t'} | A_{s,s'}}_{\bf X} \ket{0^{2n}}_{\bf Z} \ket{0^{\poly(\secparam)}}_{{\bf anc}}  \right)\bigg\|^2
    \\
    &=  \abs{\braket{A_{s,s'} | A'_{s,s'}}}^2 \le \nu(\secparam)^2.
\end{align*}
as desired.
\end{proof}
}

\fullversion{\paragraph{Construction.}}We describe the copy-protection scheme $(\copyprotect,\eval)$ for a class of point functions ${\cal F}=\{f_y(\cdot)\}_{y \in \{0,1\}^{\secparam}}$ as follows: 
\begin{itemize}
    \item $\copyprotect \left( 1^{\secparam},y \right)$: it takes as input $\secparam$ in unary notation, $y \in \{0,1\}^{\secparam}$ and does the following: 
    \begin{enumerate}
        \item \label{item:linearindependence} Compute ${\bf v}=\oracleone(y)$. Parse ${\bf v}$ as a concatenation of $d$ vectors $v_1,\ldots,v_d$, where each $v_i$ has dimension $n$. Abort if the vectors $\{v_1,\ldots,v_d\}$ are not linearly independent. 
        \item Let $A=\mathrm{Span}\left(v_1,\ldots,v_d\right)$. 
           \item Sample $s \from \canonicalset(A)$ and $s' \from \canonicalset(A^\perp)$ uniformly at random. 
        \item \label{item:finalcpstep} Output the copy-protected state $\sigma = \ketbra{A_{s,s'}}{A_{s,s'}}_{\bf X} \otimes \ketbra{\oracletwo(s,s')}{\oracletwo(s,s')}_{\bf Y}$. 
    \end{enumerate}

    \item $\eval(\sigma,x)$: on input the copy-protected state $\sigma \in \mathcal{D}(\Hs_{\bf X} \otimes \Hs_{\bf Y})$, input $x \in \{0,1\}^{\secparam}$, it does the following: 
    \begin{enumerate}
        \item Measure the register ${\bf Y}$ of $\sigma$ to obtain the value $\theta$. Call the resulting state $\sigma'$. 
        \item Compute ${\bf v}=\oracleone(x)$. Parse ${\bf v}$ as a concatenation of $d$ vectors $v_1,\ldots,v_d$, where each $v_i$ has dimension $n$. Abort if the vectors $\{v_1,\ldots,v_d\}$ are not linearly independent.
        \item Let $A=\mathrm{Span}\left(v_1,\ldots,v_d\right)$. 
        \item Apply $U_{A}$ (defined in~\Cref{lem:cp:useful}) coherently on\subversion{\\} $\sigma' \otimes \allowbreak \ketbra{0^{2n}}{0^{2n}}_{\bf Z} \otimes \allowbreak \ketbra{0^{\poly(\secparam)}}{0^{\poly(\secparam)}}_{{\bf anc}}$ to obtain the state $\sigma''$.
        \item Query $\oracletwo$ on the register ${\bf Z}$ and store the answer in a new register ${\bf out}$. 
        \item Measure the register ${\bf out}$ in the computational basis. Denote the post-measurement state by $\sigma_{{\bf out}}$ and the measurement outcome by $\theta'$. 
        \item If $\theta=\theta'$, output $\sigma_{{\bf out}} \otimes \ketbra{1}{1}$. Otherwise, output $\sigma_{{\bf out}} \otimes \ketbra{0}{0}$.
    \end{enumerate}
\end{itemize}

\par We first discuss at a high level why this construction works. Regarding correctness, we argue that $\eval$ on input $x \ne y$ computes a random subspace $A'$, such that $\ket{A'_{s,s'}}$ is nearly orthogonal to $\ket{A_{s,s'}}$. As a result, $\eval$ recovers $(s,s')$ incorrectly. Since as a sufficiently expanding hash function $H$ is injective with high probability, $\eval$ fails. \\
\par As for security, first we show that it is hard for $\alice$ to query the oracles $\oracleone, \oracletwo$ on inputs $y, (s,s')$. Next, we argue that $\bob$ and $\charlie$ cannot both recover $(s,s')$, otherwise they break the MOE game in \Cref{thm: monogamy info}. \\

\par Most meaningful input distributions $\distrclass(y)$ for a point function $f_y$ can be parameterized by a triple $(p,q,r)$: \begin{itemize}
    \item With probability $p$, output $(y,y)$
    \item With probability $q$, output $(y, x_C)$, where $x_C \ne y$ is a random string.
    \item With probability $r$, output $(x_B, y)$, where $x_B \ne y$ is a random string.
    \item With probability $1 - p - q - r$, output $(x_B, x_C)$, where $x_B, x_C \ne y$ are random strings.
\end{itemize}

We show that our scheme is secure with respect to product distributions, i.e. when $(p,q,r,1-p-q-r)$ is of the form $(pp',pq',qp',qq')$ with $p + q = p' + q' = 1$, in \Cref{lem:cpsecurity}. We also show security for maximally correlated input distributions, i.e. when $q = r = 0$, in \Cref{cor:cp_correlated}. The way the random strings $x_B, x_C$ are sampled (uniformly or otherwise) turns out to be inconsequential in our security proof. \fatih{Added this paragraph to clarify the result statements.} \\

We give the formal statements below\subversion{ proofs in \Cref{sec:proofs}}. 

\begin{lemma} \label{lem:cp_correctness}
$(\copyprotect,\eval)$ satisfies correctness. 
\end{lemma}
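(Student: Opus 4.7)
The plan is to split the analysis based on whether the evaluation input $x$ equals the encoded point $y$ or not, and invoke the two bullets of Lemma~\ref{lem:cp:useful} accordingly. As a preliminary, note that $\copyprotect$ aborts in step~\ref{item:linearindependence} only with negligible probability: since $H$ is a random oracle, $\mathbf{v} = H(y)$ is uniform in $\F_2^{n \cdot d}$, so the probability that its $d$ blocks of length $n = 2d$ are linearly dependent is at most $\sum_{i=0}^{d-1} 2^{i-n} \leq 2^{-d}$. Conditioned on no abort, we get a well-defined $d$-dimensional subspace $A$ and cosets $s \in \canonicalset(A)$, $s' \in \canonicalset(A^\perp)$.

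For the case $x = y$, $\eval$ recomputes the same subspace $A$, so the first bullet of Lemma~\ref{lem:cp:useful} ensures that $U_A$ maps $\ket{A_{s,s'}} \ket{0^{2n}}_{\mathbf{Z}} \ket{0^{\poly(\secparam)}}_{\mathbf{anc}}$ to $\ket{A_{s,s'}} \ket{s,s'}_{\mathbf{Z}} \ket{0^{\poly(\secparam)}}_{\mathbf{anc}}$. Querying $H$ on the $\mathbf{Z}$ register deterministically produces $\theta' = H(s,s') = \theta$, so $\eval$ outputs $1 = f_y(y)$ with certainty (conditioned on no abort).

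For $x \ne y$, $\mathbf{v}' = H(x)$ is uniform in $\F_2^{n \cdot d}$ and independent of $\mathbf{v} = H(y)$, so (conditioned on no abort in either invocation) $A'$ is a uniformly random $d$-dimensional subspace independent of $A$. For any fixed nonzero $v \in A$, the probability that $v \in A'$ is at most $2^d / 2^n = 2^{-d}$, hence $\mathbb{E}[|A \cap A'|] \leq 1 + (2^d - 1) \cdot 2^{-d} \leq 2$. By Markov's inequality, $|A \cap A'|/2^d \leq 2^{-d/2}$ with probability at least $1 - 2^{-d/2 + 1}$. Setting $\nu(\secparam) = 2^{-d/2} = 2^{-\secparam/2}$, the second bullet of Lemma~\ref{lem:cp:useful} then bounds by $\nu'(\secparam)$ the probability that the measurement of the $\mathbf{Z}$ register in $\eval$ yields exactly $(s,s')$. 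If the measurement yields any $(\tilde s, \tilde s') \neq (s, s')$, then $\theta' = H(\tilde s, \tilde s')$ is a uniformly random $(4n + \secparam)$-bit string, independent of $\theta = H(s,s')$, so $\theta' = \theta$ holds with probability $2^{-(4n+\secparam)}$. Union bounding over the bad events (abort, large intersection, measuring $(s,s')$, spurious collision) shows $\eval$ outputs $0 = f_y(x)$ except with negligible probability.

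The main subtlety is the second case: we need the concrete tail bound $|A \cap A'|/2^d \leq \negl$ with overwhelming probability (not merely in expectation), and we need to argue that on the branch where the measurement yields an incorrect $(\tilde s, \tilde s')$, the value $H(\tilde s, \tilde s')$ is still fresh and random despite $H$ being entangled with the state through step~\ref{item:finalcpstep} of $\copyprotect$. The first point is handled by the Markov argument above (with room to spare since $d = \secparam$); the second is immediate because $H$ was only previously queried at $(s,s')$, so its value at any different input is independent of $\theta$.
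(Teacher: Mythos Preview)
Your proof is correct and follows essentially the same approach as the paper: both split into the cases $x=y$ and $x\neq y$, invoke the two bullets of Lemma~\ref{lem:cp:useful}, and use a Markov bound on $|A\cap A'|$ for the second case. The only cosmetic difference is that where the paper first conditions on $\oracletwo$ being globally injective (which it is with overwhelming probability since its range has $4n+\secparam$ bits) to conclude that measuring $\mathbf{out}$ to $\theta$ forces $\mathbf{Z}$ to hold $(s,s')$, you instead argue the collision $\oracletwo(\tilde s,\tilde s')=\theta$ pointwise for each wrong $(\tilde s,\tilde s')$; also note that in the construction the first oracle is $\oracleone$ (not $\oracletwo$), and that $\eval$ measures $\mathbf{out}$ rather than $\mathbf{Z}$, though your reasoning goes through via deferred measurement since $O^{\oracletwo}$ is classically controlled on $\mathbf{Z}$.
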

\fullversion{\subversion{ We first prove a useful lemma.

}
\begin{proof}[Proof of \Cref{lem:cp_correctness}]
\newcommand{\TD}{\mathsf{TD}}
\noindent We first argue that step \ref{item:linearindependence} of $\copyprotect$ aborts only with negligible probability:

\begin{claim}
\label{clm:linearindependence}
Let $n=2d=2\secparam$ and $v_1, v_2, \dots, v_d \in \F_2^n$ be uniformly random independent vectors, then there exists a negligible function $\nu_0$ such that $v_1, \dots, v_d$ are linearly dependent with probability at most $\nu_0(\secparam)$.
\end{claim}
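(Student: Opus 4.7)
The plan is to use a standard union-bound argument on the event that each $v_i$ falls into the span of its predecessors. Concretely, I would condition on $v_1, \ldots, v_{i-1}$ and observe that the span $W_i := \mathrm{Span}(v_1, \ldots, v_{i-1}) \subseteq \F_2^n$ has size at most $2^{i-1}$, regardless of whether these vectors are themselves independent. Since $v_i$ is uniformly distributed in $\F_2^n$ independently of the previous vectors, we have $\Pr[v_i \in W_i] \leq 2^{i-1}/2^n = 2^{i-1-n}$.

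Next, I would apply a union bound over $i = 1, \ldots, d$, noting that $\{v_1, \ldots, v_d\}$ are linearly dependent if and only if $v_i \in W_i$ for some $i$. This gives
\[ \Pr[v_1, \ldots, v_d \text{ linearly dependent}] \leq \sum_{i=1}^{d} 2^{i-1-n} \leq 2^{d-n} = 2^{-d} = 2^{-\secparam}, \]
where in the last step I used $n = 2d$ and $d = \secparam$. Setting $\nu_0(\secparam) := 2^{-\secparam}$ gives the desired negligible bound.

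I do not anticipate any obstacle here: the only slightly subtle point is that we do not need to assume the previous vectors are themselves linearly independent, since $\abs{W_i} \leq 2^{i-1}$ holds unconditionally, so the argument goes through cleanly by induction-free union bound. The geometric-series bound $\sum_{i=1}^{d} 2^{i-1-n} \leq 2^{d-n}$ (valid since the sum is dominated by its largest term up to a factor of two, and $2^{d-n}$ itself accommodates the factor) makes the final step immediate.
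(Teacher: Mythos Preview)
Your proof is correct and follows essentially the same union-bound approach as the paper. The paper conditions on $\{v_1,\dots,v_{i-1}\}$ being independent and multiplies the conditional probabilities, arriving at $\nu_0(\secparam)=\secparam 2^{-\secparam}$; your direct union bound (using $|W_i|\le 2^{i-1}$ unconditionally) is slightly cleaner and even yields the marginally tighter constant $\nu_0(\secparam)=2^{-\secparam}$.
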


\begin{proof}
Let $p_i$ be the probability that $\bracketsC{v_1, \dots, v_{i}}$ is linearly independent given that $\bracketsC{v_1, \dots, v_{i-1}}$ is linearly independent. Since $v_i$ is uniformly random and the span of $\bracketsC{v_1, \dots, v_{i-1}}$ has size $2^{i-1}$, we have $p_i = 1 - 2^{i-1}/2^n$. Thus, the probability that $\bracketsC{v_1, \dots, v_d}$ is linearly independent is given by \begin{align*}
    \prod_{i=1}^d p_i = \prod_{i=1}^d \brackets{1 - 2^{i-1-n}} \ge \brackets{1 - 2^{-\secparam}}^\secparam \ge 1 - \secparam2^{-\secparam},
\end{align*}
where we used the union bound in the last step. Hence, the claim holds for $\nu_0(\secparam) = \secparam 2^{-\secparam}$.
\end{proof}

\noindent We will condition on step \ref{item:linearindependence} of $\cp$ not aborting henceforth.
Let $y \in \bit^{\secparam}$ and $\sigma \leftarrow \copyprotect\left( 1^{\secparam},y \right)$. Note that $\sigma$ is of the form $\ketbra{A_{s,s'}}{A_{s,s'}}_{\bf X} \otimes \ketbra{\theta}{\theta}_{\bf Y}$, where the following holds: 

\begin{enumerate}
     \item ${\bf v}=\oracleone(y)$ and ${\bf v}$ is a concatenation of $d$ linearly independent vectors $v_1,\ldots,v_d$
        \item $A=\mathrm{Span}\left(v_1,\ldots,v_d\right)$
           \item $s \in \canonicalset(A)$ and $s' \in \canonicalset(A^\perp)$ are selected uniformly at random.
           \item \label{item:secondhashstep} $\theta=\oracletwo(s,s')$
\end{enumerate}

 We now consider the two cases: $x = y$ and $x \ne y$. \\

\noindent {\em {\bf Case 1}. $\eval(\sigma,y) = \sigma_{{\bf out}} \otimes \ketbra{1}{1}$, for some state $\sigma_{{\bf out}}$.} If we follow the first four steps of $\eval(\sigma,y)$, we will end up with the subspace $A$ (defined above). From~\Cref{lem:cp:useful}, we have the following: after applying $U_A$ on $\ket{A_{s,s'}}\ket{0}_{{\bf anc}}$, we obtain $(s,s')$ in ${\bf anc}$ register. That is, ${\bf anc}$ register has the state $\ket{s,s'}$. After querying $\oracletwo$ on ${\bf anc}$, the value stored in ${\bf out}$ is $\oracletwo(s,s')$. Thus, measuring the register ${\bf out}$ yields the value $\theta' = H(s,s')$. Since $\theta'=\theta$, the output of $\eval(\sigma,y)$ is $\sigma_{{\bf out}} \otimes \ketbra{1}{1}$, where $\sigma_{{\bf out}}$ is the residual state. \\

\noindent {\em {\bf Case 2}. $\forall x\neq y$,  $\tracedist{\eval(\sigma,x)}{\sigma_{{\bf out}} \otimes \ketbra{0}{0}} \leq \nu(\secparam)$, for some negligible function $\nu(\secparam)$ and some state $\sigma_{{\bf out}}$.} To prove this, it suffices to show that the probability that $\eval(\sigma,x)$ outputs 1 is negligible in $\secparam$. Consider the following claim:

\begin{claim}
\label{clm:copyprotect:goodH}
If $\oracletwo: \bit^{2n} \to \bit^{4n+\secparam}$ is picked uniformly at random, the probability that $\oracletwo$ is not injective is at most $\nu_1(\secparam)$, for some negligible function $\nu_1(\cdot)$.
\end{claim}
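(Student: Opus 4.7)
The plan is to prove the claim via a straightforward union bound (birthday-style) argument, since collision probability is what governs injectivity for a random function.

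First, I would observe that $\oracletwo$ fails to be injective precisely when there exist two distinct inputs $x_1, x_2 \in \bit^{2n}$ with $\oracletwo(x_1) = \oracletwo(x_2)$. For any fixed pair of distinct inputs, since $\oracletwo$ is drawn uniformly at random and its outputs on distinct inputs are independent uniform elements of $\bit^{4n+\secparam}$, the probability of a collision on that specific pair is exactly $2^{-(4n+\secparam)}$.

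Next, I would apply a union bound over all unordered pairs of distinct inputs. The number of such pairs is $\binom{2^{2n}}{2} \le \tfrac{1}{2} \cdot 2^{4n}$, so the probability that $\oracletwo$ has any collision is at most
\[
\tfrac{1}{2} \cdot 2^{4n} \cdot 2^{-(4n+\secparam)} \;=\; 2^{-\secparam-1}.
\]
Setting $\nu_1(\secparam) := 2^{-\secparam-1}$ gives the desired negligible bound. Recalling from the construction that $n = 2\secparam$, the output length $4n + \secparam = 9\secparam$ is comfortably large enough that this gain over the domain size is exponential in $\secparam$.

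There is no real obstacle here — the key design choice that makes the proof immediate is that the codomain is chosen to be exponentially larger than the square of the domain size, precisely to kill birthday collisions. The entire argument is one line of counting plus one line of union bound; I would just need to state it carefully and conclude by exhibiting the explicit negligible function $\nu_1$.
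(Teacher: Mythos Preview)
Your proposal is correct and follows essentially the same approach as the paper: fix a pair of distinct inputs, note the collision probability is $2^{-(4n+\secparam)}$, and union-bound over the $\binom{2^{2n}}{2}$ pairs. The only cosmetic difference is that the paper bounds $\binom{2^{2n}}{2} \le 2^{4n}$ to get $\nu_1(\secparam) = 2^{-\secparam}$, whereas you keep the factor of $\tfrac{1}{2}$ for the slightly tighter $2^{-\secparam-1}$.
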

\begin{proof}
For any $a\ne b \in \{0,1\}^{2n}$, the probability that $\oracletwo(a)=\oracletwo(b)$ is $\frac{1}{2^{4n+\secparam}}$. By a union bound argument, the probability that $\oracletwo$ is not injective is at most $\frac{ {2^{2n} \choose 2} }{2^{4n + \secparam}} \leq \frac{2^{4n}}{2^{4n+\secparam}} = \frac{1}{2^{\secparam}}$. 
\end{proof}

\noindent Let us condition on the event that $\oracletwo$ is injective. We consider the first four steps of execution of $\eval(\sigma,x)$:
\begin{itemize}
        \item Measure the register ${\bf Y}$ of $\sigma$ to obtain the value $\theta$. Call the post-measurement state $\sigma'$. 
        \item Compute ${\bf v}=\oracleone(x)$. Parse ${\bf v}$ as a concatenation of $d$ vectors $v_1,\ldots,v_d$, where each $v_i$ has dimension $n$. Abort if the vectors $\{v_1,\ldots,v_d\}$ are not linearly independent.
        \item Let $A'=\mathrm{Span}\left(v_1,\ldots,v_d\right)$.
\end{itemize}
\noindent Consider the following claim. 

\begin{claim}
\label{clm:copyprotect:goodintersect}
If $x \ne y$, then there exists a negligible function $\nu_2(\secparam)$ such that the probability (over the coins of $\oracleone$) that $\frac{|A' \cap A|}{2^d} \leq \nu_2(\secparam)$ holds is at least $1-\nu_2(\secparam)$.
\end{claim}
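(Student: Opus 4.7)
The plan is to view both subspaces as being generated by independent uniformly random vectors, and then bound the dimension of their intersection using a projection-and-union-bound argument. Since $x\neq y$, the values $\oracleone(x)$ and $\oracleone(y)$ are independent and uniform in $\F_2^{n\cdot d}$, so the defining vectors $(v_1',\ldots,v_d')$ for $A'$ are independent of the defining vectors $(v_1,\ldots,v_d)$ for $A$. By \Cref{clm:linearindependence}, both tuples are linearly independent except with negligible probability, so I condition on this event and incur only a negligible additive loss. Consequently, I may fix $A$ and analyze the random subspace $A'$ spanned by $d$ (conditionally) independent uniform vectors.

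The key idea is to pass to the quotient $\F_2^n/A$, which has dimension $n-d = d$. The projections $\bar v_i' := v_i' + A$ are uniform on $\F_2^d$ (and remain close to uniform after conditioning on linear independence of the $v_i'$, up to a negligible loss). Writing $\ell = \dim\,\mathrm{span}(\bar v_1',\ldots,\bar v_d')$, one has $\dim(A+A') = d+\ell$, hence $\dim(A\cap A') = d-\ell$. Thus, to show $|A\cap A'|/2^d \le \nu_2(\secparam)$ with overwhelming probability, it suffices to show $\ell \ge d - k^*$ with overwhelming probability, for some $k^* = o(d)$.

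The next step is a standard union bound: the probability that $d$ uniform vectors in $\F_2^d$ all lie in a fixed subspace of dimension $d-k$ is $(2^{-k})^d = 2^{-kd}$; and the number of such subspaces is the Gaussian binomial $\binom{d}{d-k}_2 \le 4\cdot 2^{k(d-k)}$. Taking the union over all $(d-k)$-dimensional subspaces gives
\[
\Pr[\ell \le d-k] \;\le\; 4\cdot 2^{k(d-k)-kd} \;=\; 4\cdot 2^{-k^2}.
\]
Setting $k^* = \lceil \secparam/2\rceil$ and taking the union over $k\ge k^*$ yields $\Pr[\dim(A\cap A') \ge k^*] \le 2^{-\Omega(\secparam^2)}$. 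On the complementary event, $|A\cap A'|/2^d = 2^{\dim(A\cap A') - d} \le 2^{k^*-d} \le 2^{-\secparam/2}$. Combining this with the negligible losses from the conditioning on linear independence, the claim holds for, e.g., $\nu_2(\secparam) := 2^{-\secparam/3}$.

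The only mildly delicate step is justifying that, after conditioning on linear independence of $(v_1',\ldots,v_d')$, the projections $\bar v_i'$ remain sufficiently close to uniform on $\F_2^d$; but since the conditioning event has probability $1-\negl(\secparam)$ by \Cref{clm:linearindependence}, all the union-bound estimates survive with at most a negligible multiplicative correction, which does not affect negligibility of the final bound.
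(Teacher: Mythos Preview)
Your argument is correct, but it takes a substantially different route from the paper. The paper's proof is a two-line application of Markov's inequality: after conditioning (via \Cref{clm:linearindependence}) on both tuples spanning $d$-dimensional subspaces, it computes
\[
\E\bigl[|A\cap A'|\bigr]=\sum_{v\in\F_2^n}\Pr[v\in A]^2=1+(2^n-1)\bigl(2^{d-n}\bigr)^2<2,
\]
and then Markov gives $\Pr\bigl[|A\cap A'|/2^d>\nu_2\bigr]\le 2/(2^d\nu_2)$, which is negligible for, e.g., $\nu_2=2^{-\lambda/5}$. By contrast, you pass to the quotient $\F_2^n/A$, translate $\dim(A\cap A')$ into a rank deficiency of the projected generators, and union-bound over all low-dimensional subspaces of $\F_2^d$. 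Your approach yields a far stronger tail, $\Pr[\dim(A\cap A')\ge k]\le 4\cdot 2^{-k^2}$, at the cost of more machinery (Gaussian binomials, the dimension formula, the careful conditioning remark). Both are valid; the paper's argument is considerably shorter, while yours gives quantitatively better concentration and a cleaner structural picture of why the intersection is small. One small simplification: once you have $\Pr[\ell\le d-k]\le 4\cdot 2^{-k^2}$, you can apply it directly at $k=k^*$ rather than summing over $k\ge k^*$, since the events are nested.
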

\begin{proof}
Since $x \ne y$ and $\oracleone$ is a random oracle, $A$ and $A'$ are independently sampled. By \Cref{clm:linearindependence}, $A$ and $A'$ are uniformly random independent subspaces of dimension $d$ each with probability at least $1 - 2\nu_0(\secparam)$. Conditioned on this, we can bound the expected size of their intersection as \begin{align}
    \label{eq:expectationbound}
    \E \bracketsS{|A \cap A'|} = \sum_{v \in \F_2^n} \Pr \bracketsS{v \in A \cap A'} = \sum_{v \in \F_2^n} \Pr \bracketsS{v \in A}^2 = 1 + \brackets{2^n - 1} \brackets{2^{d-n}}^2 < 2.
\end{align}
Let $\nu_2(\secparam) = 2^{-\secparam/5} + 2\nu_0(\secparam)$. Then, by Markov's Inequality and \cref{eq:expectationbound} we have \begin{align*}
    \Pr \bracketsS{\frac{|A' \cap A|}{2^d} > \nu_2(\secparam)} \le 2\nu_0(\secparam) +  \frac{\E \bracketsS{|A' \cap A|} }{2^d\nu_2(\secparam)} < 2\nu_0(\secparam) +  2^{-\secparam/2} < \nu_2(\secparam).
\end{align*}

\end{proof}

\noindent We will condition on the event that $\frac{|A' \cap A|}{2^d} \leq \nu_2(\secparam)$. By ~\Cref{lem:cp:useful}, we have that

 $$p := \left\|  (I_{\bf X} \otimes \ketbra{s,s'}{s,s'}_{\bf Z} \otimes I_{\bf anc} )\left( U_{A'} \ket{A_{s,s'}}\ket{0^{2n}}_{\bf Z}\ket{0^{\poly(\secparam)}}_{{\bf anc}}\right) \right\|^2\leq \nu_3(\secparam). $$
 
 for some negligible function $\nu_3(\secparam)$. Since we have conditioned on the event that $\oracletwo$ is injective, the probability that $\eval(1^\secparam, \sigma, x)$ outputs 1 is given by 
 $$\left\|  (I_{\bf X,Y,Z,anc} \otimes \ketbra{\oracletwo(s,s')}{\oracletwo(s,s')}_{{\bf out}} ) \left(I_{\bf X,Z,anc} \otimes O^{\oracletwo}\right) U_{A'} \ket{A_{s,s'}}_{\bf X} \ket{0^{\poly(\secparam)}}_{{\bf anc}} \ket{0}_{\bf out}  \right\|^2 = p,$$
 where $O^{\oracletwo}$ is the unitary that computes $\oracletwo$. Combining this with \Cref{clm:linearindependence},~\Cref{clm:copyprotect:goodH} and~\Cref{clm:copyprotect:goodintersect}, we conclude that for any $x \ne y$, \begin{align*}
     \prob \bracketsS{ 1 \from \eval(1^\secparam, \sigma, x)} &\le \prob \bracketsS{1 \from \eval(1^\secparam, \sigma, x) \given \substack{ \frac{|A \cap A'|}{2^d} \le \nu_2(\secparam), \; H_2 \text{ is injective }, \\ \cp(1^\secparam, y) \text{ or } \eval(1^\secparam, \sigma, x) \text{ doesn't abort} }} \\
     &+ \prob \bracketsS{ \cp(1^\secparam, y) \text{ or } \eval(1^\secparam, \sigma, x) \text{ aborts} } + \prob \bracketsS{H_2 \text{ is not injective} } \\
     &+ \prob \bracketsS{ \frac{|A \cap A'|}{2^d} > \nu_2(\secparam) \given H_2 \text{ is injective}} \\
     &\le \nu_3(\secparam) + 2\nu_0(\secparam) + \nu_1(\secparam) + \nu_2(\secparam) \\
     &\le \negl(\secparam).
 \end{align*}

\end{proof}
}

\begin{lemma} \label{lem:cpsecurity}
$(\copyprotect,\eval)$ is a $\distrclass$-secure copy-protection scheme for point functions with input length $\secparam$, where $\distrclass(y) = \distrmarg_y^B \times \distrmarg_y^C$ is a product distribution.
\end{lemma}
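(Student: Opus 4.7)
The plan is to reduce to the monogamy-of-entanglement property for coset states with membership-checking oracles (\Cref{thm: monogamy info}) via two random-oracle reprogrammings followed by a simultaneous threshold projective implementation. Suppose for contradiction that a QPT splitter $\abc$ wins the piracy experiment with probability $\trivialprob + \eps$ for a noticeable $\eps$. Let $A,s,s'$ be the subspace and coset representatives sampled during $\copyprotect(1^\secparam,y)$, and write $\theta := \oracletwo(s,s')$ for the classical tag appended to the copy-protected state.

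First I would argue that, without loss of generality, $A$ may be regarded as a uniformly random $d$-dimensional subspace of $\F_2^n$ independent of $y$. Because $\distr$ is unlearnable and $\As$'s view (the coset state $\ket{A_{s,s'}}$, the string $\theta$, and the oracles $P_{A+s},P_{A^\perp+s'}$) can be simulated from $(A,s,s')$ alone without knowing $y$, \Cref{thm:bbbv} forces the query weight of $\As$ on $y$ against $\oracleone$ to be negligible; otherwise one would recover $y$ from $\distr$ by measuring a uniformly random query of $\As$. Reprogramming $\oracleone(y)$ to a uniformly random $d$-tuple, which is linearly independent with overwhelming probability, therefore perturbs the winning probability only negligibly.

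Next, I exploit the product structure $\distrclass(y)=\distrmarg_y^B\times\distrmarg_y^C$ to factorize the joint success operator as $M^B_y\otimes M^C_y$, where $M^E_y := \E_{x_E\from\distrmarg_y^E}[\Pi^E_{x_E}]$ is a POVM on $E$'s register alone and $\Pi^E_{x_E}$ is the POVM element corresponding to party $E$ outputting $f_y(x_E)$ correctly on input $x_E$ (together with its classical copy of $\theta$, which $\As$ may freely duplicate). The winning probability equals $\E_y\Tr\!\bracketsS{(M^B_y\otimes M^C_y)\rho_{\B\C}}$, whereas $\trivialprob=\max_{E\in\bracketsC{B,C}}\E_{y,x_E\from\distrmarg_y^E}\bracketsS{\max_b\Pr[f_y(x_E)=b]}$. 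I would then invoke the asymmetric efficient threshold measurement of \Cref{thm:thres_approx_asymmetric} applied to $(M^B_y,M^C_y)$ with per-party thresholds $\gamma_B,\gamma_C$ set slightly above the respective marginal trivial values. By a Jordan-block/convexity argument in the spirit of \Cref{lem:smallweight}, an advantage of $\eps$ over $\trivialprob$ forces both threshold measurements to succeed simultaneously with probability $\Omega(\eps)$, yielding a post-measurement bipartite state on which each of Bob and Charlie wins by a non-negligible margin over its trivial bound.

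Finally, I would extract $(s,s')$ from both sides. An above-trivial advantage on Bob's side can only come from the event $x_B=y$: for any $x_B\neq y$, the subspace $A'$ computed from $\oracleone(x_B)$ is independent of $A$, so by \Cref{lem:cp:useful} $U_{A'}\ket{A_{s,s'}}\ket{0}$ has vanishingly small weight on $\ket{s,s'}$ in the $\mathbf{Z}$-register, whence the trivial strategy of outputting $0$ already saturates achievable success on the $x_B\neq y$ branch (given the collision resistance of $\oracletwo$ to a $(4n+\secparam)$-bit range). Consequently, on the $x_B=y$ branch Bob's $\eval$ must verify $\theta$ by querying $\oracletwo$ at $(s,s')$; by \Cref{thm:bbbv} this requires non-negligible query weight at $(s,s')$, and measuring a uniformly random $\oracletwo$-query yields $(s,s')$ with non-negligible probability. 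The same extraction works symmetrically for Charlie, contradicting \Cref{thm: monogamy info}. The main obstacle is precisely this simultaneous extraction: Bob's and Charlie's registers are in general entangled and a priori one could imagine that above-trivial behavior on one side is always purchased at the price of trivial behavior on the other; the asymmetric simultaneous threshold measurement of \Cref{thm:thres_approx_asymmetric} is the essential tool that projects $\rho_{\B\C}$ into a joint ``genuinely non-trivial'' subspace on which the BBBV extraction can be carried out on both sides at once.
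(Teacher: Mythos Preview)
Your overall strategy—reprogram the oracles, apply the asymmetric threshold measurement of \Cref{thm:thres_approx_asymmetric}, then BBBV-extract $(s,s')$ on both sides to contradict \Cref{thm: monogamy info}—is exactly the route the paper takes. The gap is in the extraction step. You argue that ``for any $x_B\neq y$, the subspace $A'$ computed from $\oracleone(x_B)$ is independent of $A$, so by \Cref{lem:cp:useful} $U_{A'}\ket{A_{s,s'}}\ket{0}$ has vanishingly small weight on $\ket{s,s'}$'' and hence Bob must query $\oracletwo$ at $(s,s')$. But \Cref{lem:cp:useful} is a statement about the honest $\eval$ unitary; an adversarial $\bob$ need not compute $U_{A'}$ at all, so this lemma is doing no work. (The claim that outputting $0$ saturates the $x_B\neq y$ branch is vacuously true since $f_y(x_B)=0$ there.) What must actually be shown is that a $\bob$ who never touches $\oracletwo$ at $(s,s')$ cannot distinguish the event $x_B=y$ from $x_B\neq y$. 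The paper does this by an information-theoretic hybrid (Games~1--3 inside the proof of \Cref{lem:cpsmallweight}): if $\oracletwo$ is \emph{not} reprogrammed at $(s,s')$ for $\bob$, then the tag $\theta$ becomes a uniform string independent of everything; the only remaining occurrence of $(s,s')$ is in $\ket{A_{s,s'}}$, which averaged over $(s,s')$ is the maximally mixed state; with $(s,s')$ gone, $A$ appears only in $\repryA$, which is statistically close to a fresh random oracle; hence $\bob$'s view is independent of $y$ and he is reduced to a blind guess. It is this chain—not any property of $U_{A'}$—that forces non-negligible query weight at $(s,s')$.

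Two smaller points. First, inside the MOE reduction $\bob'$ must implement the POVM $M^B_y$ (and hence $\ati$) without knowing $(s,s')$; this requires simulating $\bob$'s access to $\oracletwo$ reprogrammed at $(s,s')\mapsto\theta$, which the paper does by combining the description of $A$ with the membership oracles $P_{A+s},P_{A^\perp+s'}$ to recognize the canonical pair. You list $P_{A+s},P_{A^\perp+s'}$ as part of $\alice$'s view, but in the copy-protection game $\alice$ has no such oracles—they enter only in the MOE reduction, precisely to enable this reprogramming on the $\bob',\charlie'$ side. Second, you promise ``two random-oracle reprogrammings'' but detail only the puncturing of $\oracleone$ at $y$; the second one (puncturing $\oracletwo$ at $(s,s')$ for $\alice$) is justified by direct-product hardness (\Cref{col:dp_negl}), not unlearnability, and is what decouples the tag $\theta$ from $(s,s')$ before the threshold step.
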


\fullversion{%
\begin{proof}[Proof of \Cref{lem:cpsecurity}]

Fix an unlearnable distribution $\distr = \distr_\secparam$. We will define a sequence of hybrids:

\begin{hybrid}{1} \label{hyb:cp1}
This is the real piracy experiment for $(\cp, \eval)$ defined in \Cref{def:piracyexperiment}, where $\abc$ all have access to both random oracles $\oracleone$ and $\oracletwo$. The input to $\cp$ is denoted by $y \in \bit^\secparam$ as in the construction.
\end{hybrid}

\begin{hybrid}{2} \label{hyb:cp2}
In this hybrid, we change, for $\alice$ only, the oracle $\oracleone$ to $\oracleone_y$, which is the punctured oracle defined as
\begin{align*}
    \oracleone_y(x) = \begin{cases} u, & x = y \\ \oracleone(x), & x \ne y \end{cases},
\end{align*}
where $u \in \bit^{nd}$ is a fresh uniformly random string.

\end{hybrid}

\begin{hybrid}{3} \label{hyb:cp3}
In this hybrid, we have the challenger sample $A \subseteq \subspace_d$ uniformly at the start. Using this $A$, we change the oracle $\oracleone$ for $\bob$ and $\charlie$ both to $\repryA$, which is the reprogrammed oracle defined as follows: \begin{itemize}
    \item Fix a random basis $\brackets{v_1, \dots, v_d}$ of $A$.
    \item If $x = y$, then $\repryA(x)$ outputs $\brackets{v_1, \dots, v_d}$.
    \item If $x \ne y$, then $\repryA(x)$ outputs $\oracleone(x)$.
\end{itemize}

\end{hybrid}

\begin{hybrid}{4} \label{hyb:cp4}
In this hybrid, we change, for $\alice$ only, the oracle $\oracletwo$ to the punctured oracle $\oracletwo_{s,s'}$ defined as \begin{align*}
    \oracletwo_{s,s'}(t,t') = \begin{cases} v, & (t,t') = (s,s') \\ \oracletwo(t,t'), & (t,t') \ne (s,s') \end{cases},
\end{align*}
where $v \in \bit^{4n+\secparam}$ is a fresh uniformly random string.
\end{hybrid}

Let $p_i$ be the probability that $(\As, \Bs, \Cs)$ wins in Hybrid $i$, and let $\trivialprob = \trivialprob(\distr_\secparam, \distrclass)$. We will show the following lemmas about the hybrids:

\begin{lemma} \label{lem:cphyb12}
$\abs{p_1 - p_2} \le \negl(\secparam)$.
\end{lemma}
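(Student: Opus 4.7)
The plan is to reduce to the unlearnability of $\distr$ via a query weight argument based on \Cref{thm:bbbv}. Suppose toward contradiction that $|p_1 - p_2| \geq 1/q(\secparam)$ for some polynomial $q$. Since the only difference between the two hybrids is Alice's access to $\oracleone$ versus the punctured oracle $\oracleone_y$ (which agree on all inputs other than $y$), and since everything after Alice's execution (Bob, Charlie, the winning check) is a fixed CPTP map on Alice's output state and therefore contracts trace distance, the trace distance between the two hybrids' post-Alice states is at least $|p_1 - p_2|$. Applying \Cref{thm:bbbv} to Alice with the reprogrammed set $F = [0, T_\alice - 1] \times \{y\}$, we conclude that the total weight $W$ of Alice's queries to $\oracleone$ on the input $y$ satisfies $W \geq 4/(q^2 T_\alice)$, where $T_\alice$ upper-bounds the number of queries Alice makes to $\oracleone$.

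Next, I would construct a query-bounded predictor $\mathcal{P}$ against \Cref{def:unlearnable} that simulates Alice's portion of Hybrid 1 without knowing $y$ and then extracts $y$ by measuring a uniformly random Alice query. The simulator samples, independently of $y$, a uniformly random linearly independent tuple $(v_1, \ldots, v_d) \in (\F_2^n)^d$ (valid on one attempt with overwhelming probability), sets $A = \mathrm{Span}(v_1, \ldots, v_d)$, draws $(s, s') \from \canonicalset(A) \times \canonicalset(A^\perp)$ and $\theta \from \bit^{4n+\secparam}$ uniformly, and prepares the state $\sigma = \ketbra{A_{s,s'}}{A_{s,s'}} \otimes \ketbra{\theta}{\theta}$ for Alice. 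For Alice's oracle accesses, $\mathcal{P}$ lazily simulates random oracles $\oracleone', \oracletwo'$ and reprograms them so that $\oracleone'(y) = (v_1, \ldots, v_d)$ and $\oracletwo'(s, s') = \theta$. The reprogramming at $(s,s')$ is immediate since $\mathcal{P}$ knows $(s,s')$; the reprogramming at $y$ is carried out coherently, using one quantum query to $f_y$ per Alice query to $\oracleone'$, in order to route the output to $(v_1, \ldots, v_d)$ conditioned on the input equaling $y$ without $\mathcal{P}$ ever learning $y$ in the clear. Since $\oracleone(y)$ in Hybrid 1 is a uniformly random $d$-tuple conditioned on linear independence, Alice's joint view of $(\sigma, \oracleone', \oracletwo')$ in the simulation is distributed identically to her view in Hybrid 1.

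Finally, $\mathcal{P}$ picks $i \from [T_\alice]$ uniformly, runs Alice coherently up to her $i$-th query to $\oracleone'$, measures the query register in the computational basis, and outputs the result. By the query weight bound above and the standard ``measure a random query'' argument, $\mathcal{P}$ outputs $y$ with probability at least $W / T_\alice \geq 4/(q^2 T_\alice^2)$, which is $1/\poly(\secparam)$ and hence non-negligible; this contradicts \Cref{def:unlearnable}. The step I expect to require the most care is the coherent simulation of $\oracleone'$ from oracle access to $f_y$: the simulator must, in superposition over Alice's query input, conditionally deliver either the preselected tuple $(v_1, \ldots, v_d)$ or the simulated random value depending on whether the input equals $y$, using only polynomially many $f_y$ queries overall and preserving the exact distribution of Alice's view.
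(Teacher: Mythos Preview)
Your argument is correct, but the paper takes a shorter route. The key observation you miss is that in Hybrid~2 Alice's entire view is already \emph{statistically independent of $y$}: the punctured oracle $\oracleone_y$ is a fresh random function, and the pair $(A, s, s')$ together with $\oracletwo$ is sampled without reference to~$y$ at all. Consequently the paper's reduction simulates Hybrid~2 (not Hybrid~1) for Alice by simply drawing a random subspace $A$, random $(s,s')$, and fresh random oracles $\oracleone,\oracletwo$, then measuring a uniformly chosen query to $\oracleone$. No $f_y$-oracle calls are needed, and the unlearnability predictor in fact makes \emph{zero} queries to $f_y$. The BBBV bound is then applied to the query weight on~$y$ in the Hybrid~2 execution, which is what the simulator reproduces exactly.

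Your approach instead reconstructs Hybrid~1, where $\oracleone(y)$ must equal the chosen basis of~$A$; this forces you to coherently route the oracle at~$y$ via $f_y$. That works, but note that implementing the controlled reprogramming unitarily normally costs two $f_y$-queries per $\oracleone'$-query (one to compute the selector $f_y(x)$ into an ancilla and one to uncompute it), not one as you write; the overall query count is still polynomial, so this is only a cosmetic slip. The upshot: both proofs are valid, but simulating the \emph{punctured} hybrid rather than the real one removes the need for the coherent $f_y$-controlled simulation entirely and makes the reduction a couple of lines.
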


\begin{lemma} \label{lem:cphyb23}
$|p_2 - p_3| \le \negl(\secparam)$.
\end{lemma}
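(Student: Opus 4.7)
The strategy is to argue that Hybrid 3 is merely a reordering of the random choices in Hybrid 2: sampling a uniformly random string $\oracleone(y) \in \bit^{nd}$ and parsing it as a $d$-tuple of vectors is, conditioned on linear independence, equivalent to first sampling a uniformly random subspace $A \in \subspace_d$ and then sampling a uniformly random ordered basis $(v_1, \ldots, v_d)$ of $A$. By \Cref{clm:linearindependence}, the linear independence condition holds except with negligible probability $\nu_0(\secparam)$.

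First I would show that, conditioned on non-abort of $\cp$ in Hybrid 2, the joint distribution of the subspace $A$ and the value of $\oracleone(y)$ matches that of Hybrid 3. Since any $d$-dimensional subspace of $\F_2^n$ has exactly $\prod_{i=0}^{d-1}(2^d - 2^i)$ ordered bases, and there are exactly $\prod_{i=0}^{d-1}(2^n - 2^i)$ ordered $d$-tuples of linearly independent vectors in $\F_2^n$, a uniformly random such tuple induces the uniform distribution over $\subspace_d$ on its span and, conditioned on the span, the uniform distribution over ordered bases of that subspace. This is precisely the distribution Hybrid 3 produces on $(A, \repryA(y))$ by first sampling $A$ uniformly and then choosing a uniformly random ordered basis.

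Next I would verify that every other random choice has the same distribution in the two hybrids. Alice's oracle $\oracleone_y$ outputs a fresh, uniformly random value at $y$ (independent of all other randomness) and agrees with $\oracleone$ elsewhere; Bob and Charlie's oracle agrees with $\oracleone$ at every input $x \ne y$; the oracle $\oracletwo$ is identical in both hybrids; and the cosets $s \from \canonicalset(A)$, $s' \from \canonicalset(A^\perp)$ are sampled uniformly given $A$ in both. Hence the copy-protected state $\ket{A_{s,s'}} \otimes \ket{\oracletwo(s,s')}$ delivered to Alice is identically distributed in the two hybrids. Combining with the previous paragraph, the full joint views of $(\As, \Bs, \Cs)$, together with the challenge inputs, are identically distributed conditioned on non-abort, which yields $|p_2 - p_3| \le \nu_0(\secparam) = \negl(\secparam)$.

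The main subtlety to be careful about is that the change from $\oracleone$ to $\repryA$ for Bob and Charlie might appear to require a computational indistinguishability argument, but here it is purely statistical: because Alice's oracle is punctured at $y$, her view is independent of the original value $\oracleone(y)$, so we are free to re-sample $\oracleone(y)$ in any way consistent with the marginal distribution of $A$ without altering the joint view. The counting argument above shows that the reprogrammed value $\repryA(y)$ is distributed exactly as the original $\oracleone(y)$ conditioned on parsing to a valid basis, which closes the gap without needing to invoke any query bound on the adversary.
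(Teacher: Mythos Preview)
Your proposal is correct and takes essentially the same approach as the paper, which simply states that Hybrids~2 and~3 are identical conditioned on $\oracleone(y)$ parsing to a linearly independent tuple (which fails with probability at most $\nu_0(\secparam)$ by \Cref{clm:linearindependence}). Your write-up is considerably more detailed than the paper's one-line proof: the counting argument showing that a uniform linearly independent $d$-tuple is equidistributed with sampling $A \from \subspace_d$ and then a uniform ordered basis, together with the observation that Alice's punctured oracle makes her view independent of $\oracleone(y)$, are exactly the points that justify the paper's assertion that the two hybrids are ``identical'' conditioned on non-abort.
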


\begin{lemma} \label{lem:cphyb34}
$\abs{p_3 - p_4} \le \negl(\secparam)$.
\end{lemma}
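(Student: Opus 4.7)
The plan is to follow the template of the proof of \Cref{lem:uehyb01}: reduce to the direct product hardness of coset states (\Cref{col:dp_negl}) using \Cref{thm:bbbv}. Assume for contradiction that $\abs{p_3 - p_4} \ge 1/q(\secparam)$ for some polynomial $q(\cdot)$. The only difference between \Cref{hyb:cp3} and \Cref{hyb:cp4} is that $\alice$'s oracle for $\oracletwo$ is reprogrammed at the single canonical point $(s,s')$ to a fresh uniformly random value $v$. Applying \Cref{thm:bbbv} with the reprogramming set $F = [0,T-1] \times \bracketsC{(s,s')}$, where $T$ bounds $\alice$'s number of queries to $\oracletwo$, the total query weight of $\alice$ on input $(s,s')$ must be at least $4/(q^2 T)$.

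I would then construct an adversary $\As'$ for the direct product game (\Cref{thm:dp_hard}). On input $\ket{A_{s,s'}}$ together with oracle access to $P_{A+s}$ and $P_{A^\perp + s'}$ (the membership oracles are not actually used in the simulation), $\As'$ simulates \Cref{hyb:cp4} for $\alice$ as follows. It samples $y \from \distr_\secparam$ together with fresh uniform strings $u \in \bit^{nd}$ and $\theta \in \bit^{4n+\secparam}$, emulates $\oracleone$ and $\oracletwo$ as random oracles, and runs $\alice$ on the state $\ketbra{A_{s,s'}}{A_{s,s'}}_{\bf X} \otimes \ketbra{\theta}{\theta}_{\bf Y}$ with access to $\oracleone_y$ (obtained by programming $\oracleone(y) := u$) and to $\oracletwo$. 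Because $\theta$ and the value $\oracletwo(s,s')$ are two independent uniform strings, this simulation is distributed exactly as $\alice$'s view in \Cref{hyb:cp4}, where register $\bf Y$ carries $\oracletwo(s,s')$ while $\alice$'s oracle $\oracletwo_{s,s'}$ returns the independent fresh $v$ on input $(s,s')$. At the end, $\As'$ picks $i \in [T]$ uniformly, measures the query register of $\alice$'s $i$-th query to $\oracletwo$ in the computational basis, and outputs the outcome $(t, t')$.

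By the lower bound on the query weight, $(t, t') = (s, s')$ with probability at least $4/(q^2 T^2)$, which is non-negligible. Since $s \in A+s$ and $s' \in A^\perp + s'$ by definition of the canonical representatives, $\As'$ outputs a valid direct-product pair with non-negligible probability, contradicting \Cref{col:dp_negl}. The main obstacle is justifying that the simulation faithfully reproduces $\alice$'s view in \Cref{hyb:cp4}: the key observation is that the puncturing of $\oracletwo$ at $(s,s')$ decouples the value held in register $\bf Y$ from any value $\alice$ can read from its oracle, so we may freely replace $\oracletwo(s,s')$ by an independently sampled $\theta$ without changing $\alice$'s view, allowing the reduction to proceed without ever needing to know $(s,s')$ when setting up the simulation.
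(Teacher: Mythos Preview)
Your proposal is correct and takes essentially the same approach as the paper: both reduce to the direct product hardness of coset states (\Cref{col:dp_negl}) by simulating $\alice$'s view in \Cref{hyb:cp4}, measuring a random query to $\oracletwo$, and invoking \Cref{thm:bbbv}. Your write-up is in fact more careful than the paper's, explicitly articulating the ``swap'' observation that puncturing $\oracletwo$ at $(s,s')$ decouples the value in register ${\bf Y}$ from the oracle response, which is what allows the reduction to proceed without knowing $(s,s')$.
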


\begin{lemma} \label{lem:cphyb4}
$p_4 \le \trivialprob + \negl(\secparam).$
\end{lemma}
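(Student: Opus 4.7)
The plan is to reduce to the regular monogamy-of-entanglement (MOE) game for coset states with membership-checking oracles (\Cref{thm: monogamy info}). In Hybrid 4, Alice's view consists of $\ket{A_{s,s'}} \otimes \ket{\theta}$ together with oracles $H_1^y, H_2^{s,s'}$ that are \emph{independent} of $y$ and $(s,s')$: since Alice cannot query $H_1$ at $y$ nor $H_2$ at $(s,s')$, marginalizing over the fresh random puncture values $u, v$ makes Alice's two oracles distributed as fresh random functions, and the string $\theta$ is itself uniform and independent of Alice's oracle views. Hence Alice's splitting of $\ket{A_{s,s'}}$ between Bob and Charlie is independent of $y$.

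Assuming toward contradiction that $p_4 \ge \trivialprob + \epsilon$ for some non-negligible $\epsilon$, I first localize this advantage to the diagonal event $x_B = y = x_C$. Using the product structure $\distrclass(y) = \distrmarg_y^B \times \distrmarg_y^C$, one decomposes $p_4$ over the four cases determined by whether $x_B = y$ and whether $x_C = y$. On any off-diagonal case, at least one of $\bob, \charlie$ has the correct output $0$, which can be attained trivially; the contribution to $p_4$ from those three cases is therefore bounded by the corresponding contribution obtained by the trivial strategy of forwarding the full state to the on-diagonal party while the other party guesses deterministically. Consequently, the advantage $\epsilon$ must come from the diagonal case, in which both parties must jointly output $1$ on input $x_B = x_C = y$ with non-negligible probability -- a genuinely quantum event.

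Next, I apply the threshold-measurement machinery of \Cref{thm:thres_approx}, in the style of \Cref{lem:smallweight}, to project Bob's and Charlie's registers onto the subspace where both have non-trivial success on input $y$. A BBBV-type argument (\Cref{thm:bbbv}) then shows that on this subspace, both $\bob$ and $\charlie$ must query $H_2$ at $(s,s')$ with non-negligible amplitude: otherwise, by independence of $\theta$ from each party's view, their output distribution would coincide with that of an experiment where $H_2(s,s')$ is replaced by a fresh random string uncorrelated with $\theta$, in which case $\theta$ carries no verifiable information linking the party's half of $\ket{A_{s,s'}}$ to the answer $1$, and the success probability collapses to the trivial baseline. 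Measuring a uniformly random $H_2$-query on each side therefore simultaneously extracts $(s,s')$ with non-negligible probability.

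Finally, one packages this into a reduction $\abcprime$ against the MOE game. $\As'$ receives $\ket{A_{s,s'}}$ and membership oracles $P_{A+s}, P_{A^\perp+s'}$, samples fresh $y \from \distr$ and a fresh uniform $\theta$, and runs $\alice$ on $\ket{A_{s,s'}} \otimes \ket{\theta}$ with the simulated oracle $H_1^y$ (trivial since $y$ is known to $\As'$) and $H_2^{s,s'}$ (simulated as a fresh random oracle, by the marginalization observation of the first paragraph). When $\Bs', \Cs'$ receive the description of $A$ from the MOE challenger, they simulate the actual $H_2$ oracle for $\bob, \charlie$ -- which must satisfy $H_2(s,s') = \theta$ -- by the canonical-representative trick from the proof of \Cref{lem:rprom}: given $A$ together with $P_{A+s}$ and $P_{A^\perp+s'}$, one efficiently detects the canonical pair $(s,s')$ and overrides the stored $H_2(s,s')$ with $\theta$. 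They then execute $\bob$ and $\charlie$ on input $x = y$ and measure a uniformly random $H_2$-query of each, extracting $(s,s')$ on both sides with non-negligible probability and contradicting \Cref{thm: monogamy info}. The main obstacle is the first step: cleanly isolating the diagonal contribution from $p_4 - \trivialprob$ is delicate because the adversary could in principle trade off gains across the four cases, and the argument requires care to verify that the trivial upper bound is simultaneously saturated on all three off-diagonal cases.
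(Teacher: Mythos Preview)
Your localization step to the diagonal case $(x_B,x_C)=(y,y)$ is not just delicate---it is the wrong decomposition, and the gap you flag at the end is fatal. Even granting that $p_{yy}:=\Pr[\bob,\charlie\text{ both output }1\mid x_B=x_C=y]$ is non-negligible, this does \emph{not} let you extract $(s,s')$ on both sides: the strategy ``always output $1$'' achieves $p_{yy}=1$ without ever touching $\oracletwo$, so your BBBV step fails. Conversely, the off-diagonal cases are not ``trivially bounded'' either: succeeding simultaneously on $(y,x_C)$ and $(x_B,y)$ already requires both parties to evaluate $f_y$ on demand, which is precisely the unclonability content. The four cases are coupled through a single strategy and cannot be bounded separately by different trivial baselines.

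The paper avoids this entirely. Instead of splitting on inputs, it defines $\Pi^B$ (resp.\ $\Pi^C$) as the \emph{mixture} over $x_B\leftarrow\distrmarg_y^B$ (resp.\ $x_C\leftarrow\distrmarg_y^C$) of the projector onto $\bob$ (resp.\ $\charlie$) answering $f_y(x_B)$ correctly, so that $p_4=\langle\phi_{\B\C}|\Pi^B\otimes\Pi^C|\phi_{\B\C}\rangle$ directly. Writing $\ket{\phi_{\B\C}}$ in the joint eigenbasis of $\Pi^B,\Pi^C$, the small-weight lemma (analogue of \Cref{lem:smallweight}) shows the mass on $\{\lambda_i>\trivialprob_B+1/p\}\times\{\mu_j>\trivialprob_C+1/p\}$ is negligible, whence a one-line eigenvalue bound gives $p_4\le\trivialprob+1/p+\negl$. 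The extraction argument inside the small-weight lemma then works because an eigenvector with $\lambda_i>\trivialprob_B+1/p$ has \emph{averaged} success above the blind-guess baseline $\trivialprob_B$; if $\bob$ never queried $(s,s')$ one can unprogram $\oracletwo(s,s')$ and then undo $\repryA$, after which $y$ is information-theoretically hidden and $\bob$'s averaged success collapses to $\trivialprob_B$. That ``averaged vs.\ baseline'' gap is what forces the $(s,s')$ query---something your per-case analysis never isolates.
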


\begin{proof}[Proof of \Cref{lem:cphyb12}]

Let $\rho_{\B\C}^{(i)}$ be  the bipartite state sent by $\alice$ to $\bob$ and $\charlie$ in the $i$th Hybrid. We will show that $\tracedist{\rho_{\B\C}^{(1)}}{\rho_{\B\C}^{(2)}} \le \negl(\secparam)$. 
Since \Cref{hyb:cp1} and \Cref{hyb:cp2} are identical after the splitting phase and trace distance cannot increase by post-processing, this suffices to prove the lemma.

\par Suppose that $\tracedist{\rho_{\B\C}^{(1)}}{\rho_{\B\C}^{(2)}}$ is non-negligible. Using $\alice$ from \Cref{hyb:cp2} we will construct an adversary $\aliceprime$ which violates the unlearnability of $\distr_\secparam$ (\Cref{def:unlearnable}) without using the oracle $f_y(\cdot)$. \begin{itemize}
    \item $\aliceprime$ samples random oracles $\oracleone, \oracletwo$, a random subspace $A \in S_d$, and random $(s,s') \in \canonicalset(A) \times \canonicalset(A^\perp)$. 
    \item $\aliceprime$ runs $\alice$ on input $\brackets{\oracleone, \oracletwo, \ket{A_{s,s'}}, H(s,s')}$. Then it measures a random query $y'$ made by $\alice$ to $\oracleone$, and outputs $y'$. 
\end{itemize}
By \Cref{thm:bbbv}, the probability $\Pr\bracketsS{y' = y}$ is non-negligible, thus $\aliceprime$ breaks unlearnability.

\end{proof}

\begin{proof}[Proof of \Cref{lem:cphyb23}]
This easily follows by the fact that \Cref{hyb:cp2} and \Cref{hyb:cp3} are identical conditioned on the fact that $\oracleone(y)$ outputs a valid basis, which happens with overwhelming probability by \Cref{clm:linearindependence}.
\end{proof}

\begin{proof}[Proof of \Cref{lem:cphyb34}]
Similarly as before, it suffices to show $\tracedist{\rho_{\B\C}^{(1)}}{\rho_{\B\C}^{(2)}} \le \negl(\secparam)$. Suppose this is not the case, we will construct an adversary $\aliceprime$ which breaks direct product hardness (\Cref{col:dp_negl}) using $\alice$ from \Cref{hyb:cp4}: \begin{itemize}
    \item $\aliceprime$ receives $\ket{A_{s,s'}}$ from the challenger, where $(s,s') \in \canonicalset(A) \times \canonicalset(A^\perp)$. It samples random oracles $\oracleone,\oracletwo$, and a random string $v \in \bit^{4n+\secparam}$.
    \item $\aliceprime$ runs $\alice$ on input $\brackets{\oracleone, \oracletwo, \ket{A_{s,s'}}, v}$. It measures and outputs a random query $(t,t')$ made to $\oracletwo$ during the execution.
\end{itemize} 
By \Cref{thm:bbbv}, the probability $\Pr\bracketsS{(t,t') = (s,s')}$ is non-negligible, thus $\aliceprime$ breaks direct product hardness.

\end{proof}

\begin{proof}[Proof of \Cref{lem:cphyb4}]
We will use the same template as in the proof of \Cref{lem:rprom}. Let $\rho_{\B\C} := \rho_{\B\C}^{(4)}$ be the bipartite state created by $\alice$. We can assume without loss of generality that $\rho_{\B\C} := \ketbra{\phi_{\B\C}}{\phi_{\B\C}}$ is a pure state. Define POVM elements $\Pi^B, \Pi^C$ as follows:
\begin{itemize}
    \item $\Pi^B$: samples $x_B \from \distrmarg_y^B$; it runs $\bob$ on input oracles $\oracleone_y^A, \oracletwo$ and test input $x_B$; it measures if the output is $f_y(x_B)$; then it undoes all the computation.
    \item $\Pi^C$: defined similarly for $\charlie$.
\end{itemize}

\noindent Now we write the state in its spectral decomposition \begin{align*}
    \ket{\phi_{\B\C}} = \sum_{i,j} \alpha_{i,j} \ket{\phi_i}_\B\ket{\psi_j}_\C,
\end{align*}
where $\ket{\phi_i}_\B$ is an eigenvector of $\Pi^B$ with eigenvalue $\lambda_i$ and $\ket{\psi_j}_\C$ is an eigenvector of $\Pi^B$ with eigenvalue $\mu_j$. Let $\trivialprob_B$ ($\trivialprob_C$) be the trivial guessing probability when $\bob$ ($\charlie$) makes a blind guess, so that $\trivialprob = \max(\trivialprob_B, \trivialprob_C)$. We will need a lemma similar to \Cref{lem:smallweight}.

\begin{lemma} \label{lem:cpsmallweight}
    Let $p(\cdot)$ be a polynomial. With overwhelming probability over $y,(s,s'),\allowbreak \oracleone_y,\allowbreak \allowbreak \oracletwo_{s,s'},\allowbreak \oracletwo(s,s')$, and $A$, we have \begin{align*}
        \sum_{\substack{i: \; \lambda_i > \trivialprob_B + 1/p \\ j: \; \mu_j > \trivialprob_C + 1/p}} \abs{\alpha_{i,j}}^2 \le \negl(\secparam).
    \end{align*}
\end{lemma}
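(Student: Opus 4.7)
The proof follows the template of \Cref{lem:smallweight}, adapted to the copy-protection setting. Assuming for contradiction that the weight $W$ exceeds $1/q(\secparam)$ with non-negligible probability for some polynomial $q$, I would construct an adversary $(\As', \Bs', \Cs')$ against the monogamy-of-entanglement game of \Cref{thm: monogamy info}. Here $\As'$ receives $\ket{A_{s,s'}}$ and membership oracles $P_{A+s}, P_{A^\perp+s'}$, samples $y \from \distr_\secparam$, fresh random strings $u, v$, and random oracles $\oracleone, \oracletwo$, then simulates $\As$ from \Cref{hyb:cp4} using $\ket{A_{s,s'}}$ as the coset state and $v$ as the tag. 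The reprogrammed oracles $\oracleone_y$ and $\oracletwo_{s,s'}$ can be simulated using the membership oracles together with the known $y$ and the description of $A$ passed to $\Bs', \Cs'$, exactly as in the proof of \Cref{lem:smallweight}.

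Next, $\Bs'$ and $\Cs'$ apply the asymmetric efficient threshold measurement from \Cref{thm:thres_approx_asymmetric} to their respective registers, with POVM elements $\Pi^B, \Pi^C$ that sample $x_E \from \distrmarg_y^E$, run the corresponding party with the reprogrammed oracles and input $x_E$, and check whether the output equals $f_y(x_E)$, using parameters $\gamma_B = \trivialprob_B + 3/(4p)$, $\gamma_C = \trivialprob_C + 3/(4p)$, $\epsilon = 1/(4p)$, and $\delta = 2^{-\secparam}$. By the contradiction hypothesis and bullet (1) of \Cref{thm:thres_approx_asymmetric}, both measurements return $1$ with probability $\Omega(1/q)$, and conditioned on this the leftover state is $4\delta$-close to a state supported on eigenvectors where $\bob$'s success exceeds $\trivialprob_B + 1/(2p)$ and $\charlie$'s exceeds $\trivialprob_C + 1/(2p)$. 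Finally, $\Bs'$ runs $\bob$ on its leftover state and outputs a random query made to $\oracletwo$; $\Cs'$ does likewise with $\charlie$.

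It remains to argue that each party's measured query yields $(s,s')$ with non-negligible probability, which is the main technical obstacle. The plan is to show, via \Cref{thm:bbbv}, that $\bob$ (respectively $\charlie$) must place query weight $\Omega(1/(p^2 T))$ on $(s,s')$ in order to beat the trivial bound $\trivialprob_B$: were it not, reprogramming $\oracletwo(s,s')$ to a fresh independent uniform value would leave $\bob$'s output distribution nearly unchanged, but in the reprogrammed experiment the tag $\theta$ given alongside the copy-protected state becomes statistically independent of $\oracletwo(s,s')$, so $\bob$'s view conveys no more information about $y$ than is already available through the test input $x_B$, capping its success at $\trivialprob_B + \negl(\secparam)$. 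Justifying this step carefully, in particular tracking the residual information about $y$ leaked through the coset state and the reprogrammed $\oracleone_y$, is the crux of the argument; once it is in place, the standard BBBV extraction guarantees that $\Bs'$ and $\Cs'$ jointly output $(s,s')$ with non-negligible probability, contradicting \Cref{thm: monogamy info} and completing the proof.
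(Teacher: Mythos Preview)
Your proposal is essentially the same as the paper's proof: both reduce to the MOE game of \Cref{thm: monogamy info} by having $\As'$ simulate $\alice$ with freshly sampled $y,v$ and random oracles, having $\Bs',\Cs'$ implement $\Pi^B,\Pi^C$ via the membership oracles and the description of $A$, applying the asymmetric threshold measurement $\ati$ with the parameters you state, and then extracting $(s,s')$ by measuring a random $\oracletwo$-query in a test execution.

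The one place where you stop short and the paper supplies detail is exactly the ``crux'' you flag. The paper resolves it by a three-game sequence: first replace $\bob$'s oracle $\oracletwo$ by the unreprogrammed $\oracletwo'$ (so the tag $v$ becomes an independent random string); then replace $\ket{A_{s,s'}}$ by the maximally mixed state, using that $(s,s')$ now appear nowhere else and that $\sum_{s,s'}\ketbra{A_{s,s'}}{A_{s,s'}}=I$; finally replace $(\oracleone'_y)^A$ by a fresh random oracle, using that $A$ is now independent and uniform (up to the negligible failure of \Cref{clm:linearindependence}). After these hops $y$ enters $\bob$'s view only through $x_B$, so his success is capped at $\trivialprob_B$, and \Cref{thm:bbbv} yields the non-negligible query weight on $(s,s')$ you need.
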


Using this lemma, we can bound the success probability of $\abc$ as \begin{align*}
    &\braket{ \phi_{\B\C} | (\Pi^B \otimes \Pi^C) | \phi_{\B\C}} = \sum_{i,j} \abs{\alpha_{i,j}}^2 \lambda_i \mu_j \\
    &\le \sum_{\substack{i: \; \lambda_i > \trivialprob_B + 1/p \\ j: \; \mu_j > \trivialprob_C + 1/p}} \abs{\alpha_{i,j}}^2 \lambda_i \mu_j \\ &+ \brackets{\trivialprob_B + \frac{1}{p}}\sum_{\substack{i: \; \lambda_i \le \trivialprob_B + 1/p \\ j: \; \mu_j > \trivialprob_C + 1/p}} \abs{\alpha_{i,j}}^2 + \brackets{\trivialprob_C + \frac{1}{p}} \sum_{i, j: \; \mu_j \le \trivialprob_C + 1/p} \abs{\alpha_{i,j}}^2 \\
    &\le \trivialprob + \frac{1}{p} + \negl(\secparam).
\end{align*}
Since $p(\cdot)$ was chosen as an arbitrary polynomial, this suffices for the proof. 

\begin{proof}[Proof of \Cref{lem:cpsmallweight}]%
Suppose for the sake of contradiction that the sum of weights is non-negligible with non-negligible probability over the randomness of $y,(s,s'),\allowbreak \oracleone_y,\allowbreak \oracletwo_{s,s'},\allowbreak \oracletwo(s,s')$. We will construct an adversary $\abcprime$ that breaks the MOE game (\Cref{def:regMOE}): \begin{itemize}
    \item $\abcprime$ get oracle access to $P_{A+s}, P_{A^\perp + s'}$ and $\aliceprime$ receives the state $\ket{A_{s,s'}}$ from the challenger. 
    \item $\aliceprime$ uniformly samples random oracles $\oracleone',\oracletwo'$, as well as random strings $y \from \distr_\secparam, v \in \bit^{4n+\secparam}$. It runs $\alice$ on input $\brackets{\oracleone', \oracletwo', \ket{A_{s,s'}}, v}$ to obtain the bipartite state $\rho_{\B\C}$. It sends the $\B$ register to $\bobprime$ and the $\C$ register to $\charlie$. It also sends $(\oracleone', \oracletwo', y,v)$ to both $\bobprime$ and $\charlieprime$. 
    \item In the second phase, $\bobprime$ and $\charlieprime$ learn the description of $A$. Define the binary POVM elements $\Pi^B, \Pi^C$ over registers $\B, \C$ as above. Note that $\Pi^B$ and $\Pi^C$ are mixtures of projections since one can sample from $\distr_\secparam$ using classical randomness, so that they satisfy the condition of \Cref{thm:thres_approx_asymmetric}. \\
    We observe that $\bobprime$ can efficiently implement $\Pi^B$ as follows: \begin{itemize}
        \item Sample $x_B \from \distrmarg_y^B$ and reprogram $\oracleone'$ on input $y$ to output $A$, obtaining $(\oracleone'_y)^A$.
        \item Sample uniformly $(s,s') \from \canonicalset(A) \times \canonicalset(A^\perp)$ and reprogram $\oracletwo'$ on input $(s,s')$ to output $v$ using the membership oracles $P_{A+s}, P_{A^\perp + s'}$, obtaining $(\oracletwo'_{s,s'})^v$.
        \item Run $\bob$ using the reprogrammed oracles $(\oracleone'_y)^A, (\oracletwo'_{s,s'})^v$ and test input $x_B$.
        \item Measure $\bob$'s output $z$. Undo all the computation.
        \item If $z = f_y(x_B)$, output 1; otherwise output 0.
    \end{itemize} Using this, $\bobprime$ applies the efficient approximated threshold measurement $\ati_{(P,Q),\gamma_1}^{\epsilon, \delta}$ in \Cref{thm:thres_approx_asymmetric}
    with $P = \Pi^B, Q = I - \Pi^B, \gamma_1 = \trivialprob_B + 3/4p, \epsilon = 1/4p$, and $\delta = 2^{-\secparam}$, with outcome $b_B$. \\
    \noindent If $b_B = 0$, $\bobprime$ aborts. If $b_B = 1$, then $\bobprime$ runs a \textit{test execution} on $\bob$, described as follows: $\bobprime$ runs the first three steps of $\Pi^B$ above on $\bob$, and measures a random query $(t_B, t_B')$ made by $\bob$ during the third step to the oracle $(\oracletwo_{s,s'}')^v$. Then, $\bobprime$ outputs $(t_B, t_B')$. We define $\charlieprime$ symmetrically, so that it will measure $b_C$, and if $b_C = 1$ output a query $(t_C, t_C')$ made by $\charlie$ in the test execution.
\end{itemize}

By \Cref{thm:thres_approx_asymmetric} bullet (1), $b_B = b_C = 1$ with non-negligible probability. We will finish the proof by showing that $\bobprime$ and $\charlieprime$ both output $(s,s')$ with non-negligible probability conditioned on $b_B = b_C = 1$. Note that we can intertwine the order of local operations between the two registers this way thanks to no-signalling.

If $b_B = b_C = 1$, then by \Cref{thm:thres_approx_asymmetric} bullet (2) the post-measurement state is negligibly close to a state of the form \begin{align*}
            \sum_{\substack{i:\; \lambda_i > 1/2 + 1/4p \\ j:\; \mu_j > 1/2 + 1/4p}} \beta_{i,j} \ket{\phi_i}_\B \otimes \ket{\psi_j}_\C. 
        \end{align*}

Therefore, in the \textit{test execution}, if $\bobprime$ had not measured $(t_B, t_B')$ in the third step, $\bob$ would correctly output $f_y(x_B)$ correctly with probability greater than $\trivialprob_B + 1/4p$. Consider a modified adversary $\widetilde{\bobprime}$ which is identical to $\bobprime$ except it uses the oracle $\oracletwo'$ (without reprogramming) when running $\bob$. We claim that if $\bobprime$ is replaced by $\widetilde{\bobprime}$, then $\bob$ would output $f_y(x)$ correctly with probability at most $\trivialprob_B$ at the end of the \textit{test execution}, had $\bobprime$ not measured a query $(t_B, t_B')$. This claim and \Cref{thm:bbbv} imply that $(t_B, t_B') = (s,s')$ with non-negligible probability. \\

\par To prove this claim, suppose the opposite. We will describe a sequence of games, starting with \textbf{Game 1}, between $(\aliceprime, \widetilde{\bobprime})$ acting as the challenger and $(\alice, \bob)$ acting as the adversary: \begin{itemize}
    \item $\alice$ gets oracle access to $G', H'$ and gets input $\ket{A_{s,s'}}, v$, all of which are as sampled above by $\aliceprime$ and the MOE challenger.
    \item $\alice$ sends a quantum state $\rho_\B$ to $\bob$
    \item $\ati_{(P,Q), \gamma_1}^{\epsilon, \delta}$ as defined above, is applied to $(\bob, \rho_\B)$ by $\widetilde{\bob'}$, which uses the oracle $\oracletwo'$ without reprogramming alongside $(\oracleone_y')^A$, obtaining $b_B$. If $b_B = 0$, the game is aborted.
    \item A \textit{test execution} by $\widetilde{\bobprime}$ is run on $\bob$ with its leftover state and $\bob$ outputs $z$.
    \item The adversary wins if the output is correct, i.e. $z = f_y(x)$.
\end{itemize}

Note that since $\oracletwo'$ is not reprogrammed, the value $v$ is a random string independent from the rest of the game. Now we modify the game by replacing $\ket{A_{s,s'}}$ in the first step with the maximally mixed state, resulting in \textbf{Game 2}. The success probability of the adversary is unaffected due to the fact that the random strings $(s,s')$ only occur in $\ket{A_{s,s'}}$ in \textbf{Game 1}, and \[ \sum_{\substack{s \in \canonicalset(A) \\ s' \in \canonicalset(A^\perp)}}\ketbra{A_{s,s'}}{A_{s,s'}} = I
\]
for any subspace $A$.
\par Next, we replace the first oracle $(\oracleone_y')^A$ with a random oracle, obtaining \textbf{Game 3}. The success probability of the adversary again is affected only negligibly since $A$ is a random subspace independent of the rest of \textbf{Game 2}, which is statistically close to a random value by \Cref{clm:linearindependence}. Now, $y$ is an independent value from all of \textbf{Game 3} except for the test input $x_B$, hence the adversary is restricted to making a trivial guess, so that it cannot succeed with probability greater than $\trivialprob_B$. \\
\par Similarly, we argue that conditioned on $(t_B, t_B') = (s,s')$, the probability that $(t_C, t_C')$ is non-negligible. This follows by a similar argument after observing that after $\bobprime$ measures a query, the post-measurement state is still negligibly close to a state of the form \begin{align*}
    \sum_{j: \; \mu_j > \trivialprob_C + 1/4p} \theta_j \ket{\sigma_j}_\B \ket{\psi_j}_\C,
\end{align*}
for some states $\ket{\sigma_j}$, so that $\charlie$ will output correctly with probability greater than $\trivialprob_C + 1/4p$ during the final execution made by $\charlieprime$.

\end{proof}

\end{proof}

\par \Cref{lem:cphyb12,lem:cphyb23,lem:cphyb34,lem:cphyb4} together with triangle inequality imply that $p_1 \le \trivialprob + \negl(\secparam)$ as desired, finishing the proof of \Cref{lem:cpsecurity}.

\end{proof}}

\begin{remark} \label{rem:qptvsquery}
In our security proof, the adversary can run in unbounded time as long as it is query-bounded.
\end{remark}

\subversion{
\begin{remark}
Replicating the analysis in the proof of \Cref{thm:moe_rom}, our proof can be extended to the case when $\distrclass(y)$ samples correlated test inputs, i.e. the case when either $x_B = x_C = y$ or $x_B,x_C$ are both random.
\end{remark}
}

\fullversion{
\par Following techniques from the proof of \Cref{thm:moe_rom}, we can show security for correlated input distributions as well. 

\begin{corollary}
\label{cor:cp_correlated}
Let $\weight \in [0,1]$ and let $\distrclass^\weight(y)$ be the following input distribution: \begin{itemize}
    \item Sample $x_B, x_C \from \bit^\secparam \setminus \{y\}$ independently and uniformly at random.
    \item With probability $\weight$, output $(x_B, x_C)$.
    \item With probability $1 - w$, output $(y, y)$.
\end{itemize}
Then, $(\cp, \eval)$ above is a $\distrclass^w$-secure copy-protection scheme for point functions with input length $\secparam$.

\end{corollary}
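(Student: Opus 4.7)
The plan is to first establish the case $\weight = 1/2$ by reusing the analysis of \Cref{thm:moe_rom}, and then reduce general $\weight$ to $\weight = 1/2$ via a short convex-combination rearrangement. The $\weight=1/2$ distribution $\distrclass^{1/2}$, which outputs either $(y,y)$ or a uniformly random pair with equal probability, is structurally identical to the identical-challenge-mode MOE game, which makes the proof of \Cref{lem:rprom} directly transferable.

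For $\weight = 1/2$, I will follow the hybrid structure in the proof of \Cref{lem:cpsecurity} up through the final hybrid where both random oracles are reprogrammed, and then define projectors $\Pi^B_0$ (``$\bob$ correctly outputs $1$ when the test input is $y$'') and $\Pi^B_1$ (``$\bob$ correctly outputs $0$ when the test input is uniformly random in $\bit^\secparam \setminus \{y\}$''), with $\Pi^C_0, \Pi^C_1$ analogously for $\charlie$. The success POVM under $\distrclass^{1/2}$ then becomes exactly $\tfrac{1}{2}(\Pi^B_0 \otimes \Pi^C_0 + \Pi^B_1 \otimes \Pi^C_1)$, matching the identical-challenge POVM in the proof of \Cref{lem:rprom}. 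From here I will transcribe that proof: decompose the shared state in the eigenbases of $(\Pi^B_0+\Pi^B_1)/2$ and $(\Pi^C_0+\Pi^C_1)/2$, prove an analogue of \Cref{lem:smallweight} bounding the weight on eigenvectors with both $|\lambda_i - 1/2|, |\mu_j - 1/2| > 1/p$, bound the diagonal terms by $1/2 + 1/p$, and apply \Cref{cor:orthogonal_eigen} to cancel the cross terms. This yields $\Pr_{\distrclass^{1/2}}[\text{win}] \leq 1/2 + \negl(\secparam)$.

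For general $\weight$, write
\begin{align*}
\Pr_{\distrclass^{\weight}}[\text{win}] = (1-\weight)\Pr_{(y,y)}[\text{win}] + \weight\Pr_{\text{rand}}[\text{win}],
\end{align*}
where $\Pr_{(y,y)}$ and $\Pr_{\text{rand}}$ denote the winning probabilities conditioned on the test input being $(y,y)$ and uniformly random in $(\bit^\secparam \setminus \{y\})^2$, respectively. The $\weight = 1/2$ bound gives $\Pr_{(y,y)} + \Pr_{\text{rand}} \leq 1 + \negl(\secparam)$; combining this with $\Pr_{(y,y)}, \Pr_{\text{rand}} \leq 1$ and splitting into the cases $\weight \geq 1/2$ and $\weight < 1/2$ (rewriting the convex combination as $(1-\weight)(\Pr_{(y,y)}+\Pr_{\text{rand}}) + (2\weight-1)\Pr_{\text{rand}}$ and symmetrically) yields $\Pr_{\distrclass^\weight}[\text{win}] \leq \max(\weight, 1-\weight) + \negl(\secparam)$. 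Since $\trivialprob(\distr_\secparam, \distrclass^\weight) \geq \max(\weight, 1-\weight)$ for every unlearnable $\distr_\secparam$ (attained by always outputting the more likely value of $f_y(x_E)$), this implies the desired security bound.

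The main technical obstacle is the small-weight lemma inside Step~1: proving it requires combining the symmetric efficient approximate threshold measurement of \Cref{thm:thres_approx} with the extract-a-random-query-to-$\oracletwo$ technique from the proof of \Cref{lem:cpsmallweight}, ultimately reducing to the monogamy property of \Cref{thm: monogamy info}; in particular, one must verify that on the post-ATI state $\bob$ and $\charlie$ both query the inner hash at $(s,s')$ with non-negligible weight whenever their eigenvalues are both bounded away from $1/2$. Everything else -- the diagonal bounds, the cross-term cancellation via \Cref{cor:orthogonal_eigen}, and the convex-combination rearrangement in Step~2 -- is essentially bookkeeping once the small-weight lemma is in hand.
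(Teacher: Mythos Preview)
Your approach is correct and takes a genuinely different route from the paper. The paper handles general $\weight$ directly: it works with the weighted operator $\weight\Pi^B_0 + (1-\weight)\Pi^B_1$ (and similarly on $\charlie$'s side), invokes the weighted form of Jordan's lemma (\Cref{lem:eigenvalues} and \Cref{cor:orthogonal_eigen} are stated for arbitrary $\weight$), and proves a small-weight lemma with threshold $|\weight - 1/2| + 1/p$ so that the cross-term cancellation and the diagonal bound $\leq \max(\weight,1-\weight)$ fall out in one shot. You instead prove only the balanced case $\weight = 1/2$ by transplanting the analysis of \Cref{lem:rprom}, and then observe that for any fixed adversary the $\weight=1/2$ bound reads $\Pr_{(y,y)} + \Pr_{\text{rand}} \leq 1 + \negl$, from which the general-$\weight$ bound follows by the elementary rearrangement you describe. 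Your route is more modular and sidesteps the weighted-Jordan machinery entirely; the paper's route is more uniform in that a single spectral argument covers all $\weight$ at once and produces the $\weight$-dependent small-weight statement as a byproduct.

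One small point to tighten: as you phrase it, $\Pi^B_1$ (``test input uniformly random in $\bit^\secparam \setminus \{y\}$'') is a mixture of projectors rather than a projector, so \Cref{cor:orthogonal_eigen} does not apply literally. The fix is the same as in the paper: fix the random inputs $x_B, x_C$ at the outset (so each $\Pi^B_b, \Pi^C_b$ is an honest projector), carry out the spectral analysis and the small-weight lemma for that fixed choice with overwhelming probability also over $x_B, x_C$, and average at the end.
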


\begin{proof}
Fix an unlearnable distribution $\distr_\secparam$ and define the following hybrids:

\begin{hybrid}{1} \label{hyb:cp_cor1}
This is the real piracy experiment for $(\cp, \eval)$.
\end{hybrid}

\begin{hybrid}{2}
\label{hyb:cp_cor2}
This hybrid matches \Cref{hyb:cp4} in the proof of \Cref{lem:cpsecurity}. In other words, we make the following changes: \begin{itemize}
    \item The oracles $\oracleone,\oracletwo$ for $\alice$ are replaced with reprogrammed oracles $\oracleone_y, \oracletwo_{s,s'}$, where $\oracleone_y(y)$ and $\oracletwo_{s,s'}(s,s')$ are reprogrammed to freshly random values.
    \item In addition, the oracle $\oracleone$ for $\bob$ and $\charlie$ both is changed to $\repryA$, where $\repryA(y)$ is reprogrammed to output a random (fixed) basis $(v_1, \dots, v_d)$ of $A$.
\end{itemize}
\end{hybrid}

Let $p_i$ be the probability that $\abc$ wins in Hybrid $i$. Note that $\trivialprob := \trivialprob(\distr_\secparam, \distrclass^w) = \max(w, 1 - \weight)$. By \Cref{lem:cphyb12,lem:cphyb23,lem:cphyb34}, we have $|p_1 - p_2| \le \negl(\secparam)$, since these lemmas are proved irrespective of the input distribution. Thus, it suffices to show that $p_2 \le \max(\weight, 1 - \weight)$.

\par Let $\rho_{\B\C}$ be the bipartite state created by $\alice$ in \Cref{hyb:cp_cor2}. Without loss of generality assume that $\rho_{\B\C} = \ketbra{\phi_{\B\C}}{\phi_{\B\C}}$ is a pure state. Fix $y \from \distr_\secparam$, fix $\oracleone_y, \oracletwo_{s,s'}, \oracletwo(s,s'), A$ which are randomly sampled, and fix random inputs $(x_B, x_C) \from \bit^\secparam \setminus \{y\} $. We define the following projectors: \begin{itemize}
    \item $\Pi^B_0$: runs $\bob$ on input oracles $\repryA, \oracletwo$ and test input $x_B$; it measures if the output is $f_y(x_B)$; then it undoes all the computation.
    \item $\Pi^B_1$: runs $\bob$ on input oracles $\repryA, \oracletwo$ and test input $y$; it measures if the output is $f_y(x_B)$; then it undoes all the computation.
    \item $\Pi^C_0$ and $\Pi^C_1$ are defined similarly for $\charlie$.
\end{itemize}

Now we write the state $\ket{\phi_{\B\C}}$ in its spectral decomposition with respect to $(\weight\Pi^B_0 + (1-\weight)\Pi^B_1) \otimes (\weight\Pi^C_0 + (1-\weight)\Pi^C_1)$ as \begin{align*}
    \ket{\phi_{\B\C}} = \sum_{i,j}\alpha_{i,j}\ket{\phi_i}_\B \ket{\psi_j}_\C,
\end{align*}
where $\ket{\phi_i}_\B$ is an eigenvector of $(\weight\Pi^B_0 + (1-\weight)\Pi^B_1)$ with eigenvalue $\lambda_i$ and $\ket{\psi_j}_\C$ is an eigenvector of $(\weight\Pi^C_0 + (1-\weight)\Pi^C_1)$ with eigenvalue $\mu_j$.

\end{proof}

\par We first make the following observation:

\begin{lemma} \label{lem:cp_cor_smallweight}
    Let $p(\cdot)$ be a polynomial. With overwhelming probability over $y,(s,s'),\allowbreak \oracleone_y,\allowbreak \allowbreak \oracletwo_{s,s'},\allowbreak \oracletwo(s,s'),A$, and $(x_B,x_C)$, we have \begin{align*}
        \sum_{\substack{i: \; |\lambda_i - 1/2| > |\weight - 1/2| + 1/p \\ j: \; |\mu_j - 1/2| > |\weight - 1/2| + 1/p}} \abs{\alpha_{i,j}}^2 \le \negl(\secparam).
    \end{align*}
\end{lemma}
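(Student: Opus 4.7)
The plan is to prove \Cref{lem:cp_cor_smallweight} by contradiction, following the template of \Cref{lem:smallweight} and its copy-protection adaptation \Cref{lem:cpsmallweight}, with the correlated test-input distribution handled by the symmetric thresholding in the statement. Without loss of generality assume $|w-1/2| + 3/4p < 1/2$, since otherwise the target eigenvalue set is empty and the sum is identically zero. Suppose for contradiction there exist an adversary $(\As, \Bs, \Cs)$ and polynomial $q(\cdot)$ such that the displayed weight $W$ exceeds $1/q(\secparam)$ with non-negligible probability, over the randomness of $y, (s,s'), \oracleone_y, \repryA, \oracletwo_{s,s'}, \oracletwo(s,s'), A, (x_B, x_C)$. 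Using this, I would build an adversary $(\As', \Bs', \Cs')$ that wins the regular MOE game of \Cref{def:regMOE} with non-negligible probability, contradicting \Cref{thm: monogamy info}.

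$\As'$ receives $\ket{A_{s,s'}}$ from the MOE challenger and oracle access to $P_{A+s}, P_{A^\perp+s'}$; it samples fresh random oracles $G', H'$, a target $y \from \distr_\secparam$, and a uniform $v \in \bit^{4n+\secparam}$, then runs $\As$ on $(G', H', \ket{A_{s,s'}}, v)$ to obtain $\rho_{\B\C}$, forwarding the $\B$ and $\C$ registers. Once $\Bs', \Cs'$ receive $A$, they can efficiently simulate the reprogrammed oracles $\repryA$ and $(H'_{s,s'})^v$ using $A, P_{A+s}, P_{A^\perp+s'}$ exactly as in the proof of \Cref{lem:cpsmallweight}; consequently the weighted POVMs $P_B := w \Pi^B_0 + (1-w) \Pi^B_1$ and $P_C := w \Pi^C_0 + (1-w) \Pi^C_1$ are implementable as mixtures of projections. $\Bs'$ and $\Cs'$ each invoke the efficient symmetric threshold measurement $\sati$ of \Cref{thm:thres_approx} with parameters $\gamma = |w-1/2| + 3/4p$, $\epsilon = 1/4p$, and $\delta = 2^{-\secparam}$ (valid under the assumption on $w$). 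By bullet (1) of \Cref{thm:thres_approx}, both outcomes are $1$ with probability at least $W - 2\delta$, which is non-negligible; by bullet (2), conditioned on this event the residual state is negligibly close to one supported on eigenvectors of $P_B \otimes P_C$ with $|\lambda_i - 1/2|, |\mu_j - 1/2| > |w-1/2| + 1/4p$.

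On this residual state, $\Bs'$ and $\Cs'$ each run a \emph{test execution} analogous to the one in the proof of \Cref{lem:cpsmallweight}: sample a test input from Bob's (resp.\ Charlie's) marginal of $\distrclass^w(y)$, run $\bob$ (resp.\ $\charlie$) with the reprogrammed oracles on that input, and output a uniformly random measured query to $H'$. The eigenvalue guarantee ensures that, absent the query measurement, $\bob$'s expected success on its test input would exceed the trivial ceiling $\max(w, 1-w) = 1/2 + |w-1/2|$ by at least $1/4p$: when $\lambda_i > 1/2 + |w-1/2| + 1/4p$ this holds directly, while when $\lambda_i < 1/2 - |w-1/2| - 1/4p$ it holds after flipping $\bob$'s final output bit, a classical post-processing that preserves the query pattern. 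A sub-hybrid chain mirroring the one in the proof of \Cref{lem:cpsmallweight}---replace the reprogrammed $H'$ by the raw $H'$, replace $\ket{A_{s,s'}}$ by the maximally mixed state over $\canonicalset(A) \times \canonicalset(A^\perp)$, and replace $\repryA$ by a fresh random oracle (the last valid up to negligible error by \Cref{clm:linearindependence})---shows that any $\bob$ placing only negligible query weight on $(s,s')$ is capped at the trivial probability. Hence $\bob$ must assign non-negligible query weight to $(s,s')$, and by \Cref{thm:bbbv} $\Bs'$ outputs $(s,s')$ with non-negligible probability; conditioning on this success and repeating the argument for $\charlie$, exactly as in \Cref{lem:cpsmallweight}, yields the contradiction.

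The main obstacle is the flipped-output branch above: I need to verify that the adversary obtained by appending a NOT gate to $\bob$'s output remains a legitimate adversary to which the trivial ceiling $\max(w, 1-w)$ applies, and that the resulting non-negligible query weight on $(s,s')$ transfers back to the unflipped $\bob$ (so that $\Bs'$ actually measures $(s,s')$). Both points follow because flipping the output is a terminal classical operation that commutes with all oracle queries, with the threshold measurement, and with the random-query measurement used in the extraction; this is exactly the place where the symmetric thresholding in \Cref{lem:cp_cor_smallweight} (in contrast to the one-sided threshold in \Cref{lem:cpsmallweight}) is genuinely used.
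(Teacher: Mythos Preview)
Your proposal is correct and follows essentially the same route as the paper's proof: reduce to the MOE game of \Cref{def:regMOE} by having $\As'$ simulate $\As$ with freshly sampled oracles and input $v$, have $\Bs',\Cs'$ apply the \emph{symmetric} threshold measurement $\sati$ to the weighted POVM $w\Pi^B_0 + (1-w)\Pi^B_1$ (resp.\ for $\Cs$), then extract $(s,s')$ from a random oracle query in a test execution, with the ``flip the output bit'' observation handling the $\lambda_i < 1/2 - |w-1/2| - 1/4p$ branch. Your choice of $\gamma = |w-1/2| + 3/4p$, $\epsilon = 1/4p$ is actually cleaner than the paper's stated $\gamma_1 = 3/4p$ (which appears to be a typo), and your explicit justification that the terminal NOT gate commutes with all queries and with the random-query measurement is exactly the point the paper leaves implicit; the one small thing to tighten is that $(x_B,x_C)$ should be sampled once by $\As'$ (as the paper does) rather than freshly in the test execution, so that the eigenbasis used by $\sati$ and by the test execution coincide.
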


\begin{proof}
Note that the condition $|\lambda_i - 1/2| > |\weight - 1/2| + 1/p$ is satisfied if and only if $\lambda_i > \trivialprob + 1/p$ or $1 - \lambda_i > \trivialprob + 1/p$.  The proof is nearly identical to the proof of \Cref{lem:cpsmallweight}. To avoid repetition, we only mention a few notable differences: \begin{itemize}
    \item After sampling $y \from \distr_y$, $\alice'$ additionally samples random inputs $x_B \ne y$ and $x_C \ne y$.
    \item Instead of $\ati$, $\bob'$ applies $\sati^{\epsilon,\delta}_{P,Q,\gamma_1}$, with $P = \weight\Pi^B_0 + (1 - \weight)\Pi^B_1$, $Q = I-P$, $\gamma_1 = 3/4p$, and $\epsilon = 1/2p$. Similarly for $\charlie'$.
    \item When implementing $\Pi^B_0$, $\bob'$ uses $x_B$ as input, and it uses $y$ when implementing $\Pi^B_1$. Similarly for $\charlie'$.
    \item In the end when we say that an adversary, with no knowledge of $y$ other than the test input given, can succeed with probability at most $\trivialprob$, we instead argue the success probability of such an adversary, denoted by $q$, must satisfy $\max(q, 1-q) \le \trivialprob$. This is because the adversary can always flip its output bit to succeed with probability $q$ instead of $1 - q$.
\end{itemize}
\end{proof}

\par By \Cref{lem:cp_cor_smallweight}, with overwhelming probability $\ket{\phi_{\B\C}}$ is negligibly close to the state $\ket{\phi_\bob'} + \ket{\phi_{\charlie}'}$, where \begin{align*}
    \ket{\phi_\bob'} &= \sum_{i\; :\; |\lambda_i - 1/2| \le |\weight - 1/2| + 1/p} \alpha_{i,j}\ket{\phi_i}_\B \ket{\psi_j}_{\C}, \\
    \ket{\phi_\charlie'} &= \sum_{\substack{i\; :\; |\lambda_i - 1/2| > |\weight - 1/2| + 1/p \\ j\; :\; |\mu_j - 1/2| \le |\weight - 1/2| + 1/p}} \alpha_{i,j}\ket{\phi_i}_\B \ket{\psi_j}_{\C}.
\end{align*} The rest of the proof will imitate the analysis in the proof of \Cref{lem:rprom}:
    \begin{align*}
        &\weight\left| (\Pi^B_0 \otimes \Pi^C_0) (\ket{\phi'_\Bs} + \ket {\phi'_\Cs})\right|^2 + (1-\weight)\left| (\Pi^B_1 \otimes \Pi^C_1) (\ket{\phi'_\Bs} + \ket {\phi'_\Cs)}\right|^2 \\
        =& \big( \weight\langle \phi'_{\Bs} | (\Pi^B_0 \otimes \Pi^C_0) | \phi'_{\Bs}  \rangle +  (1-\weight)\langle \phi'_{\Bs} | (\Pi^B_1 \otimes \Pi^C_1) | \phi'_{\Bs}  \rangle  + \weight\langle \phi'_{\Cs} | (\Pi^B_0 \otimes \Pi^C_0) | \phi'_{\Cs}  \rangle \\ +& (1-\weight) \langle \phi'_{\Cs} | (\Pi^B_1 \otimes \Pi^C_1) | \phi'_{\Cs}  \rangle \big)
        + 2\mathsf{Re}\left(  \weight\langle \phi'_{\Bs} | (\Pi^B_0 \otimes \Pi^C_0) | \phi'_{\Cs} \rangle + (1-\weight)\langle \phi'_{\Bs} | (\Pi^B_1 \otimes \Pi^C_1) | \phi'_{\Cs}  \rangle \right) \\
        \leq & \big( \weight\langle \phi'_{\Bs} | (\Pi^B_0 \otimes I) | \phi'_{\Bs}  \rangle + (1-\weight) \langle \phi'_{\Bs} | (\Pi^B_1 \otimes I) | \phi'_{\Bs}  \rangle  + \weight\langle \phi'_{\Cs} | (I \otimes \Pi^C_0) | \phi'_{\Cs}  \rangle \\ +& (1-\weight) \langle \phi'_{\Cs} | (I \otimes \Pi^C_1) | \phi'_{\Cs}  \rangle \big)
        + 2\mathsf{Re}\left(  \weight\langle \phi'_{\Bs} | (\Pi^B_0 \otimes \Pi^C_0) | \phi'_{\Cs} \rangle + (1-\weight)\langle \phi'_{\Bs} | (\Pi^B_1 \otimes \Pi^C_1) | \phi'_{\Cs}  \rangle \right).
    \end{align*}
    We bound each term separately. 
    \begin{itemize}
        \item $\left(\weight \langle \phi'_{\Bs} | (\Pi^B_0 \otimes I) | \phi'_{\Bs}  \rangle + (1-\weight) \langle \phi'_{\Bs} | (\Pi^B_1 \otimes I) | \phi'_{\Bs}  \rangle \right)$. It is equal to $\langle \phi'_{\Bs} | (\weight\Pi^B_0 + (1-\weight)\Pi^B_1) \otimes I | \phi'_{\Bs} \rangle$; by the definition of $\ket{\phi'_\Bs}$, it will be at most $$(1/2 + |\weight - 1/2| + 1/p) |\ket{\phi'_\Bs}|^2 = \max(\weight,1-\weight)|\ket{\phi'_\Bs}|^2.$$ 
        
        \item $\left(\weight \langle \phi'_{\Cs} | (\Pi^C_0 \otimes I) | \phi'_{\Cs}  \rangle + (1-\weight) \langle \phi'_{\Cs} | (\Pi^C_1 \otimes I) | \phi'_{\Cs}  \rangle \right)$. Similar to the above case, it is at most \allowbreak $\max(\weight,1-\weight)|\ket{\phi'_\Bs}|^2$.
        
        \item $\mathsf{Re}\left(  \langle \phi'_{\Bs} | (\Pi^B_0 \otimes \Pi^C_0) | \phi'_{\Cs} \rangle \right)$.  By \Cref{cor:orthogonal_eigen}, this term will vanish: 
        \begin{align*}
            \langle \phi'_{\Bs} | (\Pi^B_0 \otimes \Pi^C_0) | \phi'_{\Cs} \rangle &= \sum_{i\; :\; |\lambda_i - 1/2| \le |\weight - 1/2| + 1/p}  \sum_{\substack{i'\; :\; |\lambda_i - 1/2| > |\weight - 1/2| + 1/p \\ j'\; :\; |\mu_j - 1/2| \le |\weight - 1/2| + 1/p}}  \alpha^\dagger_{i, j} \alpha_{i', j'} \langle \phi_i | \Pi^B_0 | \phi_{i'} \rangle \langle \psi_j | \Pi^C_0 | \psi_{j'} \rangle; 
        \end{align*}
        since every possible $i, i'$ satisfy $\lambda_i + \lambda_{i'} \ne 1$, we have $\langle \phi_i | \Pi^B_0 | \phi_{i'} \rangle = 0$.
        
        \item $\mathsf{Re}\left(  \langle \phi'_{\Bs} | (\Pi^B_1 \otimes \Pi^C_1) | \phi'_{\Cs} \rangle \right)$.  Similarly, this term vanishes as well. 
    \end{itemize}
    
    Therefore, the total probability is at most $$\weight\left| (\Pi^B_0 \otimes \Pi^C_0) (\ket{\phi'_\Bs} + \ket {\phi'_\Cs})\right|^2 + (1-\weight)\left| (\Pi^B_1 \otimes \Pi^C_1) (\ket{\phi'_\Bs} + \ket {\phi'_\Cs)}\right|^2 + \negl(n) \leq \max(\weight,1-\weight) + \frac{1}{p} + \negl(n).$$
    
    \medskip
    
    Since the polynomial $p(\cdot)$ is arbitrary, this suffices for the proof.
} 

\printbibliography

\subversion{
\appendix
\newpage
\section*{\huge{Supplementary Material}} %
\section{Testing Quantum Programs}

\section{Missing Proofs} \label{sec:proofs}
\subsection{Correctness of Copy-Protection} \label{sec:cp_correctness}

\subsection{Security of Copy-Protection} \label{sec:cp_security}

}

\end{document}